\theoremstyle{definition}
\newtheorem{definition}{Definition}
\theoremstyle{definition}
\newtheorem{proposition}{Proposition}
\theoremstyle{definition}
\newtheorem{lemma}{Lemma}
\theoremstyle{definition}
\newtheorem{theorem}{Theorem}
\theoremstyle{definition}
\newtheorem{corollary}{Corollary}
\theoremstyle{remark}
\newtheorem{remark}{Remark}
\newcommand{\aff}{{\operatorname{aff}}}
\newcommand{\set}{\mathcal}
\newcommand{\probP}{{\mathbb P}}
\newcommand{\projP}{{\mathcal P}}
\newcommand{\affX}{\aff_{\set X}}
\newcommand{\affY}{\aff_{\set Y}}
\newcommand{\affYS}{\aff_{\set Y}^2}
\newcommand{\mphi}{\bm\Phi}
\newcommand{\ptwo}{\projP_{\set Y_2}}
\newcommand{\ptwop}{\projP_{\set Y_2^\perp}}
\newcommand{\pz}[1]{\projP_{\set Z_{#1}}}
\newcommand{\vect}[1]{{\mathbf{#1}}}
\newcommand{\x}{{\vect x}}
\newcommand{\y}{{\vect y}}
\newcommand{\q}{{\vect q}}
\renewcommand{\u}{{\vect u}}
\renewcommand{\v}{{\vect v}}
\newcommand{\A}{{\vect A}}
\newcommand{\D}{{\vect D}}
\newcommand{\complexityO}{{\mathcal{O}}}
\newcommand{\Q}{{\vect Q}}
\newcommand{\U}{{\vect U}}
\newcommand{\V}{{\vect V}}
\newcommand{\Z}{{\vect Z}}
\begin{document}
\title{Unraveling the Veil of Subspace RIP Through Near-Isometry on Subspaces}

\author{Xingyu~Xu, Gen~Li, and~Yuantao~Gu%
\thanks{ 
The authors are with Department of Electronic Engineering, Tsinghua University, Beijing 100084, China. 
The corresponding author of this paper is Y. Gu (gyt@tsinghua.edu.cn).}
}
\date{Manuscript received April 24, 2019, revised September 22, 2019}

\maketitle

\begin{abstract}
Dimensionality reduction is a popular approach 
to tackle high-dimensional data with low-dimensional nature. 
Subspace Restricted Isometry Property, a newly-proposed concept,
has proved to be a useful tool in analyzing the effect 
of dimensionality reduction algorithms on subspaces. 
In this paper, 
we provide a characterization of subspace Restricted Isometry Property, 
asserting that matrices which act as a near-isometry on low-dimensional subspaces 
possess subspace Restricted Isometry Property. 
This points out a unified approach to discuss subspace Restricted Isometry Property. 
Its power is further demonstrated by the possibility to prove with this result 
the subspace RIP for a large variety of random matrices encountered in theory and practice, 
including subgaussian matrices, partial Fourier matrices, 
partial Hadamard matrices, partial circulant/Toeplitz matrices, 
matrices with independent strongly regular rows 
(for instance, matrices with independent entries having uniformly bounded $4+\epsilon$ moments), 
and log-concave ensembles. 
Thus our result could extend the applicability of random projections 
in subspace-based machine learning algorithms including subspace clustering
and allow for the application of some useful random matrices
which are easier to implement on hardware or are more efficient to compute. 

{\bf Keywords:} 
Restricted Isometry Property, 
dimensionality reduction,
random matrix,
heavy-tailed distribution,
subspace clustering
\end{abstract}

\section{Introduction} 
In this paper we investigate the subspace Restricted Isometry Property (RIP) of random projections 
and try to capture the root of subspace RIP.
In more intuitive language, given two linear subspaces in an ambient space, 
we ask that for which type of random projections, 
the ``distance'' of these two subspaces, when defined properly, 
is almost invariant after being projected. 
The precise meaning of these terms will be presented later in this section.
Before that, we ground our results with some preliminaries.
\subsection{Background}
High-dimensional signals can be computationally expensive, or even intractable to analyze. 
Fortunately, 
many real-world high-dimensional signals are of low-dimensional nature. 
In this vein, numerous low dimensional models have been proposed 
and have remarkably fascinated researches in signal processing \cite{Bruckstein2009Sparse,Baraniuk2009Random,Elad2010Sparse}.
Union of Subspaces (UoS) is a powerful low dimensional model which subsumes 
many classical models including sparse representation
and has been used extensively in the recent decade \cite{Eldar2009Robust}.  
Briefly speaking, 
UoS model assumes that in a dataset with high ambient dimension, 
the data points actually lie on a few low dimensional linear subspaces, 
and these subspaces characterize the intrinsic structure of the dataset. 

Subspace clustering \cite{Soltanolkotabi2012Geometric,Elhamifar2013Sparse,Soltanolkotabi2014Robust,Heckel2015Rsobust} 
is one of the various successful applications of the UoS model
that has achieved impressive performance in tasks such as motion segmentation,
face clustering, and anomaly detection. 
Moreover, the performance of subspace clustering 
is theoretically guaranteed under fairly general conditions, 
a fact proved in \cite{Soltanolkotabi2012Geometric} 
based on the concept of affinity, c.f. Definition \ref{def:affinity}. 
However, for traditional subspace clustering algorithms
there is a high computational cost in building the so-called similarity representation 
when the dataset is of high dimension. 
This defect can be overcome by random compression, 
as was done in Compressed Subspace Clustering (CSC) \cite{Mao2014Compressed, Meng2018CSC}. 
While random compression can significantly reduce the computational burden, 
it raised a new concern that the affinity between two subspaces 
may not be preserved after random compression, 
hence it is not clear whether there is a theoretical guarantee for CSC. 

Part of the above concern was addressed in \cite{Heckel2014Subspace, Heckel2017Dimensionality, Wang2019Theoretical}, 
which provided theoretical analyses 
for several popular CSC algorithms. 
However, these analyses are done per algorithm 
and do not focus on the concept of affinity. 
A theorem on ``invariance property'' of affinity under random projections 
would constitute a more universal framework to analyze the performance of CSC. 
Such a theorem was given in \cite{Li2018Restricted, Li2019Rigorous}, 
which basically states that the change of affinity between two subspaces 
is small with high probability under Gaussian random projections. 
Since affinity is closely related to the notion of projection Frobenius-norm distance between subspaces, 
this implies that the projection Frobenius-norm distance between subspaces is 
approximately preserved by a Gaussian random projection, 
a property named by \emph{subspace Restricted Isometry Property} (subspace RIP) 
resembling the classical RIP for sparse vectors.

This paper is devoted to a thorough investigation of subspace RIP. 
Our first aim is to answer the question: 
what should be the proper abstract setting to study subspace RIP, 
or more precisely,
what is the essential property of a matrix which would lead to subspace RIP? 
We will prove that such essential property is that 
the matrix acts as a near-isometry on any low-dimensional subspace. 
This is not obvious and requires involved analysis.
In fact, a naive argument using near-isometry will lose a factor 
of the dimension of the subspaces, hence will be far from optimal.
This fundamental result will be used to prove the subspace RIP for 
a large variety of random matrices, 
including subgaussian matrices and other random matrices with exponential Johnson-Lindenstrauss property, 
partial Fourier/Hadamard matrices and other randomly sampled Bounded Orthonormal Systems (BOS) \cite{Foucart2017Mathematical}, 
partial circulant/Toeplitz matrices \cite{Rauhut2012Restricted}, 
and also some typical heavy-tailed matrices, 
e.g. those with independent strongly regular rows \cite{Srivastava2013Covariance} 
or log-concave ensembles \cite{Adamczak2010Quantitative, Adamczak2011Sharp}. 
These results provide a universal framework to analyze the subspace RIP of random matrices 
and their effects on subspace related tasks, 
which requires rather weak assumptions on the random matrix 
but yields universal performance guarantee that are not constrained to specific algorithms.

{\color{black}
\subsection{Our Contribution}
In this paper we proved that the essential property of a matrix that leads to subspace RIP 
is that the matrix acts as a near-isometry on any low-dimensional subspace. 
This accounts for the root of subspace RIP and 
provides the proper abstract setting, or a unified approach, to discuss subspace RIP. 
Both the statement and the proof of this result are deterministic, 
thus apparently differ from previous work on subspace RIP \cite{Li2018Restricted, Li2019Rigorous}
which relied heavily on delicate probabilistic analysis of Gaussian matrices 
and cannot be decoupled into deterministic and probabilistic parts in an obvious way; 
it is even not clear how the proof there generalizes to subgaussian matrices. 
More discussions on this difference are carried out after sufficient technical preparation, 
in Section \ref{sec:discussions}.

With this result, 
we are able to provide an easy proof of subspace RIP 
for random matrices with exponential Johnson-Lindenstrauss property, e.g. subgaussian matrices, 
which generalizes the conclusion of \cite{Li2018Restricted, Li2019Rigorous}. 
Moreover, we will also prove that randomly sampled BOS, e.g. partial Fourier/Hadamard matrices, 
and partial circulant/Toeplitz matrices possess subspace RIP. 
These are matrices with fast matrix-vector multiplication algorithms that permit a wide application 
and could significantly accelerate computation in practice, 
and our results validate their use in subspace related tasks. 
Note that in \cite{Heckel2017Dimensionality} 
it was claimed that randomly sampled BOS could be used for random compression in subspace clustering 
meanwhile keeping the clustering performance,
but the proof was based on the assertion that randomly sampled BOS 
satisfies exponential Johnson-Lindenstrauss property, 
which was not provided with a legitimate proof there. 
The proof strategy in \cite{Heckel2017Dimensionality} appears only feasible 
to show that randomly sampled BOS satisfies exponential Johnson-Lindenstrauss property 
with unreasonably small constants, 
which is not helpful in practice. 
As such, our results constitute a more effective guarantee for 
performance of partial Fourier/Hadamard matrices and partial circulant/Toeplitz matrices 
on subspace clustering.

Recently, there are rising interests on heavy-tailed random matrices. 
We will deal with two typical types of such random matrices, 
namely the ones with finite $4+\epsilon$ moments and log-concave ensembles, 
and show that how a combination of our characterization of subspace RIP 
and well-known results in covariance estimation implies easily 
the subspace RIP of these random matrices. 

From a practical point of view, 
our result holds for much more general random matrices 
compared with the subspace RIP for Gaussian random matrices in \cite{Li2019Rigorous},
hence allows the application of random matrices that are more useful in practice, 
for instance those matrices which are easier to generate and store on hardware, 
e.g. Bernoulli matrices, 
or those who arise in the physical world naturally and are more efficient to compute, 
e.g. partial Fourier/Hadamard matrices and partial circulant/Toeplitz matrices. 
Most of these matrices are inaccessible within the proof strategy in 
previous works on subspace RIP \cite{Li2018Restricted, Li2019Rigorous}.
As pointed out in \cite{Hinojosa2018Coded}, 
in applications such as compressive spectral imaging, 
typical projection matrices are not Gaussian. 
Instead, Bernoulli matrices can be used \cite{Martin2016Hyperspectral}. 
Our result demonstrate more practical scenarios where techniques of random projections
and in particular, CSC algorithms may apply. 
}

\subsection{Notations and Conventions}
Throughout this paper, $c$ and $C$ denote two positive universal constants 
that may vary upon each appearance, 
while $\tilde c$ is the constant appearing in the definition 
of exponential Johnson-Lindenstrauss property, c.f. Definition \ref{def:jl-property}.
Bold upper case letters, e.g. $\A$, are used to denote a matrix, 
while bold lower case letters, e.g. $\u$, are used to denote a vector. 
$\mphi$ will always be a random matrix. 
If $\set X$ is a linear subspace of $\mathbb R^N$, 
$\set X^\perp$ denotes its orthogonal complement. 
Orthogonal projections onto subspace $\set X$ will be denoted by $\mathcal P_{\set X}$.
The maximal and minimal singular value of a matrix $\A$ will be denoted by 
$s_{\max}(\A)$ and $s_{\min}(\A)$. 
$\|\v\|$ is the Euclidean norm of the vector $\v$, 
and $\|\A\|_{\rm F}$ is the Frobenius norm of the matrix $\A$. 
The $(n-1)$-dimensional unit sphere in $\mathbb R^n$ is denoted by $\mathbb S^{n-1}$, 
i.e. $\mathbb S^{n-1}=\{\x\in\mathbb R^n:\|\x\|=1\}$.
The affinity between subspaces $\set X_1$, $\set X_2$, 
defined in Definition \ref{def:affinity}, 
will be denoted by $\aff(\set X_1,\set X_2)$.
Occasionally we will write $\affX$ (resp. $\affY$) as an abbreviation of 
$\aff(\set X_1, \set X_2)$ (resp. $\aff(\set Y_1,\set Y_2)$). 
The probability of an event is denoted by $\probP(\cdot)$.
The expectation of a random variable/vector/matrix is denoted by $\mathbb E(\cdot)$. 

We will be a bit blurry when using ``infinitesimal'' $\varepsilon$. 
That is, 
we will implicitly shrink the value of $\varepsilon$ by a constant ratio when needed. 
For example, we will assert $\probP(X>\varepsilon)<{\rm e}^{-c\varepsilon^2n}$ 
while we actually proved $\probP(X>2\varepsilon)<{\rm e}^{-c\varepsilon^2n}$. 
Such gaps are usually easy to fill and harmless to skip.
In fact, 
the former statement can be easily derived 
from the latter by replacing $\varepsilon$ with $\varepsilon/2$ 
and replacing $c$ with $4c$.

\subsection{Organization}
{\color{black}
The rest of this paper is organized as follows. 
In Section \ref{sec:preliminaries}, 
definitions and basic properties of affinity are provided. 
In Section \ref{sec:main-results}, 
we state our main theorem that a matrix acting as a near-isometry on a subspace $\set X$
preserves the affinity and projection Frobenius-norm distance between any pair of subspaces in $\set X$. 
Using this theorem, we analyze several important classes of random matrices in Section \ref{sec:examples} 
and prove their subspace RIP.
Section \ref{sec:proof} is devoted to the proof 
of the main theorem. 
Section \ref{sec:discussions} provides some further 
comments on proof strategies 
and comparison with related works. 
Section \ref{sec:applications} briefly introduces some examples among the various potential applications 
of our theory.
Section \ref{sec:simulations} verifies our results on a real-world dataset. 
Finally in Section \ref{sec:conclusion} we conclude the paper.
}

\section{Preliminaries}\label{sec:preliminaries}

A key ingredient in the statement of our results 
is the \emph{affinity} between two subspaces, 
defined as following 
\cite{Soltanolkotabi2012Geometric,Heckel2017Dimensionality, Li2019Rigorous}:
\begin{definition}\label{def:affinity}
Let $\set X_1$, $\set X_2$ be subspaces of dimension $d_1$, $d_2$ in $\mathbb R^n$. 
Denote by $\projP_{\set X_1}$, $\projP_{\set X_2}$ the matrix of orthogonal projection 
onto $\set X_1$ and $\set X_2$. 
The affinity between $\set X_1$ and $\set X_2$ is
\begin{equation*}
    \aff(\set X_1,\set X_2)=\sqrt{\operatorname{tr}(\projP_{\set X_1}\projP_{\set X_2})}.
\end{equation*}
\end{definition}

There are several alternative ways to compute the affinity 
which will be used interchangeably. 
They are summarized in the following lemma. 
\begin{lemma}\label{lem:affinity-compute}
Let $\set X_1$, $\set X_2$ be subspaces of dimension $d_1$, $d_2$ in $\mathbb R^n$. 
Denote by $\projP_{\set X_1}$, $\projP_{\set X_2}$ the orthogonal projection 
onto $\set X_1$ and $\set X_2$.
\begin{enumerate}[label=\roman*)]
\item If $\U_1$, $\U_2$ are orthonormal bases of $\set X_1$, $\set X_2$, 
then 
\begin{equation*}
  \aff(\set X_1,\set X_2)=\|\U_1^{\rm T}\U_2\|_{\rm F},
\end{equation*}
where $\|\cdot\|_{\rm F}$ is the Frobenius norm. 
\item If $\U_2$ are orthonormal bases of $\set X_2$, 
then 
\begin{equation*}
  \aff(\set X_1,\set X_2)=\|\projP_{\set X_1}\U_2\|_{\rm F}.
\end{equation*}
\item There exists orthonormal bases $\U_1$, $\U_2$ of $\set X_1$, $\set X_2$ 
and nonnegative real numbers $\lambda_1\ge\lambda_2\ge\ldots\ge\lambda_{\min(d_1,d_2)}$, 
such that
\begin{equation*}
  \langle\u_{1,i},\u_{2,j}\rangle=\begin{cases}
    \lambda_i,\quad &i=j;\\
    0,\quad &i\ne j,
  \end{cases}
\end{equation*}
where $\u_{1,i}$,$\u_{2,j}$ denotes the $i$-th column of $\U_1$ 
and the $j$-th column of $\U_2$ respectively. 
Such $\U_1$, $\U_2$ are called \emph{principal orthonormal bases} 
of $\set X_1$, $\set X_2$.
Furthermore, 
\begin{equation*}
    \aff^2(\set X_1,\set X_2)=\sum_{i=1}^{\min(d_1,d_2)}\lambda_i^2.
\end{equation*}
\end{enumerate}
\end{lemma}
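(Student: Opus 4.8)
The plan is to reduce all three claims to the single matrix identity $\projP_{\set X} = \U\U^{\rm T}$ relating the orthogonal projection onto a subspace $\set X$ to any orthonormal basis $\U$ of it (columns orthonormal), together with the cyclic invariance of the trace. I would first record that $\U\U^{\rm T}$ is symmetric and idempotent and fixes exactly $\set X$, so it is indeed the orthogonal projection $\projP_{\set X}$.

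For part i), substituting $\projP_{\set X_1} = \U_1\U_1^{\rm T}$ and $\projP_{\set X_2} = \U_2\U_2^{\rm T}$ into Definition \ref{def:affinity} and moving factors around under the trace gives
\begin{equation*}
\operatorname{tr}(\projP_{\set X_1}\projP_{\set X_2}) = \operatorname{tr}(\U_1\U_1^{\rm T}\U_2\U_2^{\rm T}) = \operatorname{tr}\big((\U_1^{\rm T}\U_2)^{\rm T}(\U_1^{\rm T}\U_2)\big) = \|\U_1^{\rm T}\U_2\|_{\rm F}^2,
\end{equation*}
and taking square roots yields the claim. For part ii), I would invoke the fact that left-multiplication by a matrix with orthonormal columns preserves the Frobenius norm, so that $\|\projP_{\set X_1}\U_2\|_{\rm F} = \|\U_1\U_1^{\rm T}\U_2\|_{\rm F} = \|\U_1^{\rm T}\U_2\|_{\rm F}$, which equals $\aff(\set X_1,\set X_2)$ by part i); equivalently one expands $\|\projP_{\set X_1}\U_2\|_{\rm F}^2 = \operatorname{tr}(\U_2^{\rm T}\projP_{\set X_1}\U_2)$ using $\projP_{\set X_1}^{\rm T}\projP_{\set X_1} = \projP_{\set X_1}$ and recognizes the right-hand side as $\operatorname{tr}(\projP_{\set X_1}\projP_{\set X_2})$.

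Part iii) is the only part with real content, and the tool is the singular value decomposition. Starting from arbitrary orthonormal bases $\tilde\U_1,\tilde\U_2$, I would form the $d_1\times d_2$ cross-Gram matrix $\tilde\U_1^{\rm T}\tilde\U_2$ and write its SVD as $\V_1\Sigma\V_2^{\rm T}$, with $\V_1,\V_2$ orthogonal and $\Sigma$ the rectangular diagonal matrix carrying the singular values $\lambda_1\ge\cdots\ge\lambda_{\min(d_1,d_2)}\ge 0$. Rotating the bases by $\U_1 = \tilde\U_1\V_1$ and $\U_2 = \tilde\U_2\V_2$ produces new orthonormal bases of the \emph{same} subspaces, and $\U_1^{\rm T}\U_2 = \V_1^{\rm T}(\tilde\U_1^{\rm T}\tilde\U_2)\V_2 = \Sigma$ is diagonal, which is exactly the stated inner-product pattern $\langle\u_{1,i},\u_{2,j}\rangle = \lambda_i\delta_{ij}$. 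The final formula $\aff^2(\set X_1,\set X_2) = \|\Sigma\|_{\rm F}^2 = \sum_i\lambda_i^2$ then drops out of part i).

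I do not expect a genuine obstacle, since everything is standard linear algebra; the points that require care are all in part iii). I must check that rotation by the orthogonal right factors both preserves orthonormality of the columns and leaves the column span unchanged, so that $\U_1,\U_2$ are still bases of $\set X_1,\set X_2$, and I must handle the rectangular case $d_1\ne d_2$ correctly, verifying that the off-diagonal and the surplus (beyond index $\min(d_1,d_2)$) inner products all vanish as encoded by the rectangular $\Sigma$.
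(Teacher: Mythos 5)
Your proposal is correct; the paper states this lemma without proof (treating it as standard material on principal angles from the cited literature), and your argument via $\projP_{\set X}=\U\U^{\rm T}$, cyclicity of the trace, and the SVD of the cross-Gram matrix $\tilde\U_1^{\rm T}\tilde\U_2$ is exactly the standard derivation one would supply. All three parts check out, including the care taken in part iii) that right-multiplication by the orthogonal SVD factors preserves both orthonormality and column span and that the rectangular $\Sigma$ encodes the claimed inner-product pattern.
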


As its name suggests, affinity measures how close two subspaces are to each other. 
A relevant notion is the \emph{projection Frobenius-norm distance} of two subspaces \cite{Li2019Rigorous}. 
\begin{definition}\label{def:frob-dist}
The projection Frobenius-norm distance of two subspaces $\set X_1$,$\set X_2$ is defined as 
\[
    D(\set X_1,\set X_2)=\frac{1}{\sqrt2}\|\projP_{\set X_1}-\projP_{\set X_2}\|_{\rm F},
\]
where $\projP_{\set X_i}$ is the matrix of orthogonal projection onto $\set X_i$, $i=1,2$.
\end{definition}

Affinity and projection Frobenius-norm distance are related by 
\begin{equation}\label{eqn:affinity-and-frobenius-dist}
    D^2(\set X_1,\set X_2)=\frac{d_1+d_2}2-\aff^2(\set X_1,\set X_2).
\end{equation}

Intuitively, this means that the closer (in affinity) two subspaces are to each other, 
the less distant (in projection Frobenius-norm) they are to each other, which sounds tautological. 
Our main results will be stated based on affinity, 
but they can be easily translated to statements on projection Frobenius-norm distance
by (\ref{eqn:affinity-and-frobenius-dist}). 

We are now in the position to state the main result of \cite{Li2019Rigorous}.
\begin{theorem}\label{thm:li}
Assume $\mphi$ is an $n\times N$ Gaussian matrix with i.i.d. entries sampled from $\mathcal N(0,1/n)$. 
For any two subspaces $\set X_1$, $\set X_2$ of dimension $d_1$, $d_2$ in $\mathbb R^N$, 
assuming $d_1\le d_2$, 
denote by $\set Y_1$, $\set Y_2$ the image of $\set X_1$, $\set X_2$ under $\mphi$. 
Then for any $0<\varepsilon<1/2$ there exists 
positive constants $c_1(\varepsilon)$, $c_2(\varepsilon)$,  
such that for $n>c_1(\varepsilon)d_2$, 
the following is true with probability exceeding $1-{\rm e}^{c_2(\varepsilon)n}$.
\begin{equation}\label{eqn:li-affinity}
    \left|\aff^2(\set Y_1,\set Y_2)-\aff^2(\set X_1,\set X_2)\right|\le\left(d_1-\aff^2(\set X_1,\set X_2)\right)\varepsilon.
\end{equation}
\end{theorem}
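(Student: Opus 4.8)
The plan is to reduce everything to the principal-angle geometry of $\set X_1,\set X_2$ and then run a perturbative expansion in which a structural cancellation replaces the dimension $d_1$ by the ``gap'' $d_1-\aff^2(\set X_1,\set X_2)$. First I would invoke Lemma~\ref{lem:affinity-compute}(iii) to fix principal orthonormal bases $\U_1=[\u_{1,1},\ldots,\u_{1,d_1}]$, $\U_2=[\u_{2,1},\ldots,\u_{2,d_2}]$ with $\langle\u_{1,i},\u_{2,j}\rangle=\lambda_i\delta_{ij}$, together with the principal-angle decomposition $\u_{2,i}=\lambda_i\u_{1,i}+\mu_i\vect{w}_i$, where $\mu_i=\sqrt{1-\lambda_i^2}$ and $\vect{w}_i$ is a unit vector orthogonal to $\set X_1$. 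The purpose of this step is that $d_1-\aff^2(\set X_1,\set X_2)=\sum_i\mu_i^2$, so the target bound is literally $\varepsilon\sum_i\mu_i^2$; moreover the error $\aff^2(\set Y_1,\set Y_2)-\aff^2(\set X_1,\set X_2)$ is \emph{deterministically} zero when all $\mu_i=0$ (the case $\set X_1\subseteq\set X_2$, where $\set Y_1\subseteq\set Y_2$ forces exact preservation). This strongly suggests that every error term ought to carry a factor $\mu_i^2$, and the whole point of the proof is to expose this.

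Next, writing $\A_j=\mphi\U_j$ and using $\projP_{\set Y_j}=\A_j(\A_j^{\rm T}\A_j)^{-1}\A_j^{\rm T}$ (valid since $\mphi$ is injective on these low-dimensional subspaces with high probability), Definition~\ref{def:affinity} gives $\aff^2(\set Y_1,\set Y_2)=\operatorname{tr}\big((\A_1^{\rm T}\A_1)^{-1}(\A_1^{\rm T}\A_2)(\A_2^{\rm T}\A_2)^{-1}(\A_2^{\rm T}\A_1)\big)$. Because $\U_j$ is orthonormal and the Gaussian law is rotation invariant, $\A_j$ is again an $n\times d_j$ matrix with i.i.d.\ $\mathcal N(0,1/n)$ entries, so $\mathbb E[\A_1^{\rm T}\A_1]=I$, $\mathbb E[\A_2^{\rm T}\A_2]=I$, and $\mathbb E[\A_1^{\rm T}\A_2]=\Lambda$, the rectangular matrix with $\lambda_i$ in position $(i,i)$; the leading term of the trace is thus $\operatorname{tr}(\Lambda\Lambda^{\rm T})=\sum_i\lambda_i^2=\aff^2(\set X_1,\set X_2)$. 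I would then record the probabilistic inputs: by standard extreme-singular-value bounds, $\|\A_j^{\rm T}\A_j-I\|=\complexityO(\sqrt{d_j/n})$, while each scalar Gram fluctuation $\|\mphi\u_{1,i}\|^2-1$, $\|\mphi\vect{w}_i\|^2-1$, and $\langle\mphi\u_{1,i},\mphi\vect{w}_i\rangle$ concentrates at scale $\varepsilon$ with probability $1-{\rm e}^{-c\varepsilon^2 n}$; a union bound over the $\complexityO(d_2^2)$ relevant pairs is affordable once $n>c_1(\varepsilon)d_2$.

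The core of the argument, and the step I expect to be hardest, is the substitution of $(\A_j^{\rm T}\A_j)^{-1}=I-(\A_j^{\rm T}\A_j-I)+\cdots$ into the trace. A naive estimate treats each of the $\complexityO(d_1)$ summands of the first-order correction as $\complexityO(\varepsilon)$ and yields the suboptimal bound $\complexityO(d_1\varepsilon)$, losing exactly the factor of dimension warned about in the introduction. The decisive observation is that after replacing $\u_{2,i}$ by $\lambda_i\u_{1,i}+\mu_i\vect{w}_i$ throughout and using $1-\lambda_i^2=\mu_i^2$, the first-order correction collapses to a sum in which every summand acquires an explicit factor $\mu_i^2$; since each scalar fluctuation is $\complexityO(\varepsilon)$ and $\lambda_i\le1$, the correction is bounded by $\varepsilon\sum_i\mu_i^2=\varepsilon\,(d_1-\aff^2(\set X_1,\set X_2))$, matching (\ref{eqn:li-affinity}) exactly. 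The remaining and most delicate task is to show that this $\mu_i^2$-weighting persists at second and higher order, so that the tail of the Neumann series cannot reintroduce a bare $d_1$ factor — a point that is sharp precisely when the subspaces are nearly aligned and $\sum_i\mu_i^2$ is tiny, and which requires tracking the dependence of each higher-order term on the $\mu_i$ rather than merely bounding the error matrices in operator norm.
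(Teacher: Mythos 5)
Your reduction to principal angles, the identity $d_1-\aff^2(\set X_1,\set X_2)=\sum_i\mu_i^2$, and the first-order computation are sound: substituting $\u_{2,i}=\lambda_i\u_{1,i}+\mu_i\vect{w}_i$ into the diagonal entries of $\A_1^{\rm T}\A_1-I$, $\A_2^{\rm T}\A_2-I$ and $\A_1^{\rm T}\A_2-\Lambda$ does make the first-order correction collapse to $\sum_i\mu_i^2\cdot\complexityO(\varepsilon)$. The genuine gap is exactly where you flag it: you assert, but do not prove, that the $\mu_i^2$ weighting survives at second and higher order, and this is not routine bookkeeping --- it is the entire difficulty of the theorem. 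Concretely, the second-order piece $\operatorname{tr}\bigl(F(\A_2^{\rm T}\A_2)^{-1}F^{\rm T}(\A_1^{\rm T}\A_1)^{-1}\bigr)$ with $F=\A_1^{\rm T}\A_2-\Lambda$ contains the off-diagonal mass $\sum_{i\ne j}F_{ij}^2$, and $F_{ij}=\lambda_j\langle\mphi\u_{1,i},\mphi\u_{1,j}\rangle+\mu_j\langle\mphi\u_{1,i},\mphi\vect{w}_j\rangle$ has a leading contribution carrying no $\mu$ factor at all; summed over $i\ne j$ this is of order $d_1\varepsilon^2$, which exceeds the target $\varepsilon\sum_i\mu_i^2$ whenever the subspaces are nearly nested. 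Such terms must cancel against other second-order terms (they do, since the total error vanishes identically when all $\mu_i=0$), but exhibiting that cancellation order by order, with Gaussian concentration applied to the correct conditionally independent pieces, is precisely what consumed tens of pages in the original proof of this theorem in \cite{Li2019Rigorous}; your proposal stops at the point where that work begins.

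Note also that the paper does not prove Theorem \ref{thm:li} this way. It first establishes the purely deterministic Theorem \ref{thm:cov-est-to-subspace-rip}: if $\mphi$ is a near-isometry on the sum $\set X=\set X_1+\set X_2$ in the sense of \eqref{eqn:orthonormal-perturbation-bound} with parameter $\delta$, then $|\aff^2(\set Y_1,\set Y_2)-\aff^2(\set X_1,\set X_2)|\le C(d_1-\aff^2(\set X_1,\set X_2))\delta$. That proof avoids the Neumann series entirely: it reduces to one-dimensional subspaces (Lemma \ref{lem:uniform-line-SRIP}), then compares $\ptwo$ applied to the column-normalized matrix $\hat{\V}_1$ and to its Gram--Schmidt orthogonalization $\Q_1$, bounding each column's contribution by $C(1-\lambda_k^2)\delta$ so that the sum telescopes to $C(d_1-\aff^2(\set X_1,\set X_2))\delta$. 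Theorem \ref{thm:li} then follows from a covering-net argument showing that a Gaussian matrix satisfies \eqref{eqn:orthonormal-perturbation-bound} on the at most $(d_1+d_2)$-dimensional subspace $\set X$ with probability $1-{\rm e}^{-c\varepsilon^2n+Cd_2}$. Completing a proof along your lines would essentially reconstruct \cite{Li2019Rigorous}; the deterministic route is both shorter and generalizes beyond Gaussians.
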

\begin{remark}
Using the notion of projection Frobenius-norm distance 
(\ref{eqn:li-affinity}) has the following corollary in an easy-to-remember form:
\begin{equation}\label{eqn:li-frob-dist}
    \left|D^2(\set Y_1,\set Y_2)-D^2(\set X_1,\set X_2)\right|\le\varepsilon D^2(\set X_1,\set X_2).
\end{equation}
In other words, the distance of two subspaces only changes by a small portion 
after random projections with overwhelming probability. 
We thus call the ``affinity preserving'' property in (\ref{eqn:li-affinity}) 
by \emph{subspace Restricted Isometry property} (subspace RIP), 
a term resembling the classical Restricted Isometry Property for sparse vectors \cite{Candes2008Restricted}.
\end{remark}

The aim of this paper is to illuminate the root 
of subspace RIP and to extend Theorem \ref{thm:li} 
to a much wider range of random matrices that are more useful in practice.


\section{Main Theorem}\label{sec:main-results}
{\color{black}
Lying in the center of our theory is the following theorem:
\begin{theorem}\label{thm:cov-est-to-subspace-rip}
Let $\set X$ be a $d$-dimensional subspace in $\mathbb R^N$.
Let $\set X_1$, $\set X_2$ be subspaces in $\set X$ whose dimensions are respectively 
$d_1$ and $d_2$,
and (without loss of generality) assume that $d_1\le d_2$. 
Denote by $\U$ a matrix whose columns constitute an orthonormal basis of $\set X$.
Suppose $\mphi$ is a $n\times N$ matrix satisfying for some $\delta\in(0,1/4)$ that
\begin{equation}\label{eqn:orthonormal-perturbation-bound}
  1-\delta<s^2_{\min}(\mphi\U)\le s^2_{\max}(\mphi\U)<1+\delta.
\end{equation}
Then with $\set Y_i=\mphi\set X_i$, 
$\affY=\aff(\set Y_1,\set Y_2)$, $\affX=\aff(\set X_1,\set X_2)$, 
we have
\begin{equation}\label{eqn:affinity-preserving}
  \left|\affY^2-\affX^2\right|\le C(d_1-\affX^2)\delta, 
\end{equation}
where $C>0$ is some universal constant. 
\end{theorem}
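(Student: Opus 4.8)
The plan is to reduce everything to a perturbation of the identity on the $d$-dimensional coordinate space of $\set X$, and then to extract the crucial factor $d_1-\affX^2$ through a cancellation that is invisible to term-by-term estimates. First I would fix orthonormal bases: use the columns of $\U$ as coordinates on $\set X$, and represent orthonormal bases of $\set X_1$ and $\set X_2$ as $\U\vect W_1$ and $\U\vect W_2$, where $\vect W_1\in\mathbb R^{d\times d_1}$ and $\vect W_2\in\mathbb R^{d\times d_2}$ have orthonormal columns. Setting $\A=\mphi\U$ and $\vect B=\A^{\mathrm T}\A$, condition \eqref{eqn:orthonormal-perturbation-bound} says precisely that $\vect B=I_d+\vect E$ with $\|\vect E\|\le\delta$. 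Orthonormalizing the (nearly orthonormal) columns of $\A\vect W_i$ through the Gram matrices $\G_i=\vect W_i^{\mathrm T}\vect B\vect W_i$ and invoking Lemma \ref{lem:affinity-compute}, I obtain the exact identity $\affY^2=\operatorname{tr}\bigl(\G_1^{-1}\vect M\G_2^{-1}\vect M^{\mathrm T}\bigr)$ with $\vect M=\vect W_1^{\mathrm T}\vect B\vect W_2$, while $\affX^2=\|\vect N\|_{\mathrm F}^2$ with $\vect N=\vect W_1^{\mathrm T}\vect W_2$. The whole problem thus becomes: control $\operatorname{tr}(\G_1^{-1}\vect M\G_2^{-1}\vect M^{\mathrm T})-\operatorname{tr}(\vect N\vect N^{\mathrm T})$ when $\vect B$ is within $\delta$ of the identity.

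Next I would choose $\vect W_1,\vect W_2$ to be principal bases (Lemma \ref{lem:affinity-compute} iii)), so that $\vect N=[\,\Lambda\mid 0\,]$ with $\Lambda=\operatorname{diag}(\lambda_1,\dots,\lambda_{d_1})$ and $d_1-\affX^2=\sum_i(1-\lambda_i^2)$. To handle all orders in $\delta$ at once I would interpolate, setting $\vect B_t=I_d+t\vect E$ and $h(t)=\operatorname{tr}(\G_1(t)^{-1}\vect M(t)\G_2(t)^{-1}\vect M(t)^{\mathrm T})$ with $\G_i(t)=I_d+t\vect E_i$, $\vect E_i=\vect W_i^{\mathrm T}\vect E\vect W_i$, $\vect M(t)=\vect N+t\vect F$, and $\vect F=\vect W_1^{\mathrm T}\vect E\vect W_2$; then $h(0)=\affX^2$, $h(1)=\affY^2$, and $\affY^2-\affX^2=\int_0^1 h'(t)\,\mathrm dt$. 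A sanity check that guides the whole estimate is the nested case $\set X_1\subseteq\set X_2$: there $\set Y_1\subseteq\set Y_2$, so $h\equiv d_1$ and $h'\equiv 0$, whence any valid bound must collapse to zero. This already disqualifies the naive estimate $|\affY^2-\affX^2|\le C\delta\,d_1$ obtained by bounding each trace by its operator-times-Frobenius norm.

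The heart of the argument — and the step I expect to be the main obstacle — is showing that $|h'(t)|\le C\delta\sum_i(1-\lambda_i^2)$ uniformly in $t$, which forces one to exhibit the cancellation explicitly rather than estimate term by term. The mechanism is already visible at $t=0$, where the linear part of $\affY^2-\affX^2$ equals $-\operatorname{tr}(\vect E_1\vect N\vect N^{\mathrm T})-\operatorname{tr}(\vect N\vect E_2\vect N^{\mathrm T})+2\operatorname{tr}(\vect N\vect F^{\mathrm T})$. Writing each matched basis vector as $\vect w_{2,i}=\lambda_i\vect w_{1,i}+\sqrt{1-\lambda_i^2}\,\vect w_{1,i}^\perp$ and substituting, the diagonal contributions proportional to $\lambda_i^2\,\vect w_{1,i}^{\mathrm T}\vect E\vect w_{1,i}$ cancel between the $\set X_1$- and $\set X_2$-sides, and the survivors carry coefficients $\lambda_i^2(1-\lambda_i^2)$, $\lambda_i(1-\lambda_i^2)^{3/2}$, and $\lambda_i^2(1-\lambda_i^2)$, each bounded by $C(1-\lambda_i^2)$. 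Using $|\vect w^{\mathrm T}\vect E\vect w'|\le\delta$ and summing then yields exactly $C\delta\sum_i(1-\lambda_i^2)=C\delta\,(d_1-\affX^2)$. What makes this delicate is that the same cancellation must survive at every order in $\delta$ — equivalently, for $h'(t)$ at a general $t$, where $\vect M(t)$ is no longer diagonal and $\G_i(t)\ne I_d$ — since a crude bound even on a single second-order term such as $\|\vect F\|_{\mathrm F}^2\le d_1\delta^2$ fails to vanish in the nested case and so reintroduces the spurious dimension factor. The technical core is therefore the bookkeeping that keeps every surviving coefficient proportional to $1-\lambda_i^2$ throughout, turning the principal-angle geometry into the source of the sharp bound.
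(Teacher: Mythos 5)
Your setup is sound and your first-order computation at $t=0$ is correct (the coefficients $\lambda_i^2(1-\lambda_i^2)$, $2\lambda_i(1-\lambda_i^2)^{3/2}$, $-\lambda_i^2(1-\lambda_i^2)$ do check out), but the proof has a genuine gap exactly where you flag it: the claim $|h'(t)|\le C\delta\sum_i(1-\lambda_i^2)$ for \emph{general} $t\in(0,1)$ is asserted, not proved, and it is the entire difficulty of the theorem. At $t>0$ the matrices $\G_i(t)^{-1}$ are no longer the identity and $\vect M(t)$ is no longer of the form $[\,\Lambda\mid 0\,]$, so the diagonal cancellation you exhibit at $t=0$ does not transfer by "bookkeeping": the cross terms $\vect w_{1,i}^{\mathrm T}\vect E\vect w_{2,j}$ with $i\ne j$, which vanish from the first-order expression because $\vect N$ is diagonal, enter $h'(t)$ at order $t$ with no visible factor of $1-\lambda_i^2$ attached, and (as you yourself note with $\|\vect F\|_{\mathrm F}^2\le d_1\delta^2$) a norm bound on them reintroduces the dimension factor. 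One workable way to close this is to prove the differential inequality $|h'(t)|\le C\delta\,(d_1-h(t))$ by rerunning your $t=0$ argument in the inner product induced by $\vect B_t$ (choosing principal bases anew at each $t$) and then integrating \`a la Gronwall; but re-deriving principal bases for a perturbed metric and controlling how they move is itself a substantial argument that your proposal does not contain.

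For comparison, the paper avoids the all-orders cancellation problem by an entirely column-wise route. It first proves the exact one-dimensional statement $|\aff^2(\set Y_2,\mphi\u)-\aff^2(\set X_2,\u)|\le C(1-\aff^2(\set X_2,\u))\delta$ (Lemma \ref{lem:uniform-line-SRIP}) by decomposing $\u=\lambda\u_2+\sqrt{1-\lambda^2}\u_0$ with $\u_0\perp\set X_2$ — this is your $t=0$ cancellation carried out exactly rather than to first order. It then telescopes $\affY^2-\affX^2$ through two comparisons: normalized columns $\hat\V_1$ versus $\U_1$ (handled column-by-column by the one-dimensional lemma, yielding $\sum_k C(1-\lambda_k^2)\delta$), and the Gram--Schmidt orthogonalization $\Q_1$ versus $\hat\V_1$, where the key estimate $s_{\max}^2(\ptwop\Z_{k-1})\le C(1-\lambda_k^2)$ uses the monotone ordering $\lambda_1\ge\cdots\ge\lambda_{d_1}$ of the principal correlations to absorb the Gram--Schmidt correction into the factor $1-\lambda_k^2$. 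That ordering argument is the ingredient your interpolation scheme would have to reproduce in some form, and it is precisely what is missing from the "technical core" you defer.
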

\begin{remark}
Consequently, 
the projection Frobenius-norm distance of $\set X_1,\set X_2$ is preserved by $\mphi$:
\begin{equation}\label{eqn:subspace-rip}
  \left|D^2(\set Y_1,\set Y_2)-D^2(\set X_1,\set X_2)\right|\le C\delta D^2(\set X_1,\set X_2),
\end{equation}
which means that $\mphi$ possesses \emph{subspace RIP} for subspaces of $\set X$.
\end{remark}
\begin{remark}
  The assumption \eqref{eqn:orthonormal-perturbation-bound} 
  has a intimate connection with the concept of \emph{subspace embedding} 
  in numerical linear algebra \cite{Clarkson2017Low}. 
  The main difference is that subspace embedding in \cite{Clarkson2017Low} 
  asks \eqref{eqn:orthonormal-perturbation-bound} 
  to hold with probability at least $1-\varepsilon$, 
  hence is a probabilistic assumption, 
  while in our assumption \eqref{eqn:orthonormal-perturbation-bound} is deterministic 
  and removes the need of probabilistic argument; 
  we believe our assumption better captures the essence of the matter.
\end{remark}

The assumption \eqref{eqn:orthonormal-perturbation-bound} is equivalent to 
saying that $\mphi$ acts as a near-isometry on $\set X$, 
i.e. $(1-\delta)\|\u\|^2<\|\mphi\u\|^2<(1+\delta)\|\u\|^2$ for any $\u\in\set X$. 
Thus the essence of Theorem \ref{thm:cov-est-to-subspace-rip} is that 
a near-isometry on a subspace $\set X$ preserves the pairwise distance of subspaces of $\set X$. 
This is not an obvious fact, 
since affinity, hence subspace distance, 
is defined in a subtle way that involves orthonormal bases of both subspaces, 
but the latter is not preserved by a near-isometry.
The overall effect of such structural degeneration 
makes the desired factor $(d_1-\affX^2)$, 
which is crucial in establishing \eqref{eqn:subspace-rip}, out of immediate reach.
One has to perform some careful analysis to obtain \eqref{eqn:affinity-preserving} and \eqref{eqn:subspace-rip}. 

Before we present the proof (in Section \ref{sec:proof}), 
it is of interest to explain how Theorem \ref{thm:cov-est-to-subspace-rip} 
leads easily to a series of corollaries 
on subspace RIP for a wide variety of random matrices, 
which we will do in the next section.
}


{\color{black}\section{Random Matrices and Near-Isometry on Subspaces}\label{sec:examples}
This section discusses in detail 
the near-isometry condition \eqref{eqn:orthonormal-perturbation-bound} 
and its connection with random matrices. 
Furthermore, this section examines various random matrices encountered in practice 
and shows that they satisfy the near-isometry condition, 
hence possess subspace RIP, 
which would validate their application in subspace-related tasks to accelerate computation.

The near-isometry condition \eqref{eqn:orthonormal-perturbation-bound} 
and Theorem \ref{thm:cov-est-to-subspace-rip} are best understood in the context of random matrices. 
In practice, it is useless to discuss a pair of low-dimensional subspaces $\set X_1,\set X_2$ contained in 
a \emph{specific} subspace $\set X$; 
one would often need \eqref{eqn:affinity-preserving} and \eqref{eqn:subspace-rip} 
for \emph{any} pair of such subspaces. 
For any pair of subspaces $\set X_1$, $\set X_2$, 
we consider their sum 
\[\set X=\{\x_1+\x_2:\x_1\in\set X_1,\x_2\in\set X_2\}.\]
This is a subspace of $\mathbb R^N$ of dimension at most $2d_2$ that contains both $\set X_1$ and $\set X_2$. 
If $\mphi$ acts as a near-isometry on $\set X$, 
then $\mphi$ preserves the affinity and the distance between $\set X_1$, $\set X_2$. 
In order that $\mphi$ preserves the affinity and the distance between any pair of $\set X_1$, $\set X_2$, 
one may impose that $\mphi$ acts as a near-isometry on any subspace of dimension $2d_2$ in $\mathbb R^N$.
This is, however, apparently impossible for deterministic $\mphi$, 
and a standard way to resolve this is to use a random matrix $\mphi$ instead. 

It is clear from the above argument that, 
if $\mphi$ is a random matrix which acts as a near-isometry with high probability 
on any subspace of dimension $2d_2$, 
then $\mphi$ preserves the affinity and the distance between $\set X_1$, $\set X_2$ 
with high probability for any pair of $\set X_1$, $\set X_2$ of dimension 
respectively $d_1$, $d_2$, where $d_1\le d_2$. 

As a consequence, analysis of subspace RIP now boils down to 
analysis of the singular values of $\mphi\U$,
where $\U$ is a matrix whose columns constitute an orthonormal basis for some subspace in $\mathbb R^N$.
This will be carried out in the rest of this section.

Throughout this section, $\set X_1,\ldots,\set X_L$ always denote subspaces in $\mathbb R^N$ 
of dimension $d_1,\ldots,d_L$, and $d_*=\max\{d_1,\ldots,d_L\}$. 
The image of $\set X_i$ under the random projection $\mphi$ will be denoted by $\set Y_i$. 
The subspace RIP of $\mphi$ will be characterized by maximum discrepancy
\begin{equation*}
  \Delta=\max_{1\le i<j\le L}\frac{|\aff^2(\set Y_i,\set Y_j)-\aff^2(\set X_i,\set X_j)|}{\max\{d_i,d_j\}-\aff^2(\set X_i,\set X_j)}.
\end{equation*}
Note that 
\begin{equation*}
  \Delta\ge\max_{1\le i<j\le L}\frac{|D^2(\set Y_i,\set Y_j)-D^2(\set X_i,\set X_j)|}{D^2(\set X_i,\set X_j)}.
\end{equation*}
}

{\color{black}
\subsection{Example: Exponential Johnson-Lindenstrauss Property}}
A class of random projections that deserves much emphasis is 
the ones with exponential Johnson-Lindenstrauss property\footnote{In literature the same property is usually under the name ``Johnson-Lindenstrauss property'', 
without ``exponential''; see for instance \cite{Foucart2017Mathematical}, Section 9.5.},
defined as following 
\begin{definition}\label{def:jl-property}
A random matrix $\A\in\mathbb R^{n\times N}$ is said to 
satisfy \emph{exponential Johnson-Lindenstrauss property}, 
if there exists some constant $\tilde c>0$, 
such that for any $0<\varepsilon<1$ 
and for any $\x\in\mathbb R^N$, 
\begin{equation*}
    \probP(\left|\|\A\x\|^2-\|\x\|^2\right|>\varepsilon\|\x\|^2)\le2{\rm e}^{-\tilde c\varepsilon^2n}.
\end{equation*} 
\end{definition}

Examples of random matrices with exponential Johnson-Lindenstrauss property 
are pervasive in both theory and practice. 
Section \ref{apd:jl-property} provides a non-comprehensive list of such examples 
(Gaussian matrices with independent columns, subgaussian matrices with independent rows, 
and partial Fourier/Hadamard matrices)
and also a related theorem which asserts that classical RIP for sparse vectors 
with sufficiently small restricted isometry constants implies 
exponential Johnson-Lindenstrauss property. 

Taking the route discussed at the beginning of this section, 
we have
\begin{lemma}\label{lem:JL-orthonormal-preserving}
Let $\U$ be a matrix whose columns constitute an orthonormal basis 
for a $d$-dimensional subspace in $\mathbb R^N$. 
Assume the random matrix $\mphi$ satisfies exponential Johnson-Lindenstrauss property. 
Then for any $0<\varepsilon<1$, 
we have
\begin{equation*}
\probP(1-\varepsilon<s_{\min}^2(\mphi\U)\le s_{\max}^2(\mphi\U)<1+\varepsilon)\ge 1-{\rm e}^{-\tilde c\varepsilon^2 n+3d}.
\end{equation*}
\end{lemma}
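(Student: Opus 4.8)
The plan is to recognize the stated event as a bound on the symmetric matrix $\vect M:=(\mphi\U)^{\rm T}(\mphi\U)-\vect I_d$ and to control its operator norm by a standard covering argument. First I would observe that, since $\U$ has orthonormal columns, for any $\vect w\in\mathbb S^{d-1}$ one has $\|\U\vect w\|=1$ and
\begin{equation*}
  \vect w^{\rm T}\vect M\vect w=\|\mphi\U\vect w\|^2-1.
\end{equation*}
The eigenvalues of $\vect M$ are precisely $s_i^2(\mphi\U)-1$, so that $\|\vect M\|\le\varepsilon$ is equivalent to $1-\varepsilon\le s_{\min}^2(\mphi\U)\le s_{\max}^2(\mphi\U)\le1+\varepsilon$. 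Hence it suffices to show that $\|\vect M\|\le\varepsilon$ holds with probability at least $1-{\rm e}^{-\tilde c\varepsilon^2n+3d}$.

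Next I would exploit the exponential Johnson--Lindenstrauss property pointwise. For a fixed $\vect w\in\mathbb S^{d-1}$, applying Definition \ref{def:jl-property} to the unit vector $\U\vect w$ gives
\begin{equation*}
  \probP\!\left(\left|\vect w^{\rm T}\vect M\vect w\right|>\varepsilon\right)=\probP\!\left(\left|\|\mphi\U\vect w\|^2-\|\U\vect w\|^2\right|>\varepsilon\|\U\vect w\|^2\right)\le 2{\rm e}^{-\tilde c\varepsilon^2n}.
\end{equation*}
This controls the quadratic form of $\vect M$ along any single direction; the remaining task is to upgrade this to a bound uniform over the whole sphere.

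For the upgrade I would use a covering net. Fix a $\tfrac14$-net $\set N$ of $\mathbb S^{d-1}$; the standard volumetric estimate $(1+2/\rho)^d$ with $\rho=\tfrac14$ gives $|\set N|\le9^d$. A union bound over $\set N$ then shows that, outside an event of probability at most $2\cdot9^d\,{\rm e}^{-\tilde c\varepsilon^2n}\le{\rm e}^{-\tilde c\varepsilon^2n+3d}$ (using $2\cdot9^d\le{\rm e}^{3d}$), we have $|\vect w^{\rm T}\vect M\vect w|\le\varepsilon$ for every $\vect w\in\set N$. The classical net inequality for symmetric matrices, namely $\|\vect M\|\le(1-2\rho)^{-1}\max_{\vect w\in\set N}|\vect w^{\rm T}\vect M\vect w|$, with $\rho=\tfrac14$ yields $\|\vect M\|\le2\varepsilon$ on this event. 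Rescaling $\varepsilon$ by the constant factor $2$, in accordance with the infinitesimal-$\varepsilon$ convention, completes the argument.

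I expect no genuinely hard step here: the whole proof is the textbook discretization of a concentration inequality over a sphere. The only place demanding care is the bookkeeping that converts the pointwise Johnson--Lindenstrauss bound into the uniform one --- balancing the $9^d$ net cardinality against the decay ${\rm e}^{-\tilde c\varepsilon^2n}$, absorbing the constant $2$ and the factor $\ln 9<3$ into the $3d$ exponent, and tracking the factor-$2$ loss from the net inequality, which is harmless given the paper's convention of shrinking $\varepsilon$ by constants.
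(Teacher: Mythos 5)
Your proposal is correct and follows essentially the same route as the paper: a pointwise application of the exponential Johnson--Lindenstrauss bound to $\U\vect w$, a volumetric net on $\mathbb S^{d-1}$, a union bound absorbing $2\cdot|\set N|\le\mathrm e^{3d}$, and a net-to-sphere upgrade of the quadratic form $\|\mphi\U\vect w\|^2-1$, with the constant-factor loss in $\varepsilon$ handled by the paper's rescaling convention. The only cosmetic difference is that you use a $\tfrac14$-net of size $9^d$ with the symmetric-matrix inequality $\|\vect M\|\le(1-2\rho)^{-1}\max_{\vect w\in\set N}|\vect w^{\rm T}\vect M\vect w|$, whereas the paper uses a $\tfrac12$-net of size $5^d$ and its Lemma on $\|\A\x\|$ over the net, incurring a factor $4$ instead of your factor $2$.
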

The proof is by a standard covering argument and is deferred to Section \ref{apd:jl-property}. 

As a corollary, we have the following result on subspace RIP of random matrices 
with exponential Johnson-Lindenstrauss property, 
which generalizes the main result in \cite{Li2019Rigorous}.
\begin{corollary}\label{cor:JL-implies-SRIP}
Assume the random matrix $\mphi$ satisfies exponential Johnson-Lindenstrauss property. 
Then for some universal constant $c>0$ and for any $0<\varepsilon<1$, 
we have
\[
  \Delta\le\varepsilon
\]
with probability at least $1-L^2{\rm e}^{-c\tilde c\varepsilon^2n+6d_*}$. 
In particular, whenever $n>24c^{-1}\tilde c^{-1}\varepsilon^{-2}\max\{d_*,\log L\}$, 
the probability is at least $1-\mathrm e^{-c\tilde c\varepsilon^2n/2}$.
\end{corollary}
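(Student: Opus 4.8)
The plan is to reduce the corollary to a per-pair application of Theorem~\ref{thm:cov-est-to-subspace-rip}, with Lemma~\ref{lem:JL-orthonormal-preserving} supplying the near-isometry hypothesis, and then to conclude by a union bound over the $\binom{L}{2}$ pairs. All the genuine difficulty has already been absorbed into Theorem~\ref{thm:cov-est-to-subspace-rip}; what remains is the bookkeeping of dimensions, constants, and probabilities, essentially along the route sketched at the beginning of Section~\ref{sec:examples}.

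First I would fix a pair $1\le i<j\le L$ and, relabeling if necessary (both numerator and denominator of the $(i,j)$ term of $\Delta$ are symmetric in $i,j$), assume $d_i\le d_j$. Let $\set X_i+\set X_j=\{\x_1+\x_2:\x_1\in\set X_i,\x_2\in\set X_j\}$ be their sum, a subspace of dimension at most $d_i+d_j\le 2d_*$, and let $\U$ be an orthonormal basis of it. Setting $\delta=\varepsilon/C$ with $C$ the universal constant from Theorem~\ref{thm:cov-est-to-subspace-rip}, Lemma~\ref{lem:JL-orthonormal-preserving} gives
\[
\probP\left(1-\delta<s_{\min}^2(\mphi\U)\le s_{\max}^2(\mphi\U)<1+\delta\right)\ge 1-{\rm e}^{-\tilde c\delta^2 n+6d_*},
\]
where the exponent $6d_*$ arises from $3\dim(\set X_i+\set X_j)\le 6d_*$. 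On this event the near-isometry condition \eqref{eqn:orthonormal-perturbation-bound} holds on $\set X_i+\set X_j$, so Theorem~\ref{thm:cov-est-to-subspace-rip}, applied with ambient subspace $\set X=\set X_i+\set X_j$, yields
\[
\left|\aff^2(\set Y_i,\set Y_j)-\aff^2(\set X_i,\set X_j)\right|\le C\left(\min\{d_i,d_j\}-\aff^2(\set X_i,\set X_j)\right)\delta.
\]

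The one point to watch is matching this bound to the denominator $\max\{d_i,d_j\}-\aff^2(\set X_i,\set X_j)$ appearing in $\Delta$, since Theorem~\ref{thm:cov-est-to-subspace-rip} produces the \emph{smaller} dimension. Because $\min\{d_i,d_j\}\le\max\{d_i,d_j\}$, the right-hand side above is bounded by $C\delta\left(\max\{d_i,d_j\}-\aff^2(\set X_i,\set X_j)\right)=\varepsilon\left(\max\{d_i,d_j\}-\aff^2(\set X_i,\set X_j)\right)$, so the $(i,j)$ ratio defining $\Delta$ is at most $\varepsilon$ on the good event. The substitution $\delta=\varepsilon/C$, which absorbs the universal constant $C$, is exactly the harmless rescaling permitted by the convention on infinitesimal $\varepsilon$.

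Finally I would union-bound the failure events over all $\binom{L}{2}\le L^2$ pairs. Writing $c=1/C^2$, each failure probability is ${\rm e}^{-c\tilde c\varepsilon^2 n+6d_*}$, so $\Delta\le\varepsilon$ holds with probability at least $1-L^2{\rm e}^{-c\tilde c\varepsilon^2 n+6d_*}$, which is the first assertion. The ``in particular'' clause then follows by a direct calculation: rewriting the failure bound as ${\rm e}^{2\log L+6d_*-c\tilde c\varepsilon^2 n}$ and using $n>24c^{-1}\tilde c^{-1}\varepsilon^{-2}\max\{d_*,\log L\}$, one checks $2\log L+6d_*\le 8\max\{d_*,\log L\}\le c\tilde c\varepsilon^2 n/2$, whence the exponent is at most $-c\tilde c\varepsilon^2 n/2$ (the constant $24$ is deliberately conservative). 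The main ``obstacle'' here is thus not analytic but organizational, namely correctly feeding the dimension bound $2d_*$ into the exponent and handling the $\min$-versus-$\max$ comparison in the denominator, since Theorem~\ref{thm:cov-est-to-subspace-rip} already does the substantive work.
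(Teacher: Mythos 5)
Your proposal is correct and follows exactly the route the paper intends: the paper's own proof of Corollary~\ref{cor:JL-implies-SRIP} is a one-line appeal to the argument at the start of Section~\ref{sec:examples} (apply Lemma~\ref{lem:JL-orthonormal-preserving} to the sum $\set X_i+\set X_j$ of dimension at most $2d_*$, feed the resulting near-isometry into Theorem~\ref{thm:cov-est-to-subspace-rip}, and union-bound over the $L^2$ pairs), which is precisely what you spell out. Your handling of the $\min$-versus-$\max$ comparison in the denominator and the arithmetic for the ``in particular'' clause are both sound, and the rescaling $\delta=\varepsilon/C$ is covered by the paper's stated convention on infinitesimal $\varepsilon$.
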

\begin{proof}
  By the argument at the beginning of this section, 
  this follows from Lemma \ref{lem:JL-orthonormal-preserving}, Theorem \ref{thm:cov-est-to-subspace-rip} 
  and union bound.
\end{proof}

Note that how this simple proof supersedes, in both effectivity and generality, 
the complicated probabilistic analysis which spans tens of pages in \cite{Li2019Rigorous}, 
thanks to Theorem \ref{thm:cov-est-to-subspace-rip}. 
We will discuss this difference in more detail in Section \ref{sec:discussions}.

Corollary \ref{cor:JL-implies-SRIP} permits
to apply various matrices used in practice (e.g. Bernoulli or partial Fourier) to subspace related tasks. 
For subgaussian matrices, the constant $\tilde c$ in exponential Johnson-Lindenstrauss property depends only 
on the subgaussian norm and is inverse proportional to the square of the subgaussian norm, 
which is quite satisfying.
However, for partial Fourier matrices the above analysis is a bit rough, 
as the constant $\tilde c$ is proportional to $\sqrt N$. 
This is problematic when $N$ is large\footnote{In \cite{Heckel2017Dimensionality} 
it is claimed that $\tilde c=\Omega(\log^4 N)$, 
which was not legitimately proved there and is likely wrong. 
In fact, they argue that $\tilde c=\Omega(\log^4 N)$ follows from Theorem \ref{thm:apd:rip-to-jl} 
and Theorem \ref{thm:apd:rip-of-bos} presented in our Section \ref{apd:jl-property}, 
but to achieve the $\mathrm e^{-\Omega(n)}$ probability bound one has to take $s=\Omega(n)$ 
and $\zeta=\mathrm e^{-\Omega(n)}$, which requires that $n=\Omega(n^2)$ and is absurd.
}. 
Moreover, this leaves out the commonly-used partial circulant matrices and partial Toeplitz matrices, 
which do not satisfy exponential Johnson-Lindenstrauss property with reasonable $\tilde c$. 
These matrices are endowed with fast matrix-vector multiplication algorithms 
that significantly accelerate the random compression procedure, 
hence are worth a more refined treatment, 
as shown in the next example.

{\color{black}
\subsection{Example: Some Random Matrices With Fast Algorithms}}
Some random matrices are particularly fascinating for practical use 
due to their advantages in computational efficiency 
and their natural emergence in signal processing tasks. 
Such examples include partial Fourier matrices 
and other randomly sampled Bounded Orthonormal Systems (BOS) \cite{Foucart2017Mathematical}, 
which correspond to random subsampling in frequency domain and other feature domains. 
Another important example is partial circulant/Toeplitz matrix, 
which corresponds to subsampling after a random convolution. 
These matrices allow for $O(N\log N)$-time multiplication-by-vector algorithms 
by virtue of Fast Fourier Transform (FFT), Fast Walsh-Hadamard Transform (FWHT), etc. 

Proving subspace RIP of these matrices would legitimate their use in subspace related tasks, 
hence significantly improves the efficiency in handling such tasks. 
In fact, we will show in Section \ref{sec:simulations} how the application of random compression 
by these matrices boosts up subspace clustering on a real-world dataset. 
This motivates the following results. 
\begin{lemma}\label{lem:BOS-orthonormal-preserving}
Let $\U$ be a matrix whose columns constitute an orthonormal basis 
for a $d$-dimensional subspace in $\mathbb R^N$. 
Let $\A\in\mathbb C^{n\times N}$ be the random sampling associated 
to a BOS\footnote{Partial Fourier matrices and partial Hadamard matrices 
are both randomly samplings associated to a BOS with constant $O(1)$.} with constant $K\ge 1$. 
Let $\D_{\epsilon}$ be a diagonal matrix with i.i.d. Rademacher random variables on its diagonal. 
Let $\mphi=\A\D_\epsilon$. 
Then there exists some constant $C>1$ such that 
for any $\varepsilon\in(0,1)$ and for any $n>CK^3\varepsilon^{-3}\max\{d\log^3d,\log^3N\}$,
we have \[1-\varepsilon<s_{\min}^2(\mphi\U)\le s_{\max}^2(\mphi\U)<1+\varepsilon\]
with probability at least
\[1-\exp\left(-C^{-1}\big(\sqrt{d^2+K^{-2}\varepsilon^2n}-d\big)\right).\]
\end{lemma}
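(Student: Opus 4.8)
The plan is to show that $(\mphi\U)^*(\mphi\U)$ is close to the identity in operator norm and read off the singular-value bounds from there. Writing $\vect z=\U\x$ for $\x\in\mathbb S^{d-1}$, the vector $\vect z$ ranges over the unit sphere of $\set X$, and since $\D_\epsilon\vect z=\operatorname{diag}(\vect z)\bm\epsilon$ with $\bm\epsilon$ the Rademacher vector on the diagonal of $\D_\epsilon$, one has $\|\mphi\U\x\|^2=\|\A\operatorname{diag}(\vect z)\bm\epsilon\|^2$. Hence
\[
  \sup_{\x\in\mathbb S^{d-1}}\big|\|\mphi\U\x\|^2-1\big|
  =\big\|(\mphi\U)^*(\mphi\U)-\vect I_d\big\|_{2\to2},
\]
and since $s_{\min}^2$ and $s_{\max}^2$ are the extreme eigenvalues of $(\mphi\U)^*(\mphi\U)$, bounding this operator norm by $\varepsilon$ is exactly the desired conclusion. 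The lemma thus reduces to showing that the second-order Rademacher chaos $\sup_{\vect z}\big|\|B_{\vect z}\bm\epsilon\|^2-1\big|$, with $B_{\vect z}=\A\operatorname{diag}(\vect z)$ and $\vect z$ over $\set X\cap\mathbb S^{N-1}$, is below $\varepsilon$. I would split this into a mean term $\big|\mathbb E_{\bm\epsilon}\|B_{\vect z}\bm\epsilon\|^2-1\big|=\big|\|B_{\vect z}\|_{\rm F}^2-1\big|$ and a fluctuation term $\sup_{\vect z}\big|\|B_{\vect z}\bm\epsilon\|^2-\|B_{\vect z}\|_{\rm F}^2\big|$.

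Second, I would condition on a favourable event for the random sampling $\A$, requiring two properties. (i) The columns of $\A$ have near-unit norm, $\|\A\vect e_j\|^2\in(1-\varepsilon/4,1+\varepsilon/4)$ for all $j$; since $\|\A\vect e_j\|^2=\tfrac1n\sum_k|\phi_j(t_k)|^2$ has mean $1$ and summands bounded by $K^2$, Bernstein's inequality with a union bound over the $N$ columns delivers this once $n\gtrsim K^2\varepsilon^{-2}\log N$, and it forces $\|B_{\vect z}\|_{\rm F}^2=\sum_j|z_j|^2\|\A\vect e_j\|^2$ to lie within $\varepsilon/4$ of $\|\vect z\|^2=1$ uniformly, handling the mean term. (ii) $\A$ satisfies the restricted isometry property of order $s$ with a small constant $\delta$; this is precisely the cited RIP bound for randomly sampled BOS (Theorem~\ref{thm:apd:rip-of-bos}), which is where the $K^2 s\log^3 s\,\log N$-type sample complexity and the powers of $\log N$ and $\log d$ originate. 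Via the standard $\ell_1/\ell_2$ consequence $\|\A\vect v\|\le\sqrt{1+\delta}\,\big(\|\vect v\|+\|\vect v\|_1/\sqrt s\,\big)$ applied to $\vect v=\vect z\odot\vect w$, the RIP yields $\sup_{\vect z}\|B_{\vect z}\|_{\rm F}\lesssim1$ and $\sup_{\vect z}\|B_{\vect z}\|_{2\to2}\lesssim1$, the two size parameters I need for the chaos estimate.

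Third, conditioned on this event, I would control the fluctuation term by the theory of suprema of chaos processes (Krahmer--Mendelson--Rauhut), which expresses both the expected supremum and its tails through $d_{\rm F}=\sup_{\vect z}\|B_{\vect z}\|_{\rm F}$, $d_{2\to2}=\sup_{\vect z}\|B_{\vect z}\|_{2\to2}$, and Talagrand's functional $\gamma_2$ of the family $\{B_{\vect z}\}$ in the operator-norm metric. With $d_{\rm F},d_{2\to2}\lesssim1$, the sample-complexity hypothesis is arranged exactly so that the resulting expectation bound is at most $\varepsilon/2$, while the tail bound, whose two regimes are governed respectively by $d_{2\to2}(\gamma_2+d_{\rm F})$ and $d_{2\to2}^2$, interpolates the subgaussian and subexponential ranges and produces the stated failure probability $\exp\!\big(-C^{-1}(\sqrt{d^2+K^{-2}\varepsilon^2 n}-d)\big)$; the $\sqrt{d^2+\cdots}-d$ shape is the signature of this two-regime tail combined with an entropy of order $d$ coming from the $d$-dimensional sphere. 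A final union bound against the failure of the $\A$-event, which the sample complexity makes at most the displayed probability, closes the argument.

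The hard part is the estimate of $\gamma_2(\{B_{\vect z}\},\|\cdot\|_{2\to2})$. The naive domination $\|B_{\vect z}-B_{\vect z'}\|_{2\to2}=\|\A\operatorname{diag}(\vect z-\vect z')\|_{2\to2}\lesssim\|\vect z-\vect z'\|$ reduces $\gamma_2$ to the Euclidean width $\sqrt d$ of the sphere, which is hopelessly lossy---it would cost a full factor of $d$ rather than the needed $O(\varepsilon)$. One must instead bound the operator-norm covering numbers of $\{\A\operatorname{diag}(\vect z)\}$ at every scale, exploiting the boundedness-by-$K$ and incoherence of the BOS, and it is exactly this multi-scale entropy estimate that upgrades the crude $\log N$ into the $\log^3$ factors and pins down the dependence on $K$ and $\varepsilon$. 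Getting this chaining right, rather than the reduction steps surrounding it, is where essentially all of the difficulty resides.
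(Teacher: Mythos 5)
Your reduction to bounding $\|(\mphi\U)^*(\mphi\U)-\vect I\|$, the identification of the two external ingredients (RIP of randomly sampled BOS, i.e.\ Theorem \ref{thm:apd:rip-of-bos}, and the role of the sign randomization $\D_\epsilon$), and the handling of the mean term are all fine and consistent with what the paper does. But the core of your argument --- running a Krahmer--Mendelson--Rauhut chaos-process bound \emph{uniformly} over $\{\A\operatorname{diag}(\vect z):\vect z\in\set X\cap\mathbb S^{N-1}\}$ --- has a genuine gap, in two respects. First, the step you yourself flag as ``where essentially all of the difficulty resides,'' the multi-scale entropy estimate for $\gamma_2(\{B_{\vect z}\},\|\cdot\|_{2\to2})$, is not carried out, and there is no evident route to it: the Maurey-type covering bounds behind BOS results exploit the $\ell_1$-control $\|\vect v\|_1\le\sqrt s\|\vect v\|_2$ of sparse vectors, whereas unit vectors of an arbitrary $d$-dimensional subspace have no such structure. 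Second, even granting a good $\gamma_2$ bound, the tail bookkeeping does not produce the stated probability: you correctly note $d_{2\to2}=\sup_{\vect z}\|B_{\vect z}\|_{2\to2}\lesssim1$ (and this is sharp, e.g.\ when $\set X$ contains a coordinate vector), so the subexponential regime of the KMR tail, governed by $t/d_{2\to2}^2$ at $t\asymp\varepsilon$, caps the exponent at $O(\varepsilon)$ --- nowhere near $\sqrt{d^2+K^{-2}\varepsilon^2n}-d$. You have misattributed the origin of that probability shape.

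The paper's route avoids the uniform chaos entirely. Conditional on $\delta(s)\le\varepsilon/4$ for $\A$ (Theorem \ref{thm:apd:rip-of-bos}, failing with probability $\exp(-C^{-1}\varepsilon^2K^{-2}n/s)$), the Krahmer--Ward theorem (Theorem \ref{thm:apd:rip-to-jl}) gives, for each \emph{fixed} vector, concentration of $\|\A\D_\epsilon\x\|^2$ with failure probability $2\mathrm e^{-\tilde cs}$; a union bound over a $\tfrac12$-net of $\mathbb S^{d-1}$ of size $5^d\le\mathrm e^{3d}$ then yields the uniform statement with failure probability $\mathrm e^{-cs+3d}$. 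The $\sqrt{d^2+K^{-2}\varepsilon^2n}-d$ shape comes from choosing $s$ to balance the exponents $cs-3d$ and $C^{-1}\varepsilon^2K^{-2}n/s$ of these two separate failure events --- it is an artifact of optimizing the RIP order, not of a two-regime chaos tail over a set of entropy $d$. If you want to salvage a genuinely uniform argument over the subspace sphere, you would need something like the multi-scale RIP-based embedding arguments of Oymak et al.\ rather than an off-the-shelf KMR bound; but the pointwise-concentration-plus-net route is both simpler and already sufficient here.
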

The proof is by a careful application of well-known properties of randomly sample BOS 
and is deferred to Section \ref{apd:fast-matrices}. 
Note that the exponent $3$ in $K^3$, $\varepsilon^{-3}$, $\log^3 d$ and $\log^3N$ 
can be replaced by $2+\epsilon$ for any $\epsilon>0$, 
and we chose $3$ only for typographical convenience.

\begin{corollary}\label{cor:BOS-implies-SRIP}
Let $\mphi$ be as in Lemma \ref{lem:BOS-orthonormal-preserving}. 
Then there exists some constant $C>1$ such that for any $\varepsilon\in(0,1)$
and any $n>CK^3\varepsilon^{-3}\max\{d_*(\log^3 d_*+\log L),\log^2L,\log^3N\}$, 
we have
\[
  \Delta\le\varepsilon
\]
with probability at least 
\[1-\exp\left(-C^{-1}\big(\sqrt{d_*^2+K^{-2}\varepsilon^2n}-d_*\big)\right).\]
\end{corollary}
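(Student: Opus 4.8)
The plan is to reproduce the three-step scheme used for Corollary \ref{cor:JL-implies-SRIP}—reduction to a single subspace, invocation of Theorem \ref{thm:cov-est-to-subspace-rip}, and a union bound—while paying attention to the peculiar shape of the probability bound coming from Lemma \ref{lem:BOS-orthonormal-preserving}.

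First, following the argument at the beginning of this section, for each pair $1\le i<j\le L$ I would form the sum $\set X_{ij}=\set X_i+\set X_j$, which is a subspace of dimension at most $d_i+d_j\le 2d_*$. Let $\U_{ij}$ be an orthonormal basis of $\set X_{ij}$. If $\mphi$ acts as a $\delta$-near-isometry on $\set X_{ij}$, i.e. $1-\delta<s_{\min}^2(\mphi\U_{ij})\le s_{\max}^2(\mphi\U_{ij})<1+\delta$, then Theorem \ref{thm:cov-est-to-subspace-rip} applied to $\set X_i,\set X_j\subseteq\set X_{ij}$ gives $|\aff^2(\set Y_i,\set Y_j)-\aff^2(\set X_i,\set X_j)|\le C(\min\{d_i,d_j\}-\aff^2(\set X_i,\set X_j))\delta$. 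Since $0\le\min\{d_i,d_j\}-\aff^2(\set X_i,\set X_j)\le\max\{d_i,d_j\}-\aff^2(\set X_i,\set X_j)$, dividing by the denominator defining $\Delta$ shows that the $(i,j)$-th term is at most $C\delta$ (the degenerate case of a vanishing denominator forces the numerator to vanish as well, so the term is $0$). Hence choosing $\delta=\varepsilon/C$ forces $\Delta\le\varepsilon$ on the event that $\mphi$ is a $\delta$-near-isometry on every $\set X_{ij}$.

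Next I would apply Lemma \ref{lem:BOS-orthonormal-preserving} to each $\set X_{ij}$ with parameter $\delta=\varepsilon/C$ and dimension $2d_*$; since $2d_*\log^3(2d_*)\lesssim d_*\log^3 d_*$, the hypothesis $n>C'K^3\delta^{-3}\max\{2d_*\log^3(2d_*),\log^3N\}$ is subsumed by the corollary's assumption on $n$ after adjusting the universal constant. Each single-pair failure probability is then at most $\exp(-C^{-1}(\sqrt{4d_*^2+K^{-2}\delta^2n}-2d_*))$, and a union bound over the $\binom{L}{2}\le L^2/2$ pairs bounds the total failure probability by $\tfrac{L^2}{2}\exp(-C^{-1}(\sqrt{4d_*^2+K^{-2}\delta^2n}-2d_*))$.

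The one step that needs care—and the main obstacle—is absorbing the $L^2$ prefactor into the clean exponent $\sqrt{d_*^2+K^{-2}\varepsilon^2n}-d_*$ claimed in the statement. Writing $E=\sqrt{4d_*^2+K^{-2}\delta^2n}-2d_*$, I would split into the two regimes $K^{-2}\delta^2n\le 4d_*^2$ and $K^{-2}\delta^2n>4d_*^2$ and use $\sqrt{a+b}-\sqrt{a}=b/(\sqrt{a+b}+\sqrt{a})$ (resp. $\sqrt{x^2+y^2}\ge(x+y)/\sqrt2$) to obtain $E\gtrsim K^{-2}\varepsilon^2n/d_*$ in the first regime and $E\gtrsim K^{-1}\varepsilon\sqrt n$ in the second. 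The terms $d_*\log L$ and $\log^2 L$ in the hypothesis on $n$ are tailored exactly so that, in the respective regimes, $E$ dominates a large multiple of $\log L$; this lets me write $C^{-1}E-2\log L\ge c\,(\sqrt{d_*^2+K^{-2}\varepsilon^2n}-d_*)$ for a suitable constant, whence $\tfrac{L^2}{2}\exp(-C^{-1}E)\le\exp(-c(\sqrt{d_*^2+K^{-2}\varepsilon^2n}-d_*))$. Renaming constants yields the stated bound. Everything except this exponent bookkeeping is identical to the proof of Corollary \ref{cor:JL-implies-SRIP}.
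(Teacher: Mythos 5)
Your proposal is correct and follows exactly the route the paper takes: the paper's own proof is the one-line reduction via the sum subspace $\set X_i+\set X_j$, Lemma \ref{lem:BOS-orthonormal-preserving}, Theorem \ref{thm:cov-est-to-subspace-rip}, and a union bound, and your exponent bookkeeping (absorbing the $L^2$ prefactor using the $d_*\log L$ and $\log^2L$ terms in the hypothesis on $n$) correctly fills in the details the paper leaves implicit.
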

\begin{proof}
By the argument at the beginning of this section, 
this follows from Lemma \ref{lem:BOS-orthonormal-preserving}, Theorem \ref{thm:cov-est-to-subspace-rip} 
and union bound.
\end{proof}

While Lemma \ref{lem:BOS-orthonormal-preserving} follows from standard results on randomly sampled BOS, 
for partial circulant/Toeplitz matrices the situation is more subtle. 
The following lemma will be proved in Section \ref{apd:fast-matrices}
via some modifications of the proof strategy in \cite{Vybiral2011Variant}.
\begin{lemma}\label{lem:circulant-orthonormal-preserving}
Let $\U$ be a matrix whose columns constitute an orthonormal basis 
for a $d$-dimensional subspace in $\mathbb R^N$. 
Let $\vect a$ be a random vector in $\mathbb R^N$ 
with i.i.d. standard complex circular Gaussian entries. 
Let $\mathcal C(\vect a)$ be the circulant matrix generated by $\vect a$, 
i.e. whose first row is $\vect a$.
Choose arbitrarily $n$ rows of $\mathcal C(\vect a)$ 
and form with these rows a new matrix $\A\in\mathbb R^{n\times N}$. 
Let $\mphi=\frac{1}{\sqrt n}\A$. 
Then there exists some constant $C>1$ such that
for any $\varepsilon\in(0,1)$ 
and any $n>C\varepsilon^{-2}\max\{d,\log N\}^2$, 
we have \[1-\varepsilon<s_{\min}^2(\mphi\U)\le s_{\max}^2(\mphi\U)<1+\varepsilon\]
with probability at least
$1-\mathrm e^{-c\varepsilon\sqrt n}$.
\end{lemma}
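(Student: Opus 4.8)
The plan is to recognize that the two-sided singular value bound $1-\varepsilon<s_{\min}^2(\mphi\U)\le s_{\max}^2(\mphi\U)<1+\varepsilon$ is exactly the statement that $\mphi$ acts as a near-isometry on $\set X$, i.e.
\[
  \sup_{\v\in\set X,\,\|\v\|=1}\big|\,\|\mphi\v\|^2-1\,\big|<\varepsilon,
\]
since writing $\v=\U\x$ identifies the unit sphere of $\set X$ with $\mathbb S^{d-1}$ and $\|\mphi\U\x\|^2=\|\mphi\v\|^2$. Thus everything reduces to a uniform deviation bound for the random field $\v\mapsto\|\mphi\v\|^2$ over the unit sphere of a fixed $d$-dimensional subspace.

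First I would exploit the circulant structure. Since $\mathcal C(\vect a)$ is diagonalized by the discrete Fourier transform, for a fixed $\v$ the map $\vect a\mapsto\A\v$ is linear, so $\A\v$ is a Gaussian vector and $\|\mphi\v\|^2=\tfrac1n\,\vect a^{*}\Q_\v\vect a$ is a Gaussian quadratic form (a chaos of order two), where $\Q_\v$ is a positive semidefinite matrix built from the row-restricted Fourier matrix and the diagonal of Fourier coefficients of $\v$. A direct computation gives $\mathbb E\|\mphi\v\|^2=\tfrac1n\operatorname{tr}\Q_\v=\|\v\|^2=1$, which fixes the correct centering, and the spectral quantities controlling the fluctuation are $\|\Q_\v\|_{\rm F}$ (variance proxy) and $\|\Q_\v\|$ (subexponential scale). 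For a single $\v$, the Hanson--Wright inequality for Gaussian chaos then yields $\probP(|\|\mphi\v\|^2-1|>t)\le 2\exp(-c\min\{t^2/\sigma_\v^2,\,t/\beta_\v\})$, with $\sigma_\v^2\sim\|\Q_\v\|_{\rm F}^2/n^2$ and $\beta_\v\sim\|\Q_\v\|/n$.

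To pass from a single vector to the uniform statement I would run a chaining/covering argument in the spirit of suprema of chaos processes (or, following \cite{Vybiral2011Variant}, an explicit decoupling and net argument). The metric driving the chaining is the operator-norm metric on the family $\{\Q_\v\}$, which is Lipschitz in $\v$ through its Fourier coefficients; the entropy of the $d$-dimensional sphere contributes the factor $d$, while controlling the worst-case Fourier ``spikiness'' uniformly across all $N$ frequencies contributes the factor $\log N$. Combining the increment bound with the chaining estimate, and requiring the accumulated deviation to stay below $\varepsilon/2$, produces the sample-complexity threshold $n>C\varepsilon^{-2}\max\{d,\log N\}^2$; the residual tail is then governed by the subexponential (large-deviation) regime of the quadratic chaos, which is precisely what degrades the exponent from the subgaussian $e^{-c\varepsilon^2 n}$ to the claimed $e^{-c\varepsilon\sqrt n}$.

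The main obstacle is the lack of independence among the rows of $\A$: because they are cyclic shifts of a single random vector, the concentration arguments available in the previous examples (Lemmas \ref{lem:JL-orthonormal-preserving} and \ref{lem:BOS-orthonormal-preserving}), which ultimately rest on independence of the rows, do not apply, and one is forced into the quadratic-chaos picture above. Within that picture the delicate point is the uniform control of $\|\Q_\v\|$ over the sphere of $\set X$ --- equivalently, preventing the Fourier energy of directions in $\set X$ from concentrating on a few frequencies --- since it is exactly this worst-case operator-norm term that dictates both the squared sample complexity $\max\{d,\log N\}^2$ and the weaker $\sqrt n$-type probability bound. I expect the bulk of the technical effort to lie in carrying out the chaining with the correct dependence on these spectral quantities, rather than in the single-vector estimate.
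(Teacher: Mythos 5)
Your proposal is correct and follows essentially the same route as the paper: diagonalize the circulant matrix by the DFT so that $\|\mphi\v\|^2$ becomes a Gaussian quadratic form, control it for fixed $\v$ via Hanson--Wright combined with a union bound over the $N$ Fourier coefficients (this is exactly the paper's use of equations (3.3)--(3.7) of Vybiral with (3.8) replaced by Hanson--Wright, and the balancing $t=\varepsilon\sqrt n$ is what produces the $\mathrm e^{-c\varepsilon\sqrt n}$ tail), and then pass to the whole sphere of $\set X$ by a covering argument. The only cosmetic difference is that the paper uses a plain $5^{d}$-point $\tfrac12$-net union bound rather than genuine chaining (the chaining refinement you allude to is exactly what Remark~\ref{rem:subspace-embedding} mentions and deliberately avoids), which already yields the stated threshold $n>C\varepsilon^{-2}(d+\log N)^{2}$.
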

\begin{remark}\label{rem:subspace-embedding}
  The above lemma requires that $n=O(d_*^2)$.
  With a substantial amount of work
  (using some modern results on generic chaining bound of suprema of order-$2$ chaos process, 
  e.g. \cite{Dirksen2015Tail}),
  it is possible to attain a sub-optimal scaling similar to the one in Corollary \ref{cor:BOS-implies-SRIP}, 
  as well as a similar probability bound as in Corollary \ref{cor:BOS-implies-SRIP}. 
  An easier way to improve the scaling is to utilize the result in \cite{Rauhut2012Restricted}, 
  which allows to obtain $n=O(d_*^{3/2})$ with a trade-off in the probability bound
  that becomes $\mathrm e^{-O(n^{1/3})}$; 
  this turns out to be a special case of the aforementioned "harder" treatment.
  We will not pursue these directions here to avoid unnecessary technicality.
\end{remark}
\begin{corollary}\label{cor:circulant-implies-SRIP}
  Let $\mphi$ be as in Lemma \ref{lem:circulant-orthonormal-preserving}. 
  Then there exists some constant $C>1$ such that
  for any $\varepsilon\in(0,1)$ and any $n>C\varepsilon^{-2}\max\{d_*,\log N,\log L\}^2$, we have
  \[
    \Delta\le\varepsilon
  \]
  with probability at least $1-\mathrm e^{-c\varepsilon\sqrt n}$. 
\end{corollary}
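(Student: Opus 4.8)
The plan is to follow the reduction scheme laid out at the beginning of Section \ref{sec:examples}: pass from the pairwise statement about affinity to a statement about the singular values of $\mphi$ restricted to a single subspace, invoke the deterministic Theorem \ref{thm:cov-est-to-subspace-rip}, and close with a union bound over the fewer than $L^2$ pairs.

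First I would fix a pair $(i,j)$ with $i<j$ and form the sum subspace $\set X_{ij}=\set X_i+\set X_j$, whose dimension is at most $d_i+d_j\le 2d_*$. Let $\U$ be an orthonormal basis of $\set X_{ij}$. I would apply Lemma \ref{lem:circulant-orthonormal-preserving} to this $\U$ with near-isometry tolerance $\delta:=\varepsilon/C_0$, where $C_0$ is the constant in Theorem \ref{thm:cov-est-to-subspace-rip}. Since $\dim\set X_{ij}\le 2d_*$, the hypothesis $n>C\delta^{-2}\max\{\dim\set X_{ij},\log N\}^2$ of the lemma is implied by $n>C'\varepsilon^{-2}\max\{d_*,\log N\}^2$ for a suitable $C'$, and the lemma then yields
\[
1-\delta<s_{\min}^2(\mphi\U)\le s_{\max}^2(\mphi\U)<1+\delta
\]
with probability at least $1-\mathrm e^{-c\delta\sqrt n}=1-\mathrm e^{-c'\varepsilon\sqrt n}$.

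On this event Theorem \ref{thm:cov-est-to-subspace-rip}, applied to $\set X_i,\set X_j\subseteq\set X_{ij}$, gives $|\aff^2(\set Y_i,\set Y_j)-\aff^2(\set X_i,\set X_j)|\le C_0\delta(\min\{d_i,d_j\}-\aff^2(\set X_i,\set X_j))$. Because the principal-angle description in Lemma \ref{lem:affinity-compute} forces $\aff^2\le\min\{d_i,d_j\}\le\max\{d_i,d_j\}$, the numerator is bounded by $C_0\delta(\max\{d_i,d_j\}-\aff^2(\set X_i,\set X_j))$, so the $(i,j)$ contribution to $\Delta$ is at most $C_0\delta=\varepsilon$ by our choice of $\delta$.

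Finally I would union-bound over all fewer than $L^2$ pairs, giving an aggregate failure probability of at most $L^2\mathrm e^{-c'\varepsilon\sqrt n}=\mathrm e^{2\log L-c'\varepsilon\sqrt n}$. To fold the combinatorial factor into the exponent one needs $2\log L\le\tfrac{c'}{2}\varepsilon\sqrt n$, i.e. $n\gtrsim\varepsilon^{-2}\log^2 L$; this is precisely why $\log L$ must enter the sample-complexity bound $n>C\varepsilon^{-2}\max\{d_*,\log N,\log L\}^2$, which simultaneously supplies the $\max\{d_*,\log N\}^2$ needed above for every pair. Under this condition the failure probability collapses to $\mathrm e^{-c\varepsilon\sqrt n}$, as claimed. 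I expect no genuine obstacle at the level of the corollary: all the analytic difficulty is quarantined inside Lemma \ref{lem:circulant-orthonormal-preserving} (deferred to the appendix and built on the chaining estimates of \cite{Vybiral2011Variant}), and once that lemma and Theorem \ref{thm:cov-est-to-subspace-rip} are granted, the only care required is in tracking the parameters and in placing $\log^2 L$ correctly in the scaling so that the union bound survives.
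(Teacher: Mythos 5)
Your proposal is correct and is exactly the argument the paper intends: the paper's proof is the one-line invocation of the reduction at the start of Section \ref{sec:examples} (sum subspace of dimension at most $2d_*$), Lemma \ref{lem:circulant-orthonormal-preserving}, Theorem \ref{thm:cov-est-to-subspace-rip}, and a union bound over the $\binom{L}{2}$ pairs, with the $\log L$ term in the sample-size condition absorbing the $L^2$ factor into the exponent. Your write-up just makes the parameter bookkeeping (the rescaling $\delta=\varepsilon/C_0$ and the $\min$ versus $\max$ in the denominator of $\Delta$) explicit.
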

\begin{proof}
By the argument at the beginning of this section, 
this follows from Lemma \ref{lem:circulant-orthonormal-preserving}, Theorem \ref{thm:cov-est-to-subspace-rip} 
and union bound.
\end{proof}
\begin{remark}
  The same is true for Toeplitz matrix since a Toeplitz matrix can be embedded into a circulant matrix
  with twice dimension. 
\end{remark}
\begin{remark}
  In fact, we will prove the Lemma \ref{lem:circulant-orthonormal-preserving} 
  for random vector $\vect a$ 
  with independent complex uniformly-subgaussian entries
  (see Section \ref{apd:jl-property}, Definition \ref{def:subgaussian}). 
  This involves modifying and generalizing the proof in \cite{Vybiral2011Variant}.
\end{remark}

{\color{black}
\subsection{Example: Heavy-Tailed Distributions}}
In practice one may also have to deal with heavy-tailed random matrices. 
Subspace RIP of heavy-tailed random matrices 
is now handy by Theorem \ref{thm:cov-est-to-subspace-rip} 
and standard results in covariance estimation. 
Before presenting these results, we need to set up 
some customary assumptions on the rows of $\mphi$. 
Denote the rows of $\mphi$ 
by $\frac{1}{\sqrt n}\mathbf x_1^{\rm T},\ldots,\frac{1}{\sqrt n}\mathbf x_n^{\rm T}$. 
\begin{enumerate}
  \item $\x_i$'s are centered, i.e. $\mathbb E\x_i=\mathbf 0$.
  \item $\x_i$'s are isotropic, i.e. $\mathbb E\x_i\x_i^{\mathrm T}=\mathbf I$.
  \item $\x_i$'s are independent.
\end{enumerate}
These assumptions (centered, isotropic and independent rows) are quite natural 
and often serve as the default setting in compressed sensing 
and non-asymptotic random matrix theory, 
especially in context of Bai-Yin law; 
see \cite{Vershynin2010Introduction} for examples and further discussions. 
Here we briefly mention that the a row of $\mphi$ can be regarded as 
a linear functional that observes the data vector, 
and independence of rows means that different observations are independent, 
which is reasonable in many applications. 
On the other hand, centeredness and isotropy 
can be fulfilled by a preprocessing step before observing the data, 
thus are not really restrictive. 
Some results in random matrix theory are also valid for weakly-independent rows, 
but such results usually bear considerable technicality 
imposed by the difficulty to quantitively define weak dependence, 
hence are not discussed here.

We will deal with two important types of heavy-tailed distributions: 
those with finite moments, and log-concave ensembles.  
\paragraph{Finite moments} 
  Let $\eta>1$, $C'\ge1$ be constants and $\x$ be an $N$-dimensional random
  vectors which is centered and isotropic. 
  The random vector $\x$ is said to 
  satisfy the strong regularity condition \cite{Srivastava2013Covariance} if  
  \begin{equation}\label{eqn:strong-regularity}
    \probP(\|\projP\x\|^2>t)\le C't^{-\eta},\quad\text{for $t>C'\operatorname{rank}\projP$}.
  \end{equation}
  for every orthogonal projection $\projP$ of rank at most $d$ in $\mathbb R^N$, 
  where $C$ is some universal constant.
  This condition is satisfied, for example, by those $\x$ whose entries are independent 
  and have uniformly bounded $(4+\varepsilon)$-moments. 
  We will discuss the meaning of this condition later.
  \begin{lemma}\label{lem:strong-regularity-orthonormal-preserving}
    Let $\U$ be a matrix whose columns constitute an orthonormal basis 
    for a $d$-dimensional subspace in $\mathbb R^N$. 
    Assume $\x_1,\ldots,\x_n$ are independent centered isotropic random vectors in $\mathbb R^N$ 
    that satisfy the strong regularity condition \eqref{eqn:strong-regularity}. 
    Let $\mphi$ be a $n\times N$ random matrix whose rows are 
    $\frac1{\sqrt n}\x_1^{\mathrm T},\ldots,\frac1{\sqrt n}\x_n^{\mathrm T}$.
    Then there exists a polynomial function $\mathrm{poly}(\cdot)$ 
    whose coefficients depend only on $\eta$ and $C'$, 
    such that whenever $\varepsilon\in(0,1)$ and $n>\mathrm{poly}(\varepsilon^{-1})d$, 
    we have
    \[1-\varepsilon<s_{\min}^2(\mphi\U)\le s_{\max}^2(\mphi\U)<1+\varepsilon\]
    with probability at least $1-\varepsilon$.
  \end{lemma}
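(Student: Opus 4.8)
The plan is to recognize that, in the basis furnished by $\U$, the squared singular values $s_{\min}^2(\mphi\U)$ and $s_{\max}^2(\mphi\U)$ are precisely the extreme eigenvalues of a sample covariance matrix, so that the entire statement becomes an instance of the covariance estimation problem studied in \cite{Srivastava2013Covariance}. First I would set $\y_i=\U^{\rm T}\x_i$ and observe that
\[
(\mphi\U)^{\rm T}(\mphi\U)=\U^{\rm T}\Big(\tfrac1n\sum_{i=1}^n\x_i\x_i^{\rm T}\Big)\U=\frac1n\sum_{i=1}^n\y_i\y_i^{\rm T}.
\]
Hence the two-sided bound $1-\varepsilon<s_{\min}^2(\mphi\U)\le s_{\max}^2(\mphi\U)<1+\varepsilon$ is equivalent to the single operator-norm estimate
\[
\Big\|\tfrac1n\sum_{i=1}^n\y_i\y_i^{\rm T}-\mathbf I_d\Big\|<\varepsilon.
\]
This reduction eliminates both the matrix $\U$ and the ambient dimension $N$, leaving a purely $d$-dimensional covariance estimation problem.

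Next I would verify that the vectors $\y_i\in\mathbb R^d$ inherit the hypotheses required by the covariance estimation bound. Independence and centeredness are immediate, and isotropy follows from $\mathbb E\y_i\y_i^{\rm T}=\U^{\rm T}(\mathbb E\x_i\x_i^{\rm T})\U=\U^{\rm T}\U=\mathbf I_d$ since $\U$ has orthonormal columns. The only delicate point is the strong regularity condition \eqref{eqn:strong-regularity}. The key observation is that for any orthogonal projection $\Q$ of rank $r\le d$ in $\mathbb R^d$, the lift $\U\Q\U^{\rm T}$ is again an orthogonal projection, of the same rank $r$, in $\mathbb R^N$: it is symmetric and $(\U\Q\U^{\rm T})^2=\U\Q(\U^{\rm T}\U)\Q\U^{\rm T}=\U\Q\U^{\rm T}$. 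Since $\|\Q\y_i\|^2=\|\U\Q\U^{\rm T}\x_i\|^2$, the strong regularity of $\x_i$ over projections of rank at most $d$ transfers verbatim, with the same constants $\eta,C'$ and the same threshold $t>C'\operatorname{rank}\Q$, to strong regularity of $\y_i$ in $\mathbb R^d$.

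Finally I would invoke the covariance estimation theorem of \cite{Srivastava2013Covariance}: for independent, centered, isotropic vectors in $\mathbb R^d$ obeying \eqref{eqn:strong-regularity}, as soon as the sample size exceeds a threshold of the shape $\mathrm{poly}(\varepsilon^{-1})d$ with coefficients depending only on $\eta$ and $C'$, the sample covariance lies within $\varepsilon$ of the identity in operator norm with the stated confidence. Combined with the reduction of the first paragraph, this is exactly the asserted bound on $s_{\min}^2(\mphi\U)$ and $s_{\max}^2(\mphi\U)$; the polynomial $\mathrm{poly}(\varepsilon^{-1})$ and the $1-\varepsilon$ probability are read off directly from that theorem.

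The hard part is not located in our own manipulations, which are short, but is wholly absorbed into the cited result: controlling the extreme eigenvalues of a heavy-tailed sample covariance matrix with only finitely many moments is genuinely difficult and is the content of \cite{Srivastava2013Covariance}. Within our argument the only step demanding vigilance is the rank-and-threshold bookkeeping in the transfer of \eqref{eqn:strong-regularity}: one must confirm that a projection $\Q$ in $\mathbb R^d$ lifts to a projection of the \emph{same} rank in $\mathbb R^N$, so that the threshold $C'\operatorname{rank}\Q$ is preserved, and that this rank never exceeds $d$, which is exactly the regime in which $\x_i$ is assumed strongly regular.
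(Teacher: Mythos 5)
Your proposal follows essentially the same route as the paper's own proof: reduce to a $d$-dimensional covariance estimation problem via $\y_i=\U^{\rm T}\x_i$, transfer the strong regularity condition through the rank-preserving lift of projections (the paper writes the projection as $\V\V^{\rm T}$ and works with $\U\V$, which is equivalent to your lift $\U\Q\U^{\rm T}$), and invoke the main theorem of \cite{Srivastava2013Covariance}. The one detail you elide is that the cited result is stated as an expectation bound $\mathbb E\left\|\frac1n\sum_{i=1}^n\y_i\y_i^{\rm T}-\mathbf I\right\|\le\varepsilon$, so the $1-\varepsilon$ probability is not ``read off directly'' but obtained by a final application of Markov/Chebyshev after adjusting the polynomial, exactly as the paper does.
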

  \begin{proof}[Sketch]
    Set $\y_i=\U^{\mathrm T}\x_i$, 
    then it is easy to check that $\y_i$'s are independent, centered and isotropic, 
    and 
    \[\|\U^{\mathrm T}\mphi^{\mathrm T}\mphi\U-\vect I\|=\left\|\frac1n\sum_{i=1}^n\y_i\y_i^{\mathrm T}-\vect I\right\|.\]
    One may verify that $\y_i$ satisfies strong regularity condition
    if $\x_i$ satisfies strong regularity condition, 
    and the main theorem in \cite{Srivastava2013Covariance} yields the desired results. 
    Details are postponed to Section \ref{apd:heavy-tails}.
  \end{proof}
  \begin{corollary}\label{cor:strong-regularity-SRIP}
    Let $\mphi$ be as in Lemma \ref{lem:strong-regularity-orthonormal-preserving}.
    Then there exists a polynomial function $\mathrm{poly}(\cdot)$ 
    whose coefficients depend only on $\eta$ and $C'$, 
    such that whenever $\varepsilon\in(0,1)$ and $n>\mathrm{poly}(\varepsilon^{-1})Ld_*$, we have
    \begin{equation*}
      \Delta\le\varepsilon
    \end{equation*}
    with probability at least $1-\varepsilon$.
  \end{corollary}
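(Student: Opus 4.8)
The plan is to follow the same template as the preceding corollaries: turn the statement about all $\binom{L}{2}$ pairs into a per-pair near-isometry statement, invoke Theorem \ref{thm:cov-est-to-subspace-rip} for each pair, and close with a union bound. First, for each pair $i<j$ I would form the sum subspace $\set X_{ij}=\set X_i+\set X_j$, which has dimension at most $d_i+d_j\le 2d_*$, and let $\U_{ij}$ be an orthonormal basis matrix for it. Because $\set X_i,\set X_j\subseteq\set X_{ij}$, the hypothesis of Theorem \ref{thm:cov-est-to-subspace-rip} is met as soon as $\mphi$ acts as a near-isometry on $\set X_{ij}$, which is exactly the event controlled by Lemma \ref{lem:strong-regularity-orthonormal-preserving} applied with $\U=\U_{ij}$ and $d\le 2d_*$.

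Applying Lemma \ref{lem:strong-regularity-orthonormal-preserving} to $\set X_{ij}$ with a parameter $\delta\in(0,1/4)$ to be fixed, on a good event $E_{ij}$ of probability at least $1-\delta$ we have $1-\delta<s_{\min}^2(\mphi\U_{ij})\le s_{\max}^2(\mphi\U_{ij})<1+\delta$. On $E_{ij}$, Theorem \ref{thm:cov-est-to-subspace-rip} yields $|\aff^2(\set Y_i,\set Y_j)-\aff^2(\set X_i,\set X_j)|\le C(\min\{d_i,d_j\}-\aff^2(\set X_i,\set X_j))\delta$. Dividing by $\max\{d_i,d_j\}-\aff^2(\set X_i,\set X_j)$ and using $0\le\min\{d_i,d_j\}-\aff^2(\set X_i,\set X_j)\le\max\{d_i,d_j\}-\aff^2(\set X_i,\set X_j)$ --- the key inequality $\aff^2(\set X_i,\set X_j)\le\min\{d_i,d_j\}$ coming from Lemma \ref{lem:affinity-compute} iii), since each principal cosine $\lambda_k\in[0,1]$ --- I conclude that the $(i,j)$-term defining $\Delta$ is at most $C\delta$.

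It remains to choose $\delta$ and take a union bound. Two constraints must hold: the accuracy constraint $C\delta\le\varepsilon$ (together with $\delta<1/4$, as required by the Theorem), and the union-bound constraint $\binom{L}{2}\delta\le\varepsilon$, which gives $\probP(\bigcap_{i<j}E_{ij})\ge 1-\varepsilon$. The second dominates and forces $\delta\asymp\varepsilon/L^2$; with this choice Lemma \ref{lem:strong-regularity-orthonormal-preserving} requires $n>\mathrm{poly}(\delta^{-1})\cdot 2d_*$, and since $\delta^{-1}\asymp L^2\varepsilon^{-1}$ the resulting threshold is polynomial in $\varepsilon^{-1}$ and in $L$, times $d_*$, which is what the form $n>\mathrm{poly}(\varepsilon^{-1})Ld_*$ records (tracking the exact power of $L$ is a routine computation). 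On $\bigcap_{i<j}E_{ij}$ every term of $\Delta$ is at most $C\delta\le\varepsilon$, hence $\Delta\le\varepsilon$, completing the argument.

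The proof is essentially bookkeeping, since the analytic content sits in Lemma \ref{lem:strong-regularity-orthonormal-preserving} (via the Srivastava--Vershynin covariance bound) and in Theorem \ref{thm:cov-est-to-subspace-rip}. The one genuinely structural feature to keep in mind --- and the main obstacle to a sharper statement --- is that heavy-tailed covariance estimation gives only polynomial, Markov-type concentration, so Lemma \ref{lem:strong-regularity-orthonormal-preserving} couples accuracy and failure probability into a single parameter. Driving the per-pair failure probability down to $\varepsilon/L^2$ for the union bound therefore costs a polynomial factor in $L$ in the sample size, in contrast with the merely logarithmic $\log L$ cost in the exponential Johnson--Lindenstrauss and BOS examples. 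The only other point needing a line of care is the $\min$-versus-$\max$ comparison in the denominator, which is what lets the favorable $(d_1-\aff^2)$ factor produced by Theorem \ref{thm:cov-est-to-subspace-rip} survive into the bound on $\Delta$.
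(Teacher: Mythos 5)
Your route is the one the paper explicitly warns against: the proof in the paper begins ``Note that this does \emph{not} follow from union bound.'' The difficulty is quantitative, and you in fact notice it in your last paragraph without drawing the consequence. In Lemma \ref{lem:strong-regularity-orthonormal-preserving} the accuracy and the failure probability are the \emph{same} parameter, so to make the union bound over $\binom{L}{2}$ pairs close you must run the lemma at level $\delta\asymp\varepsilon/L^2$, which forces $n>\mathrm{poly}(\delta^{-1})\cdot 2d_*\asymp\mathrm{poly}(L^2\varepsilon^{-1})\,d_*$. Since the Srivastava--Vershynin polynomial has degree strictly greater than one, this is a threshold growing like $L^{2k}d_*$ for some $k>1$, \emph{not} like $Ld_*$. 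The corollary as stated promises $n>\mathrm{poly}(\varepsilon^{-1})Ld_*$ with the polynomial's coefficients depending only on $\eta$ and $C'$, i.e.\ dependence on $L$ that is exactly linear and outside the polynomial. Your argument therefore proves a genuinely weaker statement; the parenthetical ``tracking the exact power of $L$ is a routine computation'' hides the fact that the power you get is wrong.

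The paper's proof sidesteps the union bound entirely: take $\set X=\set X_1+\cdots+\set X_L$, a single subspace of dimension at most $Ld_*$, apply Lemma \ref{lem:strong-regularity-orthonormal-preserving} \emph{once} to an orthonormal basis of $\set X$ (at level comparable to $\varepsilon$, requiring only $n>\mathrm{poly}(\varepsilon^{-1})Ld_*$), and note that on this single good event $\mphi$ is a near-isometry on $\set X$, hence on every pair $\set X_i,\set X_j\subseteq\set X$ simultaneously; Theorem \ref{thm:cov-est-to-subspace-rip} then applies deterministically to each pair. This trades a factor $L$ in the subspace dimension for the $\mathrm{poly}(L)$ cost of the union bound, which is the right trade here precisely because the concentration is only polynomial. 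Your handling of the $\min$-versus-$\max$ denominator and the reduction to sum subspaces is otherwise fine, and your approach would be the correct template for the exponential-tail examples (Corollaries \ref{cor:JL-implies-SRIP}, \ref{cor:BOS-implies-SRIP}, \ref{cor:circulant-implies-SRIP}, \ref{cor:log-concave-SRIP}), where a union bound costs only $\log L$ in the exponent.
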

  \begin{proof}
    Note that this does \emph{not} follow from union bound. 
    Instead, we take $\set X$ to be the sum of all $\set X_i$, 
    then $\dim\set X\le Ld_*$.
    Now the conclusion follows from Lemma \ref{lem:strong-regularity-orthonormal-preserving} 
    (applied to an orthonormal basis of $\set X$)
    and Theorem \ref{thm:cov-est-to-subspace-rip}. 
  \end{proof}
  \begin{remark}
    A recent result \cite{Xu2019Convergence} of some of the authors show that 
    the partial strong regularity condition 
    can be further relaxed to that \eqref{eqn:strong-regularity}
    holds for any $\projP$ with rank $\lceil Cd_*\rceil$. 
    Though this assumption is logically weaker, 
    it does not seem to yield sensible improvement in our case.
  \end{remark}

  An intriguing feature of these results is that strong regularity condition, 
  which in its original context seems to be an artifact 
  due to the spectral sparsfier method adopted in \cite{Srivastava2013Covariance}, 
  arises very naturally in our case. 
  In fact, to control the singular values of $\mphi\U$ 
  in Lemma \ref{lem:strong-regularity-orthonormal-preserving}, 
  it is necessary to have a reasonably fast tail-decay for $\|\y_i\|$ 
  (see \cite{Bai1988Note, Srivastava2013Covariance, Tikhomirov2017Sample} for discussions), 
  but $\|\y_i\|$ is just the norm of the orthogonal projection of $\x_i$ onto 
  the arbitrary $d$-dimensional subspace spanned by the columns of $\U$; 
  thus it is necessary to assume that the tails of all orthogonal projections of $\x_i$ 
  decay fast enough, 
  which is exactly what is meant by strong regularity condition.

  If $\x$ has independent entries, 
  \eqref{eqn:strong-regularity} is satisfied when its entries have uniformly bounded $4\eta$-th moments. 
  (For instance, see Proposition 1.3 in \cite{Srivastava2013Covariance}). 
  Hence one may regard strong regularity condition as a finite $4+\epsilon$ moment assumption. 
  Note that it has been known for long that finite fourth moments 
  are necessary for covariance estimation \cite{Bai1988Note}.

\paragraph{Log-concave ensembles}
  If the row vectors of $\mphi$ have better tail behavior, 
  stronger probability bounds can be obtained. 
  A class of distributions with heavy, but not too heavy tails 
  that plays a role in geometric functional analysis is log-concave distribution. 
  A probability distribution $\probP$ on $\mathbb R^N$ is said to be log-concave if 
  \[\probP(\theta A+(1-\theta)B)\ge\probP(A)^\theta\probP(B)^{1-\theta}\]
  for any measurable sets $A$, $B$ in $\mathbb R^N$ and any $\theta\in[0,1]$. 
  It follows from definition that the marginal of a log-concave distribution is again log-concave. 
  Typical examples of log-concave distributions include 
  the uniform distribution on a convex body (e.g. a Euclidean ball)
  or more generally, the distribution with density $C\exp(-f(\x))$, 
  where $f$ is a convex function\footnote{By setting $f$ to be the indicator function 
  of a convex body we recover the case of uniform distribution on a convex body.}. 
  This subsumes Laplace, Gaussian, Gamma, Beta, Weibull 
  and Logistic distributions (in suitable region of parameters).

  \begin{lemma}\label{lem:log-concave-orthonormal-preserving}
    Let $\U$ be a matrix whose columns constitute an orthonormal basis 
    for a $d$-dimensional subspace in $\mathbb R^N$. 
    Assume $\x_1,\ldots,\x_n$ are independent random vectors in $\mathbb R^N$ 
    that are centered, isotropic and log-concave. 
    Let $\mphi$ be a $n\times N$ random matrix whose rows are
    $\frac1{\sqrt n}\x_1^{\mathrm T},\ldots,\frac1{\sqrt n}\x_n^{\mathrm T}$. 
    Then there exists some universal constants $c>0$ and $C>1$ such that 
    for any $\varepsilon\in(0,1)$ and any $n>C\varepsilon^{-2}d$, 
    we have \[1-\varepsilon<s_{\min}^2(\mphi\U)\le s_{\max}^2(\mphi\U)<1+\varepsilon\]
    with probability at least $1-\mathrm e^{-c\varepsilon\sqrt n}$.
  \end{lemma}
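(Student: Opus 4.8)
The plan is to mirror, essentially verbatim, the reduction used in the sketch of Lemma~\ref{lem:strong-regularity-orthonormal-preserving}, replacing the strong-regularity input by a covariance-estimation theorem tailored to log-concave ensembles. First I would set $\y_i=\U^{\mathrm T}\x_i$ for $1\le i\le n$. Since $\U^{\mathrm T}\U=\vect I_d$, the vectors $\y_i$ are again independent, centered ($\mathbb E\y_i=\U^{\mathrm T}\mathbb E\x_i=\mathbf 0$), and isotropic ($\mathbb E\y_i\y_i^{\mathrm T}=\U^{\mathrm T}(\mathbb E\x_i\x_i^{\mathrm T})\U=\U^{\mathrm T}\U=\vect I_d$). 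The one point specific to this example is that the $\y_i$ must remain \emph{log-concave}: each $\x_i$ is a full-dimensional isotropic log-concave vector, so it has a log-concave density on $\mathbb R^N$, and $\U^{\mathrm T}\colon\mathbb R^N\to\mathbb R^d$ is a surjective linear map, whence the push-forward $\y_i=\U^{\mathrm T}\x_i$ has a log-concave density on $\mathbb R^d$ by the stability of log-concavity under linear images (a consequence of the Pr\'ekopa--Leindler inequality). Thus $\y_1,\dots,\y_n$ form an isotropic log-concave ensemble in $\mathbb R^d$.

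Next I would rephrase the target event as an operator-norm deviation of the empirical covariance. The rows of $\mphi\U$ are $\frac1{\sqrt n}\y_i^{\mathrm T}$, so
\[
  (\mphi\U)^{\mathrm T}(\mphi\U)=\U^{\mathrm T}\mphi^{\mathrm T}\mphi\U=\frac1n\sum_{i=1}^n\y_i\y_i^{\mathrm T},
\]
and the squared singular values of $\mphi\U$ are exactly the eigenvalues of this $d\times d$ Gram matrix. Consequently the event $\{1-\varepsilon<s_{\min}^2(\mphi\U)\le s_{\max}^2(\mphi\U)<1+\varepsilon\}$ coincides with $\{\|\frac1n\sum_{i=1}^n\y_i\y_i^{\mathrm T}-\vect I_d\|<\varepsilon\}$ in operator norm, since for a symmetric positive semidefinite $\vect M$ one has $\|\vect M-\vect I_d\|<\varepsilon$ precisely when every eigenvalue of $\vect M$ lies in $(1-\varepsilon,1+\varepsilon)$.

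It then remains to bound the probability that $\frac1n\sum\y_i\y_i^{\mathrm T}$ deviates from $\vect I_d$ by more than $\varepsilon$, which is exactly the quantitative covariance-estimation theorem for log-concave ensembles of Adamczak--Litvak--Pajor--Tomczak-Jaegermann \cite{Adamczak2010Quantitative,Adamczak2011Sharp}. That result furnishes universal constants for which, once $n\ge C\varepsilon^{-2}d$, the expected deviation $\mathbb E\|\frac1n\sum\y_i\y_i^{\mathrm T}-\vect I_d\|$ is at most $\varepsilon/2$ and the deviation concentrates around its mean with a bad-event probability at most $\mathrm e^{-c\sqrt d}$. The only genuine bookkeeping is the conversion of the exponent into the claimed form: under $n>C\varepsilon^{-2}d$ one has $\sqrt d\le C^{-1/2}\varepsilon\sqrt n$, so $\mathrm e^{-c\sqrt d}\le\mathrm e^{-c'\varepsilon\sqrt n}$, which yields the stated probability $1-\mathrm e^{-c\varepsilon\sqrt n}$. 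I expect the main (indeed the only) obstacle to be invoking the cited covariance estimate with the correct normalization and verifying that the factor $\sqrt d$ in its tail upgrades to the $\varepsilon\sqrt n$ scaling demanded here; the reduction itself is routine, and the log-concavity of $\y_i$ is the one structural fact that must be checked with care. The full invocation and absorption of constants will be deferred to Section~\ref{apd:heavy-tails}.
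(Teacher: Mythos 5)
Your reduction is the same as the paper's up to the last step: pass to $\y_i=\U^{\mathrm T}\x_i$, check that these are independent, centered, isotropic and log-concave (marginals of log-concave distributions are log-concave), identify the event with $\bigl\|\frac1n\sum_{i=1}^n\y_i\y_i^{\mathrm T}-\vect I\bigr\|<\varepsilon$, and invoke the Adamczak et al.\ covariance estimate (Theorem \ref{thm:adamczak}). All of that is fine. The gap is in your final ``bookkeeping'' step. Theorem \ref{thm:adamczak} gives failure probability $2\mathrm e^{-c\sqrt d}$ where $d$ is the dimension of the ambient space of the $\y_i$. You claim that $n>C\varepsilon^{-2}d$ implies $\sqrt d\le C^{-1/2}\varepsilon\sqrt n$ and hence $\mathrm e^{-c\sqrt d}\le\mathrm e^{-c'\varepsilon\sqrt n}$; but the inequality goes the wrong way. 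From $\sqrt d\le C^{-1/2}\varepsilon\sqrt n$ you get $\mathrm e^{-c\sqrt d}\ge\mathrm e^{-cC^{-1/2}\varepsilon\sqrt n}$, i.e.\ the bound you actually obtain is \emph{weaker} than the one claimed. Concretely, if $d=1$ and $n$ is huge, applying the theorem directly only yields failure probability of constant order $\mathrm e^{-c}$, nowhere near $\mathrm e^{-c\varepsilon\sqrt n}$. So the lemma as stated does not follow from a direct application of the covariance estimate to the $d$-dimensional marginals.

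The paper's fix --- the ``tricky application'' alluded to in the main text --- is to enlarge the subspace before applying the theorem: embed the given $d$-dimensional subspace into a subspace of dimension $d'=\lfloor C^{-2}\varepsilon^2 n\rfloor\ge d$ (possible because $n>C\varepsilon^{-2}d$ guarantees $d'\ge d$, with the constant adjusted), apply Theorem \ref{thm:adamczak} to an orthonormal basis $\U'$ of the larger subspace so that $C\sqrt{d'/n}\le\varepsilon$ still holds, and note that the near-isometry of $\mphi$ on the larger subspace implies the near-isometry on the original one (the singular values of $\mphi\U$ are sandwiched between those of $\mphi\U'$). Since now $\sqrt{d'}\asymp\varepsilon\sqrt n$, the failure probability $2\mathrm e^{-c\sqrt{d'}}$ becomes $\mathrm e^{-c'\varepsilon\sqrt n}$ as required. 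You should add this dimension-inflation step; without it your argument only proves the lemma with probability $1-2\mathrm e^{-c\sqrt d}$, which is genuinely weaker when $d\ll\varepsilon^2 n$.
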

  This follows from a tricky application of 
  a famous theorem from \cite{Adamczak2010Quantitative,Adamczak2011Sharp}. 
  Details are postponed to Section \ref{apd:heavy-tails}.
  \begin{corollary}\label{cor:log-concave-SRIP}
    Let $\mphi$ be as in Lemma \ref{lem:log-concave-orthonormal-preserving}
    Then there exists some universal constants $c>0$ and $C>0$ such that 
    whenever $\varepsilon\in(0,1/2)$ and $n>C\varepsilon^{-2}\max\{d_*, \log^2L\}$, 
    we have 
    \begin{equation}
      \Delta\le\varepsilon
    \end{equation}
    with probability at least $1-\mathrm e^{-c\varepsilon\sqrt n}$.
  \end{corollary}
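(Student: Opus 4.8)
The plan is to combine Lemma \ref{lem:log-concave-orthonormal-preserving} and Theorem \ref{thm:cov-est-to-subspace-rip} through a union bound over the $\binom{L}{2}$ pairs, exactly as in Corollaries \ref{cor:JL-implies-SRIP} and \ref{cor:BOS-implies-SRIP}, and \emph{not} by passing to the sum of all $\set X_i$ as in Corollary \ref{cor:strong-regularity-SRIP}. The reason for this difference is that the failure probability in Lemma \ref{lem:log-concave-orthonormal-preserving} is exponentially small in $\varepsilon\sqrt n$, which is strong enough to absorb the $\binom{L}{2}$ terms of a union bound; we therefore pay only a $\log^2 L$ penalty in the requirement on $n$, rather than the far heavier $Ld_*$ that was forced in the strong-regularity case by its merely $1-\varepsilon$ single-pair guarantee.

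First, for each pair $1\le i<j\le L$ I would set $\set Z_{ij}=\set X_i+\set X_j$, so that $\dim\set Z_{ij}\le d_i+d_j\le 2d_*$. Applying Lemma \ref{lem:log-concave-orthonormal-preserving} to an orthonormal basis of $\set Z_{ij}$ with isometry parameter $\varepsilon/C_0$, where $C_0$ is the constant of Theorem \ref{thm:cov-est-to-subspace-rip}, the hypothesis $n>C\varepsilon^{-2}d_*$ (with $C$ large enough to absorb $C_0$ and the factor $2$) guarantees that $\mphi$ is a near-isometry on $\set Z_{ij}$ with parameter $\varepsilon/C_0$, with probability at least $1-\mathrm e^{-c_0\varepsilon\sqrt n}$ for a suitable $c_0>0$. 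On this event Theorem \ref{thm:cov-est-to-subspace-rip} applied to $\set X_i,\set X_j\subseteq\set Z_{ij}$ gives $|\aff^2(\set Y_i,\set Y_j)-\aff^2(\set X_i,\set X_j)|\le(\min\{d_i,d_j\}-\aff^2(\set X_i,\set X_j))\varepsilon$; dividing by $\max\{d_i,d_j\}-\aff^2(\set X_i,\set X_j)\ge\min\{d_i,d_j\}-\aff^2(\set X_i,\set X_j)\ge 0$ shows the corresponding term of $\Delta$ is at most $\varepsilon$. Taking the maximum over all pairs then yields $\Delta\le\varepsilon$ on the intersection of these good events.

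The remaining and only delicate step is the probability bookkeeping. By the union bound the chance that $\mphi$ fails the near-isometry condition on some $\set Z_{ij}$ is at most $\binom{L}{2}\mathrm e^{-c_0\varepsilon\sqrt n}\le\mathrm e^{2\log L-c_0\varepsilon\sqrt n}$. The hypothesis $n>C\varepsilon^{-2}\log^2 L$ gives $\varepsilon\sqrt n\ge\sqrt C\,\log L$, so for $C$ large enough the term $2\log L$ is dominated by half of $c_0\varepsilon\sqrt n$, leaving a failure probability of at most $\mathrm e^{-c\varepsilon\sqrt n}$ with $c=c_0/2$. This is precisely where $\log^2 L$ (rather than $\log L$) must enter: because the exponent scales like $\sqrt n$ rather than $n$, beating the $L^2$ factor of the union bound forces $\sqrt n\gtrsim\varepsilon^{-1}\log L$. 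No step here is genuinely hard; the care lies only in threading the isometry parameter, the constant of Theorem \ref{thm:cov-est-to-subspace-rip}, and the union-bound constants through consistently.
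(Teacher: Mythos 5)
Your proposal is correct and follows essentially the same route as the paper: the paper's proof is exactly ``apply Lemma \ref{lem:log-concave-orthonormal-preserving} to the pairwise sums $\set X_i+\set X_j$ (dimension at most $2d_*$), invoke Theorem \ref{thm:cov-est-to-subspace-rip} on the good event, and union bound over the $\binom{L}{2}$ pairs,'' with the $\log^2L$ term in the sample-size condition arising precisely because the exponent $\varepsilon\sqrt n$ must dominate $2\log L$. Your bookkeeping of the isometry parameter $\varepsilon/C_0$ and the constants is consistent with what the paper leaves implicit.
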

  \begin{proof}
    By the argument at the beginning of this section, 
    this follows from Lemma \ref{lem:log-concave-orthonormal-preserving}, Theorem \ref{thm:cov-est-to-subspace-rip} 
    and union bound.
  \end{proof}
\begin{remark}
The $e^{-\Omega(\sqrt n)}$ probability bound is optimal
due to thin shell probability of log-concave ensembles, 
see \cite{Guedon2014Concentration}.
\end{remark}

\section{Proof of Theorem \ref{thm:cov-est-to-subspace-rip}}\label{sec:proof}

\subsection{Some First Consequences of \eqref{eqn:orthonormal-perturbation-bound}}
Recall that \eqref{eqn:orthonormal-perturbation-bound} means 
$\mphi$ acts as a near-isometry on $\set X$. 
An immediate consequence of this assumption is:
\begin{proposition}\label{prop:1d-orthogonal-preserving}
Let $\set X$ be a subspace of $\mathbb R^N$ 
and $\U$ be a matrix whose columns constitute an orthonormal basis of $\set X$. 
Let $\mphi$ be a $n\times N$ matrix such that \eqref{eqn:orthonormal-perturbation-bound} holds 
for some $\delta\in(0,1)$. 
Then we have 
\begin{equation*}
  \sqrt{1-\delta}\|\u\|\le\|\mphi\u\|\le\sqrt{1+\delta}\|\u\|
\end{equation*}
for any $\u\in\set X$. 
In particular, if $\delta\in(0,1/4)$ we have $\sqrt{\frac34}\|\u\|\le\|\mphi\u\|\le\sqrt{\frac54}\|\u\|$.

Moreover, assume $\set X_1$, $\set X_2$ are two subspaces of $\set X$ 
which are orthogonal to each other, 
and let $\U_1$ (resp. $\U_2$) be a matrix 
whose columns constitute an orthonormal basis of $\set X_1$ (resp. $\set X_2$), 
then
\begin{equation*}
  \|\U_2^{\mathrm T}\mphi^{\mathrm T}\mphi\U_1\|\le\delta.
\end{equation*}
\end{proposition}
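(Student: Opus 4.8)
The plan is to reduce both claims to the singular-value bound \eqref{eqn:orthonormal-perturbation-bound} by expressing vectors of $\set X$ in the orthonormal coordinates furnished by $\U$. First I would establish the two-sided norm bound. Any $\u\in\set X$ may be written uniquely as $\u=\U\x$ with $\x\in\mathbb R^d$; since $\U$ has orthonormal columns, $\|\u\|=\|\U\x\|=\|\x\|$. Then $\|\mphi\u\|=\|(\mphi\U)\x\|$ lies between $s_{\min}(\mphi\U)\|\x\|$ and $s_{\max}(\mphi\U)\|\x\|$ by the variational characterization of the extreme singular values, and invoking \eqref{eqn:orthonormal-perturbation-bound} together with $\|\x\|=\|\u\|$ gives $\sqrt{1-\delta}\|\u\|\le\|\mphi\u\|\le\sqrt{1+\delta}\|\u\|$. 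The specialization to $\delta\in(0,1/4)$ is then immediate from the monotonicity of $t\mapsto\sqrt t$: since $1-\delta\ge 3/4$ and $1+\delta\le 5/4$, we have $\sqrt{1-\delta}\ge\sqrt{3/4}$ and $\sqrt{1+\delta}\le\sqrt{5/4}$.

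The heart of the proposition is the ``moreover'' part, which I would attack through polarization. Writing the spectral norm as a bilinear supremum,
\[
\|\U_2^{\mathrm T}\mphi^{\mathrm T}\mphi\U_1\|=\sup\{|\langle\mphi\u_1,\mphi\u_2\rangle|:\u_1\in\set X_1,\ \u_2\in\set X_2,\ \|\u_1\|=\|\u_2\|=1\},
\]
where the identification uses that $\U_i$ has orthonormal columns, it suffices to bound $|\langle\mphi\u_1,\mphi\u_2\rangle|$ for unit vectors $\u_1\in\set X_1$, $\u_2\in\set X_2$. The polarization identity gives
\[
\langle\mphi\u_1,\mphi\u_2\rangle=\tfrac14\big(\|\mphi(\u_1+\u_2)\|^2-\|\mphi(\u_1-\u_2)\|^2\big),
\]
and since both $\u_1\pm\u_2$ lie in $\set X$, the two-sided bound already established applies to each term: the first term is at most $(1+\delta)\|\u_1+\u_2\|^2$ and the second at least $(1-\delta)\|\u_1-\u_2\|^2$. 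Orthogonality $\set X_1\perp\set X_2$ forces $\|\u_1\pm\u_2\|^2=\|\u_1\|^2+\|\u_2\|^2=2$, so subtracting and dividing by $4$ yields $\langle\mphi\u_1,\mphi\u_2\rangle\le\delta$; replacing $\u_2$ by $-\u_2$ gives the matching lower bound, hence $|\langle\mphi\u_1,\mphi\u_2\rangle|\le\delta$ and the claimed estimate.

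The main obstacle is conceptual rather than computational. Applying the two-sided norm bound to each polarization term produces the quantity $(1+\delta)\|\u_1+\u_2\|^2-(1-\delta)\|\u_1-\u_2\|^2$, and expanding the squares shows this equals $4\langle\u_1,\u_2\rangle+2\delta(\|\u_1\|^2+\|\u_2\|^2)$. The first summand is of order one and would destroy the estimate; it is precisely the orthogonality hypothesis, forcing $\langle\u_1,\u_2\rangle=0$, that annihilates it and leaves only the $O(\delta)$ contribution. Thus orthogonality of $\set X_1$ and $\set X_2$ is not a mere convenience but the essential ingredient that turns polarization into a bound proportional to $\delta$.
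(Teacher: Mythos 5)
Your proof is correct and follows essentially the same route as the paper: reduce the norm bound to the singular values of $\mphi\U$ via $\u=\U\x$, and handle the cross term by polarization combined with $\|\u_1\pm\u_2\|^2=2$ from orthogonality. The only cosmetic difference is that you make explicit the observation that orthogonality kills the $4\langle\u_1,\u_2\rangle$ term, which the paper leaves implicit.
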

\begin{proof}
For any $\u\in\set X$, one may find a vector $\x$ in $\mathbb R^{\dim\set X}$ 
such that $\u=\U\x$, hence $\|\u\|=\|\x\|$. Thus $\|\mphi\u\|=\|\mphi\U\x\|$ 
is at least $s_{\min}(\mphi\U)\|u\|$ and at most $s_{\max}(\mphi\U)\|\u\|$. 
This proves the first part of the proposition. 

For the second part, note that
\begin{align*}
  \|\U_2^{\mathrm T}\mphi^{\mathrm T}\mphi\U_1\|
  &=\sup_{\substack{\x_1\in\mathbb R^{\dim\set X_1}\\ \x_2\in\mathbb R^{\dim\set X_2}}}\x_2^{\mathrm T}\U_2^{\mathrm T}\mphi^{\mathrm T}\mphi\U_1\x_1\\
  &=\sup_{\substack{\u_1\in\set X_1, \|\u_1\|=1\\ \u_2\in\set X_2,\|\u_2\|=1}}\u_2^{\mathrm T}\mphi^{\mathrm T}\mphi\u_1. 
\end{align*}
But
\begin{equation*}
  \u_2^{\mathrm T}\mphi^{\mathrm T}\mphi\u_1=\frac14(\|\mphi(\u_1+\u_2)\|^2-\|\mphi(\u_1-\u_2)\|^2). 
\end{equation*}
the conclusion follows from the first part and the fact that $\|\u_1+\u_2\|^2=\|\u_1-\u_2\|^2=2$ 
(since $\set X_1$, $\set X_2$ are orthogonal to each other).
\end{proof}

\begin{lemma}\label{lem:line-aff-orthogonal}
Under the same setting as in Lemma \ref{thm:cov-est-to-subspace-rip}, 
assume further that $\set X_1$, $\set X_2$ are orthogonal to each other. 
Then we have
\begin{equation}\label{eqn:in-lem-line-aff-orthogonal1}
  \|\ptwo\mphi\u\|\le\sqrt{\frac43}\|\u\|\delta<\frac43\|\mphi\u\|\delta
\end{equation}
for any $\u\in\set X_1$. 
\end{lemma}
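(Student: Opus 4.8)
The plan is to reduce $\|\ptwo\mphi\u\|$ to the bilinear estimate already furnished by Proposition \ref{prop:1d-orthogonal-preserving}. The starting point is the variational description of the norm of an orthogonal projection. Since $\ptwo$ projects onto $\set Y_2=\mphi\set X_2$ and $\mphi$ is injective on $\set X\supseteq\set X_2$, every nonzero element of $\set Y_2$ is $\mphi\v$ for some direction $\v\in\set X_2\setminus\{\mathbf 0\}$, and therefore
\begin{equation*}
  \|\ptwo\mphi\u\|=\max_{\substack{\v\in\set X_2\\ \v\ne\mathbf 0}}\frac{\langle\mphi\u,\mphi\v\rangle}{\|\mphi\v\|}.
\end{equation*}
This identity is what lets me sidestep the awkward fact that the natural spanning family $\{\mphi\v:\v\in\set X_2\}$ of $\set Y_2$ is no longer orthonormal after $\mphi$ is applied: the projection norm is captured entirely by an inner product that I already know how to control.

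Next I would estimate the numerator and denominator separately. The numerator is exactly the quantity bounded in the second part of Proposition \ref{prop:1d-orthogonal-preserving}: writing $\u$ and $\v$ through orthonormal bases $\U_1,\U_2$ of $\set X_1,\set X_2$ and using $\|\U_2^{\mathrm T}\mphi^{\mathrm T}\mphi\U_1\|\le\delta$ (valid because $\set X_1\perp\set X_2$), I get $|\langle\mphi\u,\mphi\v\rangle|=|\u^{\mathrm T}\mphi^{\mathrm T}\mphi\v|\le\delta\|\u\|\|\v\|$. For the denominator, the first part of the same proposition with $\delta\in(0,1/4)$ gives $\|\mphi\v\|\ge\sqrt{1-\delta}\|\v\|>\sqrt{3/4}\|\v\|$. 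Dividing, the factor $\|\v\|$ cancels and the bound becomes independent of $\v$:
\begin{equation*}
  \frac{|\langle\mphi\u,\mphi\v\rangle|}{\|\mphi\v\|}\le\frac{\delta\|\u\|\|\v\|}{\sqrt{3/4}\,\|\v\|}=\sqrt{\tfrac43}\,\|\u\|\delta.
\end{equation*}
Taking the maximum over $\v$ yields the first inequality $\|\ptwo\mphi\u\|\le\sqrt{4/3}\,\|\u\|\delta$.

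The second inequality is then immediate. For $\u\ne\mathbf 0$, the first part of Proposition \ref{prop:1d-orthogonal-preserving} together with $\delta<1/4$ gives $\|\mphi\u\|>\sqrt{3/4}\|\u\|$, equivalently $\|\u\|<\sqrt{4/3}\|\mphi\u\|$; multiplying by $\sqrt{4/3}\,\delta$ and using $\sqrt{4/3}\cdot\sqrt{4/3}=4/3$ produces exactly $\sqrt{4/3}\,\|\u\|\delta<\tfrac43\|\mphi\u\|\delta$.

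I do not anticipate a genuine obstacle: the whole argument is a short splicing of the two estimates in Proposition \ref{prop:1d-orthogonal-preserving} by the projection-norm identity. The only points needing mild care are justifying that first identity (which relies on $\mphi$ being injective on $\set X_2$, so that $\|\mphi\v\|\ne0$ and $\set Y_2$ really is the image $\mphi\set X_2$) and excluding the trivial case $\u=\mathbf 0$, for which the strict second inequality degenerates to equality.
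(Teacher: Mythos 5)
Your proposal is correct and follows essentially the same route as the paper: both proofs reduce the claim to the two estimates of Proposition \ref{prop:1d-orthogonal-preserving} (the bilinear bound $\|\U_2^{\mathrm T}\mphi^{\mathrm T}\mphi\U_1\|\le\delta$ for the numerator and $s_{\min}(\mphi\U_2)\ge\sqrt{1-\delta}$ for the denominator) and combine them to get the same constant $(1-\delta)^{-1/2}\delta\le\sqrt{4/3}\,\delta$. The only difference is cosmetic: you invoke the variational characterization of the projection norm, whereas the paper writes $\ptwo=\V_2(\V_2^{\mathrm T}\V_2)^{-1}\V_2^{\mathrm T}$ explicitly and bounds $\|\ptwo\V_1\|$ in matrix form.
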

\begin{proof}
Let $\U_1$ be an orthonormal basis for $\set X_1$
and $\U_2$ be an orthonormal basis for $\set X_2$, 
thus $\U_1$ (resp. $\U_2$) is a $N\times d_1$ (resp. $N\times d_2$) matrix with orthonormal columns.
Set $\V_1=\mphi\U_1$, $\V_2=\mphi\U_2$. 
By \eqref{eqn:orthonormal-perturbation-bound}, 
$\V_2$ is of full rank and all of its singular values are in 
$(\sqrt{1-\delta}, \sqrt{1+\delta})$.
In this case we have 
\begin{equation*}
  \ptwo=\V_2(\V_2^{\rm T}\V_2)^{-1}\V_2^{\rm T}.
\end{equation*}
Thus 
\begin{align*}
  \|\ptwo\V_1\|^2&=\V_1^{\rm T}\V_2(\V_2^{\rm T}\V_2)^{-1}\V_2^{\rm T}\V_1\\
  &\le s_{\min}^{-2}(\V_2)\|\V_2^{\rm T}\V_1\|^2\\
  &\le (1-\delta)^{-1}\|\U_2^{\rm T}\mphi^{\rm T}\mphi\U_1\|^2\\
  &\le (1-\delta)^{-1}\delta^2,
\end{align*}
where the last inequality follows from Proposition \ref{prop:1d-orthogonal-preserving}. 
For any $\u\in\set X_1$, it is possible to find some $\x\in\mathbb R^d_1$ 
such that $\u=\U_1\x$, hence $\|\u\|=\|\x\|$;
we thus have
\begin{equation*}
  \|\ptwo\mphi\u\|=\|\ptwo\mphi\U_1\x\|=\|\ptwo\V_1\x\|\le\|\ptwo\V_1\|\|\u\|.
\end{equation*}
Hence $\|\ptwo\mphi\u\|\le(1-\delta)^{-1/2}\delta\|\u\|\le\sqrt{\frac43}\|\u\|\delta$. 
Along with Proposition \ref{prop:1d-orthogonal-preserving}, 
this proves \eqref{eqn:in-lem-line-aff-orthogonal1}. 
\end{proof}

\subsection{One-dimensional Subspaces}
\begin{lemma}\label{lem:uniform-line-SRIP}
Under the same setting as in Theorem \ref{thm:cov-est-to-subspace-rip}, 
we have
\begin{equation}\label{eqn:in-lem-uniform-line-SRIP}
    \left|\aff^2(\set Y_2,\mphi\u)-\aff^2(\set X_2,\u)\right|\le C\left(1-\aff^2(\set X_2,\u)\right)\delta
\end{equation}
for all $\u\in\set X_1$, $\u\ne\mathbf0$. 
\end{lemma}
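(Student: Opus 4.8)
The plan is to reduce everything to the orthogonal case already settled in Lemma \ref{lem:line-aff-orthogonal} by splitting $\u$ into its components parallel and perpendicular to $\set X_2$. Since the affinity between a line and a subspace depends only on the direction of $\u$, I may assume without loss of generality that $\|\u\|=1$. Applying part ii) of Lemma \ref{lem:affinity-compute} to the line spanned by $\u$ gives the explicit formulas $\affX^2=\aff^2(\set X_2,\u)=\|\projP_{\set X_2}\u\|^2$ and $\aff^2(\set Y_2,\mphi\u)=\|\ptwo\mphi\u\|^2/\|\mphi\u\|^2$. I would then write $\u=\u_\parallel+\u_\perp$ with $\u_\parallel=\projP_{\set X_2}\u\in\set X_2$ and $\u_\perp=\u-\u_\parallel$, noting that $\u_\perp$ is orthogonal to $\set X_2$ and, crucially, still lies in $\set X$ (being a difference of two vectors of $\set X$). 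Writing $a=\aff^2(\set X_2,\u)=\|\u_\parallel\|^2$, we have $\|\u_\perp\|^2=1-a$, so the target factor $1-\affX^2$ is exactly $\|\u_\perp\|^2$.

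Next I would collect the three tools that control each piece. Because $\u_\parallel\in\set X_2$ we have $\mphi\u_\parallel\in\set Y_2$, hence $\ptwo\mphi\u_\parallel=\mphi\u_\parallel$ and the clean splitting $\ptwo\mphi\u=\mphi\u_\parallel+\ptwo\mphi\u_\perp$; moreover the cross term simplifies via symmetry of $\ptwo$ as $\langle\mphi\u_\parallel,\ptwo\mphi\u_\perp\rangle=\langle\mphi\u_\parallel,\mphi\u_\perp\rangle$. Proposition \ref{prop:1d-orthogonal-preserving} (first part) controls the three relevant norms, giving $\|\mphi\u_\parallel\|^2\in[(1-\delta)a,(1+\delta)a]$, $\|\mphi\u_\perp\|^2\in[(1-\delta)(1-a),(1+\delta)(1-a)]$, and $\|\mphi\u\|^2\in[1-\delta,1+\delta]$, while its second part bounds the correlation by $|\langle\mphi\u_\parallel,\mphi\u_\perp\rangle|\le\delta\|\u_\parallel\|\|\u_\perp\|$ (legitimate since $\u_\parallel\perp\u_\perp$ and both lie in $\set X$). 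Finally, since $\u_\perp\perp\set X_2$ and $\u_\perp\in\set X$, Lemma \ref{lem:line-aff-orthogonal} gives $\|\ptwo\mphi\u_\perp\|\le\sqrt{4/3}\,\|\u_\perp\|\delta$.

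The computation then proceeds by forming the numerator of $\aff^2(\set Y_2,\mphi\u)-a$, namely $\|\ptwo\mphi\u\|^2-a\|\mphi\u\|^2$. Expanding both squared norms with the splitting above, the shared terms $\|\mphi\u_\parallel\|^2$ and $2\langle\mphi\u_\parallel,\mphi\u_\perp\rangle$ recombine, and the numerator reduces to
\[
(1-a)\|\mphi\u_\parallel\|^2-a\|\mphi\u_\perp\|^2+2(1-a)\langle\mphi\u_\parallel,\mphi\u_\perp\rangle+\|\ptwo\mphi\u_\perp\|^2 .
\]
Here is the decisive point: the first two terms would cancel exactly if $\mphi$ were a true isometry (since then $\|\mphi\u_\parallel\|^2=a$ and $\|\mphi\u_\perp\|^2=1-a$), and the near-isometry bounds make their difference at most $2a(1-a)\delta$. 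The correlation term is bounded by $2(1-a)\delta\sqrt{a(1-a)}$ and the residual by $\tfrac43(1-a)\delta^2$, so the whole numerator is $O\big((1-a)\delta\big)$. Dividing by $\|\mphi\u\|^2\ge 1-\delta\ge\tfrac34$ (valid since $\delta<1/4$) yields $|\aff^2(\set Y_2,\mphi\u)-a|\le C(1-a)\delta$ for a universal $C$, which is \eqref{eqn:in-lem-uniform-line-SRIP}.

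The hard part is precisely extracting the factor $(1-\affX^2)$ rather than a bare $O(\delta)$: a crude triangle-inequality estimate of the two affinities separately would lose this factor entirely, which is exactly the suboptimal "loss of a dimension" cautioned against in the introduction. Recovering it hinges on the near-cancellation $(1-a)\|\mphi\u_\parallel\|^2-a\|\mphi\u_\perp\|^2=a(1-a)\cdot O(\delta)$, which forces me to track the isometry defect on $\u_\parallel$ and $\u_\perp$ jointly (and to use the sharper correlation bound of Proposition \ref{prop:1d-orthogonal-preserving} for the cross term) instead of estimating $\|\ptwo\mphi\u\|^2$ and $\|\mphi\u\|^2$ in isolation.
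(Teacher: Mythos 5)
Your proposal is correct and follows essentially the same route as the paper: the paper writes $\u=\lambda\u_2+\sqrt{1-\lambda^2}\,\u_0$ with $\u_2\in\set X_2$ and $\u_0\in\set X$ orthogonal to $\set X_2$, which is exactly your $\u_\parallel+\u_\perp$ splitting, and it bounds the same three pieces (the near-cancelling isometry-defect terms, the cross term via Proposition \ref{prop:1d-orthogonal-preserving}, and $\|\ptwo\mphi\u_0\|^2$ via Lemma \ref{lem:line-aff-orthogonal}) after factoring out $(1-\lambda^2)$. Your identification of the near-cancellation $(1-a)\|\mphi\u_\parallel\|^2-a\|\mphi\u_\perp\|^2=O(a(1-a)\delta)$ as the decisive step matches the paper's computation exactly.
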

\begin{proof}
The proof is by straightforward computation. 
It suffices to prove \eqref{eqn:in-lem-uniform-line-SRIP} for unit vectors $\u_1\in\set X_1$. 
For any such unit vector, there exists some unit vector $\u_{2}\in\set X_2$, 
such that $\lambda\overset{\triangle}{=}\langle\u_1,\u_{2}\rangle=\aff(\set X_2,\u)$
by Lemma \ref{lem:affinity-compute}). 
In fact, $\u_{2}$ is 
the direction vector of the projection of $\u$ onto $\set X_2$. 
We thus have 
\begin{equation}\label{line_decomposition}
    \u_1=\lambda\u_{2}+\sqrt{1-\lambda^2}\u_0,
\end{equation}
where $\u_0\in\set X$ is some unit vector orthogonal to $\set X_2$. 
Recall that the squared affinity of $\set Y_1$ and $\set Y_2$ is defined as
\begin{equation*}
    \affYS=\frac{1}{\|\mphi\u_1\|^2}\|\ptwo\mphi\u_1\|^2.
\end{equation*}

In light of (\ref{line_decomposition}), we have
\begin{align}
    \|\mphi\u_1\|^2= &~\lambda^2\|\mphi\u_{2}\|^2 + (1-\lambda^2)\|\mphi\u_0\|^2 \nonumber\\ 
    &  + 2\lambda\sqrt{1-\lambda^2}\langle\mphi\u_{2},\mphi\u_0\rangle,\nonumber\\
    \|\ptwo\mphi\u_1\|^2 
    = &~\lambda^2\|\ptwo\mphi\u_{2}\|^2 + (1-\lambda^2)\|\ptwo\mphi\u_0\|^2\nonumber \\ 
    &  + 2\lambda\sqrt{1-\lambda^2}\langle\ptwo\mphi\u_{2},\ptwo\mphi\u_0\rangle.\label{proj_norm_expansion}
\end{align}

Note that $\ptwo\mphi\u_{2}=\mphi\u_{2}$ 
since $\mphi\u_{2}\in\mphi\set X_2=\set Y_2$. 
Furthermore, 
\begin{align*}
\langle\ptwo\mphi\u_{2},\ptwo\mphi\u_0\rangle&=\langle\ptwo\mphi\u_{2},\mphi\u_0\rangle\\
&=\langle\mphi\u_{2},\mphi\u_0\rangle,
\end{align*}
where the first equality is elementary geometry. 
Thus 
\begin{align*}
\|\ptwo\mphi\u_1\|^2
= &~\lambda^2\|\mphi\u_{2}\|^2 + (1-\lambda^2)\|\ptwo\mphi\u_0\|^2 \\ 
&  + 2\lambda\sqrt{1-\lambda^2}\langle\mphi\u_{2},\mphi\u_0\rangle.
\end{align*}

Combining these equations, we have
\begin{align*}
\left|\affYS - \lambda^2\right|=\frac{(1-\lambda^2)}{\|\mphi\u_1\|^2}
\Big(&\lambda^2(\|\mphi\u_{2}\|^2-\|\mphi\u_0\|^2) \\ 
        &+\|\ptwo\mphi\u_0\|^2 \\
        &+2\lambda\sqrt{1-\lambda^2}\langle\mphi\u_{2},\mphi\u_0\rangle\Big). 
\end{align*}

Since $\|\u_1\|=\|\u_2\|=\|\u_0\|=1$ and that $\u_0$ is perpendicular to $\set X_2$ (hence to $\u_2$), 
the above quantity is bounded by $C(1-\lambda^2)\delta$ 
by Proposition \ref{prop:1d-orthogonal-preserving} and Lemma \ref{lem:line-aff-orthogonal}. 
This completes the proof.
\end{proof}

\subsection{The General Case}
Now we prove the full version of Theorem \ref{thm:cov-est-to-subspace-rip}.
Choose a principal orthonormal bases (Lemma \ref{lem:affinity-compute}) 
$\U_1$, $\U_2$ for $\set X_1, \set X_2$. 
In this proof we also borrow the notation of $\lambda_k$ from Lemma \ref{lem:affinity-compute}.
Denote $\V_1=\mphi\U_1$, $\V_2=\mphi\U_2$. 
The $k$-th column of $\U_1$, $\U_2$ and $\V_1$ are respectively 
denoted by $\u_{1,k}$, $\u_{2,k}$ and $\v_{1,k}$. 
Note that $\v_{1,k}=\mphi\u_{1,k}$ by definition. 

By \eqref{eqn:orthonormal-perturbation-bound},
$\V_1$, $\V_2$ are of full rank 
and all of their singular values lie in $(\sqrt{1-\delta}, \sqrt{1+\delta})$.
We shall need two auxiliary matrices derived from $\V_1$. 
The first one is the column-normalized version of $\V_1$, 
defined as $\hat{\V}_1=[\frac{\v_{1,1}}{\|\v_{1,1}\|},\ldots,\frac{\v_{1,d_1}}{\|\v_{1,d_1}\|}]$. 
The second one is the orthogonal matrix obtained from Gram-Schmidt orthogonalization of columns of $\V_1$, 
which we denote by $\Q_1$. 
The $k$-th column of $\hat{\V}_1$ and $\Q_1$ 
are respectively denoted by $\hat{\v}_{1,k}$ and $\q_{1,k}$.
Let $\ptwo$ be the orthogonal projection onto $\set Y_2$, i.e. the column space of $\V_2$. 
We have
\begin{align}
    \affY^2-\affX^2=&\|\ptwo\Q_1\|_{\rm F}^2-\|\U_2^{\rm T}\U_1\|_{\rm F}^2\nonumber\\
    =&(\|\ptwo\Q_1\|_{\rm F}^2-\|\ptwo\hat{\V}_1\|_{\rm F}^2)\nonumber\\
    &+(\|\ptwo\hat{\V}_1\|_{\rm F}^2-\|\U_2^{\rm T}\U_1\|_{\rm F}^2).\label{eqn:space_aff_telescope}
\end{align}

We estimate the last two quantities in (\ref{eqn:space_aff_telescope}) respectively.
\begin{proposition}
We have
\begin{equation*}
  \left|\|\ptwo\hat{\V}_1\|_{\rm F}^2-\|\U_2^{\rm T}\U_1\|_{\rm F}^2\right|\le C(d_1-\affX^2)\delta.
\end{equation*}
\end{proposition}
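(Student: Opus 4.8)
The plan is to reduce this proposition entirely to the one-dimensional estimate already established in Lemma \ref{lem:uniform-line-SRIP}, exploiting the diagonal structure of the principal orthonormal bases $\U_1,\U_2$. First I would record the two elementary identifications. Because $\U_1,\U_2$ are \emph{principal} orthonormal bases, the matrix $\U_2^{\rm T}\U_1$ carries $\lambda_k$ on its diagonal and zeros off it, so $\|\U_2^{\rm T}\U_1\|_{\rm F}^2=\sum_{k=1}^{d_1}\lambda_k^2=\affX^2$. Dually, the orthogonal projection of the $k$-th principal vector onto $\set X_2$ is $\projP_{\set X_2}\u_{1,k}=\lambda_k\u_{2,k}$, whence $\lambda_k^2=\aff^2(\set X_2,\u_{1,k})$ by Lemma \ref{lem:affinity-compute}.

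Next I would split the left-hand Frobenius norm column by column. Since the columns of $\hat{\V}_1$ are the unit vectors $\hat{\v}_{1,k}=\mphi\u_{1,k}/\|\mphi\u_{1,k}\|$, one has $\|\ptwo\hat{\V}_1\|_{\rm F}^2=\sum_{k=1}^{d_1}\|\ptwo\hat{\v}_{1,k}\|^2=\sum_{k=1}^{d_1}\aff^2(\set Y_2,\mphi\u_{1,k})$, the last equality just unfolding the definition of the affinity between the line spanned by $\mphi\u_{1,k}$ and $\set Y_2$. This is exactly the reason for introducing the column-normalized auxiliary matrix $\hat{\V}_1$: normalization decouples an otherwise entangled Frobenius norm into a sum of independent one-dimensional affinities, indexed so that the $k$-th term pairs with the diagonal entry $\lambda_k^2$.

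With both sides now written as sums over $k$, I would invoke Lemma \ref{lem:uniform-line-SRIP} with $\u=\u_{1,k}\in\set X_1$ to obtain $|\aff^2(\set Y_2,\mphi\u_{1,k})-\lambda_k^2|\le C(1-\lambda_k^2)\delta$ for each $k$, then sum over $k$ and apply the triangle inequality. The sum collapses immediately, since $\sum_{k=1}^{d_1}(1-\lambda_k^2)=d_1-\sum_{k=1}^{d_1}\lambda_k^2=d_1-\affX^2$, which yields the asserted bound $C(d_1-\affX^2)\delta$.

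I expect no serious analytic obstacle here: the computation is routine once the right reformulation is in place, and the only genuine content is the bookkeeping that matches the $k$-th normalized column to the $k$-th principal angle. The essential difficulty — controlling a single line's affinity under $\mphi$ while extracting the favorable factor $(1-\lambda_k^2)$ rather than a crude constant — has already been absorbed into Lemma \ref{lem:uniform-line-SRIP}; summing that per-line factor is precisely what produces the sharp aggregate factor $(d_1-\affX^2)$ in place of the naive $d_1$.
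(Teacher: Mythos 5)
Your proposal is correct and follows essentially the same route as the paper: decompose $\|\ptwo\hat{\V}_1\|_{\rm F}^2-\|\U_2^{\rm T}\U_1\|_{\rm F}^2$ column by column, recognize each term as a difference of one-dimensional affinities with $\|\U_2^{\rm T}\u_{1,k}\|=\lambda_k$ from the principal-basis structure, apply the one-dimensional estimate of Lemma \ref{lem:uniform-line-SRIP} to each column, and sum using $\sum_k(1-\lambda_k^2)=d_1-\affX^2$. (The paper's text cites Lemma \ref{lem:line-aff-orthogonal} at this step, which appears to be a typo for Lemma \ref{lem:uniform-line-SRIP}; your attribution is the intended one.)
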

\begin{proof}
Note that
\begin{equation}\label{eqn:decomposition1}
  \|\ptwo\hat{\V}_1\|_{\rm F}^2-\|\U_2^{\rm T}\U_1\|_{\rm F}^2=\sum_{k=1}^{d_1}\left(\|\ptwo\hat{\v}_{1,k}\|^2-\|\U_2^{\rm T}\u_{1,k}\|^2\right).
\end{equation}

Observe that $\|\U_2^{\rm T}\u_{1,k}\|$ is the affinity between $\set X_2$ 
and the one-dimensional subspace spanned by $\u_{1,k}$, 
while $\|\ptwo\hat{\v}_{1,k}\|$ is the affinity between $\set Y_2$ 
and the one-dimensional subspace spanned by $\mphi\u_{1,k}$. 
Furthermore, $\|\U_2^{\rm T}\u_{1,k}\|=\lambda_k$ 
since $\U_1$, $\U_2$ are principal orthonormal bases.
By Lemma \ref{lem:line-aff-orthogonal}, we have
\begin{equation}\label{eqn:error-of-normalized-col}
  \left|\|\ptwo\hat{\v}_{1,k}\|^2-\|\U_2^{\rm T}\u_{1,k}\|^2\right|\le C(1-\lambda_k^2)\delta.
\end{equation}

Summing up, we obtain
\begin{align*}
    \left|\|\ptwo\hat{\V}_1\|_{\rm F}^2-\|\U_2^{\rm T}\U_1\|_{\rm F}^2\right|&\le C(d_1-\sum_{k=1}^{d_1}\lambda_k^2)\delta\\
    &=C(d_1-\affX^2)\delta,
\end{align*}
as desired.
\end{proof}

\begin{proposition}
We have
\begin{equation*}
    \left|\|\ptwo\Q_1\|_{\rm F}^2-\|\ptwo\hat{\V}_1\|_{\rm F}^2\right|\le C(d_1-\affX^2)\delta.
\end{equation*}
\end{proposition}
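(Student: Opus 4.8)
The plan is to avoid the naive column-by-column estimate, because that route loses the crucial factor $d_1-\affX^2$. Writing $\|\ptwo\Q_1\|_{\rm F}^2-\|\ptwo\hat\V_1\|_{\rm F}^2=\sum_k(\|\ptwo\q_{1,k}\|^2-\|\ptwo\hat\v_{1,k}\|^2)$ and bounding each summand by $2\|\q_{1,k}-\hat\v_{1,k}\|$ only gives a total of order $d_1\delta$ (one checks $\|\q_{1,k}-\hat\v_{1,k}\|=O(\delta)$, since by \eqref{eqn:orthonormal-perturbation-bound} the columns of $\V_1$ are nearly orthonormal), which is hopelessly weak when $\set X_1,\set X_2$ are nearly aligned and $d_1-\affX^2$ is small. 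The cure is to pass to a single global matrix identity in which a cancellation produces the needed factor automatically.

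Concretely, I would first record the QR and normalization data. Let $\V_1=\Q_1\vect R$ be the factorization produced by Gram--Schmidt ($\vect R$ upper-triangular with positive diagonal), and let $\D=\operatorname{diag}(\|\v_{1,1}\|,\dots,\|\v_{1,d_1}\|)$, so that $\hat\V_1=\V_1\D^{-1}=\Q_1\vect M$ with $\vect M=\vect R\D^{-1}$. Setting $\vect S=\Q_1^{\rm T}\ptwo\Q_1$ and $\vect T=\Q_1^{\rm T}\ptwop\Q_1=\vect I-\vect S$, one has $\|\ptwo\Q_1\|_{\rm F}^2=\operatorname{tr}\vect S$ and $\|\ptwo\hat\V_1\|_{\rm F}^2=\operatorname{tr}(\vect M^{\rm T}\vect S\vect M)=\operatorname{tr}(\vect S\vect M\vect M^{\rm T})$, so the quantity to estimate is $\operatorname{tr}(\vect S(\vect I-\vect M\vect M^{\rm T}))$. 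The decisive observation is the exact cancellation $\operatorname{tr}(\vect I-\vect M\vect M^{\rm T})=d_1-\|\hat\V_1\|_{\rm F}^2=0$ (both $\Q_1$ and $\hat\V_1$ have $d_1$ unit-norm columns). This lets me replace $\vect S$ by $-\vect T$ and rewrite the difference as $\operatorname{tr}(\vect T(\vect M\vect M^{\rm T}-\vect I))$; since $\vect T$ is positive semidefinite, its absolute value is at most $\operatorname{tr}(\vect T)\,\|\vect M\vect M^{\rm T}-\vect I\|$.

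It then remains to bound the two factors, both of which are now tractable. For the operator-norm factor I would use that $\G:=\V_1^{\rm T}\V_1=\U_1^{\rm T}\mphi^{\rm T}\mphi\U_1$ satisfies $\|\G-\vect I\|\le\delta$, which is immediate from \eqref{eqn:orthonormal-perturbation-bound} applied to the unit vectors $\U_1\x\in\set X$; since $\vect R^{\rm T}\vect R=\G$, the eigenvalues of $\vect R\vect R^{\rm T}$ coincide with those of $\G$, so $\|\vect R\vect R^{\rm T}-\vect I\|\le\delta$, while $\|\D^{-2}-\vect I\|\le C\delta$ because each $\|\v_{1,k}\|^2\in(1-\delta,1+\delta)$; writing $\vect M\vect M^{\rm T}=\vect R\vect R^{\rm T}+\vect R(\D^{-2}-\vect I)\vect R^{\rm T}$ then gives $\|\vect M\vect M^{\rm T}-\vect I\|\le C\delta$. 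For the trace factor, $\operatorname{tr}\vect T=\|\ptwop\Q_1\|_{\rm F}^2\le\|\vect R^{-1}\|^2\|\ptwop\V_1\|_{\rm F}^2\le C\sum_k\|\ptwop\v_{1,k}\|^2$, and since $\|\ptwop\v_{1,k}\|^2=\|\v_{1,k}\|^2(1-\|\ptwo\hat\v_{1,k}\|^2)$, the per-column affinity estimate \eqref{eqn:error-of-normalized-col} (from Lemma \ref{lem:uniform-line-SRIP}) yields $\|\ptwop\v_{1,k}\|^2\le C(1-\lambda_k^2)$, hence $\operatorname{tr}\vect T\le C\sum_k(1-\lambda_k^2)=C(d_1-\affX^2)$. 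Multiplying the two bounds gives exactly $C(d_1-\affX^2)\delta$. The main obstacle, and the reason a global identity is indispensable, is precisely this cancellation $\operatorname{tr}(\vect I-\vect M\vect M^{\rm T})=0$: it factors the error cleanly into the small operator norm $\|\vect M\vect M^{\rm T}-\vect I\|=O(\delta)$ and the genuinely small trace $\operatorname{tr}\vect T=O(d_1-\affX^2)$, a structure the column-wise estimate simply cannot detect.
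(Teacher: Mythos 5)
Your proof is correct, and it takes a genuinely different route from the paper's. The paper also works column by column, but not with the naive $O(\delta)$-per-column bound you rightly reject: it telescopes $\sum_k(\|\ptwo\q_{1,k}\|^2-\|\ptwo\hat{\v}_{1,k}\|^2)$ through the Gram--Schmidt recursion \eqref{eqn:gram-schmidt} and establishes the refined per-column bounds \eqref{eqn:component1}--\eqref{eqn:component2}, each carrying its own factor $(1-\lambda_k^2)$; the delicate step there is \eqref{eqn:last-step}, $s_{\max}^2(\ptwop\Z_{k-1})\le C(1-\lambda_k^2)$, which requires applying Lemma \ref{lem:uniform-line-SRIP} to every unit vector of $\set Z_{k-1}$ and invoking the monotone ordering $\lambda_1\ge\cdots\ge\lambda_{d_1}$ of the principal correlations. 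Your argument replaces all of this by the single cancellation $\operatorname{tr}(\vect I-\vect M\vect M^{\rm T})=0$ (both $\Q_1$ and $\hat{\V}_1$ have $d_1$ unit columns), which converts the difference into $\operatorname{tr}\bigl(\vect T(\vect M\vect M^{\rm T}-\vect I)\bigr)$ and factors the smallness cleanly: $\|\vect M\vect M^{\rm T}-\vect I\|\le C\delta$ follows from \eqref{eqn:orthonormal-perturbation-bound} alone, while $\operatorname{tr}\vect T=\|\ptwop\Q_1\|_{\rm F}^2\le C(d_1-\affX^2)$ follows from the already-established one-dimensional estimate \eqref{eqn:error-of-normalized-col}. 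The individual steps all check out: $\vect R^{\rm T}\vect R=\G=\V_1^{\rm T}\V_1$ so $\vect R\vect R^{\rm T}$ and $\G$ share a spectrum and $\|\vect R\vect R^{\rm T}-\vect I\|\le\delta$; $\|\vect R^{-1}\|^2=\|\G^{-1}\|\le(1-\delta)^{-1}$; and $|\operatorname{tr}(\vect T\vect A)|\le\operatorname{tr}(\vect T)\,\|\vect A\|$ holds for $\vect T\succeq0$ and symmetric $\vect A$. What your route buys is the elimination of the hardest estimate in the paper's proof (the bounds \eqref{eqn:component1} and \eqref{eqn:last-step}) and of any reliance on the ordering of the $\lambda_k$'s; what the paper's route buys is a per-column statement localizing the error to each principal direction, which is slightly more informative but not needed for the theorem. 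One small correction to your framing: the paper's column-wise estimate is not the naive one you dismiss, and it does succeed in extracting the factor $d_1-\affX^2$, just with considerably more work.
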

\begin{proof}
Similar to (\ref{eqn:decomposition1}), we have 
\begin{equation*}
    \|\ptwo\Q_1\|_{\rm F}^2-\|\ptwo\hat{\V}_1\|_{\rm F}^2=\sum_{i=1}^{d_1}(\|\ptwo\q_{1,k}\|^2-\|\ptwo\hat{\v}_{1,k}\|^2).
\end{equation*}

Denote by $\set Z_k$ the space spanned by $\v_{1,1},\ldots,\v_{1,k}$.
Then 
\begin{equation}\label{eqn:gram-schmidt}
    \q_{1,k}=\frac{\hat{\v}_{1,k}-\pz{k-1}\hat{\v}_{1,k}}{\|\hat{\v}_{1,k}-\pz{k-1}\hat{\v}_{1,k}\|}.
\end{equation}

Note that $\q_{1,k}$, $\hat{\v}_{1,k}$ are unit vectors, hence by Pythagorean theorem we have
\begin{equation}\label{eqn:pytha1}
    \|\hat{\v}_{1,k}-\pz{k-1}\hat{\v}_{1,k}\|^2=1-\|\pz{k-1}\hat{\v}_{1,k}\|^2,
\end{equation}
and
\begin{align}
    &\|\ptwo\q_{1,k}\|^2-\|\ptwo\hat{\v}_{1,k}\|^2\nonumber\\
    =&(1-\|\ptwop\q_{1,k}\|^2)-(1-\|\ptwop\hat{\v}_{1,k}\|^2)\nonumber\\
    =&\|\ptwop\hat{\v}_{1,k}\|^2-\|\ptwop\q_{1,k}\|^2,\label{eqn:pytha2}
\end{align}
where $\set Y_2^\perp$ denotes the orthogonal complement of $\set Y_2$. 
Combining (\ref{eqn:gram-schmidt}) and (\ref{eqn:pytha1}) we obtain
\begin{align*}
    \|\ptwop\q_{1,k}\|^2=&\frac{\|\ptwop\hat{\v}_{1,k}-\ptwop\pz{k-1}\hat{\v}_{1,k}\|^2}{1-\|\pz{k-1}\hat{\v}_{1,k}\|^2}\\
    =&\phantom{+}\frac{\|\ptwop\hat{\v}_{1,k}\|^2}{1-\|\pz{k-1}\hat{\v}_{1,k}\|^2}\\
    &-\frac{2\langle\ptwop\hat{\v}_{1,k},\ptwop\pz{k-1}\hat{\v}_{1,k}\rangle}{1-\|\pz{k-1}\hat{\v}_{1,k}\|^2}\\
    &+\frac{\|\ptwop\pz{k-1}\hat{\v}_{1,k}\|^2}{1-\|\pz{k-1}\hat{\v}_{1,k}\|^2}.
\end{align*}

This together with (\ref{eqn:pytha2}) yields
\begin{align}
    &\|\ptwo\q_{1,k}\|^2-\|\ptwo\hat{\v}_{1,k}\|^2\nonumber\\
    =&-\frac{\|\ptwop\hat{\v}_{1,k}\|^2\|\pz{k-1}\hat{\v}_{1,k}\|^2}{1-\|\pz{k-1}\hat{\v}_{1,k}\|^2}\nonumber\\
    &+\frac{2\langle\ptwop\hat{\v}_{1,k},\ptwop\pz{k-1}\hat{\v}_{1,k}\rangle}{1-\|\pz{k-1}\hat{\v}_{1,k}\|^2}\nonumber\\
    &-\frac{\|\ptwop\pz{k-1}\hat{\v}_{1,k}\|^2}{1-\|\pz{k-1}\hat{\v}_{1,k}\|^2}.\label{eqn:decomposition2}
\end{align}

Since $\u_{1,k}$ is perpendicular to the subspace spanned by 
$\u_{1,1},\ldots,\u_{1,k-1}$, 
by Lemma \ref{lem:line-aff-orthogonal}, 
$\|\pz{k-1}\hat{\v}_{1,k}\|^2\le\frac{16}9\delta^2<\frac49\delta<1/2$. 
The proof would be complete once we show
\begin{align}
    \|\ptwop\pz{k-1}\hat{\v}_{1,k}\|^2&\le C(1-\lambda_k^2)\delta^2,\label{eqn:component1}\\
    \|\ptwop\hat{\v}_{1,k}\|^2&\le C(1-\lambda_k^2).\label{eqn:component2}
\end{align}

By (\ref{eqn:error-of-normalized-col}) and the discussion prior to it 
(which says that $\|\U_2^{\rm T}\u_{1,k}\|=\lambda_k$),
the following holds:
\begin{align*}
    \|\ptwop\hat{\v}_{1,k}\|^2&=1-\|\ptwo\hat{\v}_{1,k}\|^2\\
    &=(1-\lambda_k^2)-(\|\ptwo\hat{\v}_{1,k}\|^2-\|\U_2^{\rm T}\u_{1,k}\|^2)\\
    &\le(1-\lambda_k^2)(1+C\delta)\\
    &\le C(1-\lambda_k^2),
\end{align*}
where the first inequality follows from Lemma \ref{lem:uniform-line-SRIP} 
applied to $\set X_2$ and the $1$-dimensional subspace spanned by $\u_{1,k}$, 
noting that $\|\ptwo\hat{\v}_{1,k}\|$ is the affinity between 
$\set Y_2$ and $\mphi\u_{1,k}$.
This establishes \eqref{eqn:component2}. 
For \eqref{eqn:component1}, however, some more work is required. 
Let $\Z_{k-1}$ be a orthonormal basis of $\set Z_{k-1}$. 
Then $\pz{k-1}=\Z_{k-1}\Z_{k-1}^{\rm T}$, which implies 
\begin{align*}
    \|\ptwop\pz{k-1}\hat{\v}_{1,k}\|^2&=\|\ptwop\Z_{k-1}\Z_{k-1}^{\rm T}\hat{\v}_{1,k}\|^2\\
    &\le s_{\max}^2(\ptwop\Z_{k-1})\|\Z_{k-1}^{\rm T}\hat{\v}_{1,k}\|^2.
\end{align*}

One recognizes at once that $\|\Z_{k-1}^{\rm T}\hat{\v}_{1,k}\|^2=\|\pz{k-1}\hat{\v}_{1,k}\|^2$, 
which is bounded by $C\delta^2$ according to Lemma \ref{lem:line-aff-orthogonal}. 
It remains to prove 
\begin{equation}\label{eqn:last-step}
    s_{\max}^2(\ptwop\Z_{k-1})\le C(1-\lambda_k^2).
\end{equation}

Since $\Z_{k-1}$ has orthonormal columns, it follows that
\begin{align}
    s_{\max}^2(\ptwop\Z_{k-1})&=\sup_{\x\in\mathbb S^{k-2}}\|\ptwop\Z_{k-1}\x\|^2\nonumber\\
    &=\sup_{\substack{\|\x\|=1\\ \x\in\set Z_{k-1}}}\|\ptwop\x\|^2.\label{eqn:s_max-conversion}
\end{align}

By definition, $\set Z_{k-1}$ is spanned by the first $(k-1)$ columns of $\V_1=\mphi\U_1$. 
Denote by $\U_{1,1:k-1}$ the first $(k-1)$ columns of $\U_1$. 
We have
\begin{align*}
    \sup_{\substack{\|\x\|=1\\ \x\in\set Z_{k-1}}}\|\ptwop\x\|^2
    &=1-\inf_{\substack{\|\x\|=1\\ \x\in\set Z_{k-1}}}\|\ptwo\x\|^2\\
    &=1-\inf_{\x\in\mathbb S^{k-2}}\frac{\|\ptwo\mphi\U_{1,1:k-1}\x\|^2}{\|\mphi\U_{1,1:k-1}\x\|^2}\\
\end{align*}
Note that 
\[\frac{\|\ptwo\mphi\U_{1,1:k-1}\x\|^2}{\|\mphi\U_{1,1:k-1}\x\|^2}=\aff^2(\set Y_2,\mphi\U_{1,1:k-1}\x).\]
Applying Lemma \ref{lem:uniform-line-SRIP} to $\set X_2$ 
and the $1$-dimensional subspace spanned by $\U_{1,1:k-1}\x$, 
we obtain
\begin{align}
  \sup_{\substack{\|\x\|=1\\ \x\in\set Z_{k-1}}}\|\ptwop\x\|^2
  &\le (1-\|\projP_{\set X_2}\U_{1,1:k-1}\x\|^2)(1+C\delta)\nonumber\\
  &\le C(1-\|\projP_{\set X_2}\U_{1,1:k-1}\x\|^2)\label{eqn:compression-last-step}
\end{align}
Finally we need to bound $\|\projP_{\set X_2}\U_{1,1:k-1}\x\|^2$.
Since $\projP_{\set X_2}\u_{1,i}=\lambda_i\u_{2,i}$ 
by choice of principal orthonormal bases, 
we have 
\begin{align}
    \|\projP_{\set X_2}\U_{1,1:k-1}\x\|^2=\sum_{i=1}^{k-1}\lambda_i^2x_i^2&\ge\sum_{i=1}^{k-1}\lambda_k^2x_i^2\nonumber\\
    &=\lambda_k^2.\label{eqn:eig-estimation-for-last-step}
\end{align}
Combining \eqref{eqn:s_max-conversion}, \eqref{eqn:compression-last-step}
and \eqref{eqn:eig-estimation-for-last-step},
we obtain \eqref{eqn:last-step} as desired.
\end{proof}


\section{Discussions}\label{sec:discussions}
This section discusses the proofs of Theorem \ref{thm:cov-est-to-subspace-rip} 
and other results in this paper, 
and their differences from previous works.

{\color{black}
A notable feature of Theorem \ref{thm:cov-est-to-subspace-rip} 
is being purely deterministic, i.e. does not involve randomness of $\mphi$. 
This cleans up the fog spanned 
across the mixture of probabilistic arguments and deterministic arguments 
in previous works on dimensionality-reduced 
subspace clustering \cite{Heckel2017Dimensionality, Wang2019Theoretical}
and on subspace RIP \cite{Li2019Rigorous}. 
Note that the concept of subspace embedding (see Remark \ref{rem:subspace-embedding})
that is somehow similar to \eqref{eqn:orthonormal-perturbation-bound} 
appears in the analysis in \cite{Wang2019Theoretical}. 
Theorem \ref{thm:cov-est-to-subspace-rip} is distinguished from theirs at least in two aspects: 
our analysis is deterministic and does not involve randomness, 
while the assumption in \cite{Wang2019Theoretical} is an probabilistic inequality; 
we proceed to analyze the subspace RIP of $\mphi$, 
which is capable of handling various subspace-related tasks and algorithms,
while \cite{Wang2019Theoretical} considered only SSC, 
a special algorithm of the specific problem of subspace clustering. 
Also note that the result in \cite{Wang2019Theoretical} requires $n=\Omega(d^{9/2})$ 
for subgaussian matrices to ensure success of CSC, 
which is worse than the optimal scaling $n=O(d)$ that can be obtained by our theory, 
for instance, using the framework in \cite{Meng2018CSC}. 

Readers may find some similarity between the proof of Theorem \ref{thm:cov-est-to-subspace-rip} 
and the proof in \cite{Li2019Rigorous}. 
Indeed, the routine computations in both proofs are the same, 
e.g. \eqref{proj_norm_expansion}, \eqref{eqn:space_aff_telescope}--\eqref{eqn:decomposition2}. 
However, these routine computations are merely a non-substantial part of the proof, 
and the core difficulty is how to bound the quantities involved in these equations. 
To this end, our proof deviate significantly from that in \cite{Li2019Rigorous}. 
In fact, \cite{Li2019Rigorous} relied heavily on the fact that the Gaussian projection 
of two orthogonal vectors are independent to bound terms such as $\|\ptwo\V_1\|$ in Lemma \ref{lem:line-aff-orthogonal}
and $\|\ptwo\mphi\u_0\|$ in \eqref{proj_norm_expansion}, 
which cannot be generalized to even subgaussian matrices, 
let alone partial Fourier matrices, partial circulant matrices or log-concave ensembles,
and the situation got more involved when the quantity of interest is complicated: 
it took over four pages in \cite{Li2019Rigorous} (see Appendix 8.8 there) 
to obtain the sought-for conditional independence 
to bound $\|\ptwop\pz{k-1}\hat\v_{1,k}\|$ in \eqref{eqn:component1}.
In our proof, we propose the novel Lemma \ref{lem:uniform-line-SRIP} 
(which was proved in \cite{Li2019Rigorous} but only for one-dimensional $\set X_1$) 
as a key intermediate step, 
and all the bounds needed follow easily from either assumption \eqref{eqn:orthonormal-perturbation-bound} 
or Lemma \ref{lem:uniform-line-SRIP}. 
This demonstrates the power of our abstract setting \eqref{eqn:orthonormal-perturbation-bound}. 
}


\section{Applications}\label{sec:applications}
In this section we briefly mention some of the applications of our theory. 
As we mentioned before, 
our theory provides a universal framework to analyze the effects of random compression 
on subspace related tasks, 
and providing a list of such tasks would be out of scope of this paper. 
However, it is possible to describe the universal framework which 
works for any subspace-related algorithms that admits a theoretical guarantee via affinity, 
as follows.\\
\textbf{Framework. }\emph{Assume that we have a theoretical guarantee for an algorithm $A$ 
that succeeds on a colletion of $L$ subspaces $\set X_1,\ldots,\set X_L$ 
with probability at least $1-\delta$. 
Let $\mphi$ be the random matrix described in 
Corollary \ref{cor:JL-implies-SRIP}, 
\ref{cor:BOS-implies-SRIP}, 
\ref{cor:circulant-implies-SRIP}, 
\ref{cor:strong-regularity-SRIP} or \ref{cor:log-concave-SRIP}. 
Let $\varepsilon$ be a sufficiently small positive number that depends 
only on the relative position of subspaces. 
Then for $n$ satisfying the restriction in the corresponding corollary, 
algorithm $A$ succeeds on the data projected by $\mphi$ 
with probability at least $1-\delta_{\text{proj}}-\delta$, 
where $\delta_{\text{proj}}$ is the error probability given in the corresponding corollary.
}

As examples, we roughly describe some useful consequences of our theory on two tasks: 
subspace clustering and active subspace detection \cite{Lodhi2018Detection}. 
\begin{theorem}[Compressed subspace clustering]\label{thm:sc}
  Let $\mphi$ be a partial Fourier matrix.
  Under some technical assumptions on the algorithm parameters
  (that is irrelevant of $\mphi$, see \cite{Meng2018CSC}), 
  the Threshold-based Subspace Clustering (TSC) algorithm 
  succeeds (see \cite{Meng2018CSC}) on the dataset compressed by $\mphi$ with probability at least 
  \[1-\frac{10}M-\sum_{l=1}^L\mathrm e^{-c(N_l+1)}-\mathrm e^{-c(\sqrt{d_*^2+K^{-2}\varepsilon^2n}-d_*)}\]
  given $n>C\varepsilon^{-3}\max\{d_*(\log^3d_*+\log L),\log^2L,\log^3N\}$ and 
  \begin{align*}
    &\max _{k \ne l} \ \sqrt{ \frac{{\aff}^2(\set X_k, \set X_l)(1-\varepsilon) 
    + d_*\varepsilon }{d_k \wedge d_l} } \\
    &+ 6\sqrt{\frac{d_*}{n}}\left(1+\sqrt{\frac{6\log {M}}{d_*}}\right)^2 \\
    \leq & \frac{1}{\sqrt{6M_{\max}\log {M}}},
  \end{align*}
  where $c>0$ is a constant, $M$ denotes the number of data points 
  and $M_{\max}$ denotes the maximal number of data points lying in the same subspace.
\end{theorem}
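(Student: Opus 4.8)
The plan is to realize Theorem \ref{thm:sc} as a direct instance of the general Framework stated above, coupling the subspace RIP bound of Corollary \ref{cor:BOS-implies-SRIP} (a partial Fourier matrix being a randomly sampled BOS with constant $K=O(1)$, so the $K^3=O(1)$ factor is absorbed into the universal constant) with the pre-existing success guarantee for TSC on compressed data from \cite{Meng2018CSC}. Everything reduces to transporting that guarantee, which is phrased in terms of the affinities of the \emph{compressed} subspaces, into a statement about the \emph{original} affinities via the near-isometry conclusion of our main theory.

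First I would recall the relevant guarantee of \cite{Meng2018CSC}: under the stated technical assumptions on the algorithm parameters, TSC applied to the data lying on $\set Y_1,\ldots,\set Y_L$ succeeds with probability at least $1-\frac{10}{M}-\sum_{l=1}^L\mathrm e^{-c(N_l+1)}$, provided the separation condition
\[\max_{k\ne l}\sqrt{\frac{\aff^2(\set Y_k,\set Y_l)}{d_k\wedge d_l}}+6\sqrt{\frac{d_*}{n}}\Big(1+\sqrt{\frac{6\log M}{d_*}}\Big)^2\le\frac{1}{\sqrt{6M_{\max}\log M}}\]
holds. The second summand captures the within-subspace distortion induced by the projection and depends only on the ambient quantities $d_*$, $n$, $M$; all the inter-subspace interaction is carried by the compressed affinities $\aff^2(\set Y_k,\set Y_l)$, which are precisely the quantities our machinery controls.

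Next I would invoke Corollary \ref{cor:BOS-implies-SRIP} for the matrix $\mphi$, with the hypothesized $n>C\varepsilon^{-3}\max\{d_*(\log^3 d_*+\log L),\log^2 L,\log^3 N\}$: on an event $\mathcal E$ of probability at least $1-\mathrm e^{-c(\sqrt{d_*^2+K^{-2}\varepsilon^2 n}-d_*)}$ we have $\Delta\le\varepsilon$, which unwinds for every pair $k\ne l$ to
\[\aff^2(\set Y_k,\set Y_l)\le(1-\varepsilon)\aff^2(\set X_k,\set X_l)+\varepsilon\max\{d_k,d_l\}\le(1-\varepsilon)\aff^2(\set X_k,\set X_l)+\varepsilon d_*.\]
Substituting this upper bound into the separation condition and using monotonicity of $\sqrt{\cdot}$, the hypothesis of Theorem \ref{thm:sc},
\[\max_{k\ne l}\sqrt{\frac{\aff^2(\set X_k,\set X_l)(1-\varepsilon)+d_*\varepsilon}{d_k\wedge d_l}}+6\sqrt{\frac{d_*}{n}}\Big(1+\sqrt{\frac{6\log M}{d_*}}\Big)^2\le\frac{1}{\sqrt{6M_{\max}\log M}},\]
forces the separation condition of \cite{Meng2018CSC} to hold on $\mathcal E$; hence, conditioned on $\mathcal E$, TSC succeeds with probability at least $1-\frac{10}{M}-\sum_{l=1}^L\mathrm e^{-c(N_l+1)}$.

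Finally I would close the argument by a union bound over the two failure modes: the subspace RIP event $\mathcal E$ fails with probability at most $\mathrm e^{-c(\sqrt{d_*^2+K^{-2}\varepsilon^2 n}-d_*)}$, and conditioned on $\mathcal E$ the TSC guarantee fails with probability at most $\frac{10}{M}+\sum_{l=1}^L\mathrm e^{-c(N_l+1)}$, giving the claimed overall success probability. The only genuinely delicate point is bookkeeping rather than analysis: one must verify that the affinity estimate enters the separation condition in the correct (upper-bounding) direction, so that our hypothesis \emph{implies} the condition of \cite{Meng2018CSC} rather than the reverse, and that the within-subspace distortion term is untouched by the subspace RIP substitution since it depends only on ambient quantities. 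All remaining constants are inherited verbatim from \cite{Meng2018CSC} and from Corollary \ref{cor:BOS-implies-SRIP}.
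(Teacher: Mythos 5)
Your proposal is correct and follows essentially the same route as the paper, which simply cites the TSC analysis scheme of \cite{Meng2018CSC} combined with the subspace RIP corollary and leaves the bookkeeping you carried out implicit. One small note: the paper's proof sentence points to Corollary \ref{cor:circulant-implies-SRIP}, but the sample-size condition and probability bound in the theorem statement clearly come from Corollary \ref{cor:BOS-implies-SRIP} (partial Fourier being a randomly sampled BOS), so your choice of corollary is the right one and the paper's citation appears to be a slip.
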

This follows from Corollary \ref{cor:circulant-implies-SRIP} the analysis scheme proposed in \cite{Meng2018CSC}. 
We have chosen partial Fourier matrix and TSC algorithm only for simplicity of presentation; 
similar results hold for subgaussian matrices, partial circulant matrices, etc., and 
for SSC, SSC-OMP algorithms, etc. 
It is perhaps worth mentioning that Theorem \ref{thm:sc} is, 
to the best of our knowledge, 
the first effective performance analysis with sub-optimal scaling $n=O(d_*\text{polylog}(d_*))$ 
for TSC compressed by partial Fourier matrices, 
since the analysis in \cite{Heckel2017Dimensionality} appears flawed 
as pointed out in a footnote of this paper.

We now turn to the task of active subspace detection. 
\begin{theorem}[Compressed active subspace detection]
  Let $\mphi$ be a partial Fourier matrix. 
  The compressed maximum-likelihood detector (see \cite{Jiao2019Compressed}) for 
  noiseless active subspace detection 
  succeeds, under Gaussian assumption on the data distribution, 
  with probability at least 
  \[
    1-4\sum_{i\ne j}\mathrm e^{-K(\aff_*(1-\varepsilon)+d\varepsilon)d_*}-\mathrm e^{-c(\sqrt{d_*^2+K^{-2}\varepsilon^2n}-d_*)},
  \]
  where 
  \[
    K(x):=\frac18\frac{(1-x/d_*-8/d_*)^2}{4+(1-x/d_*-8/d_*)}
  \]
  and $\aff_*$ denotes $\max_{k\ne l}\aff(\set X_k,\set X_l)$, 
  given $\varepsilon\in(0,1)$ and $n>C\varepsilon^{-3}\max\{d_*(\log^3d_*+\log L),\log^2L,\log^3N\}$. 
  For noisy case a similar conclusion holds.
\end{theorem}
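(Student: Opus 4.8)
The plan is to instantiate the general Framework stated above, with the maximum-likelihood detector of \cite{Jiao2019Compressed} playing the role of algorithm $A$, and to track how its affinity-based error exponent degrades once the pairwise affinities are perturbed by the partial Fourier projection $\mphi$. The proof separates the two independent sources of randomness: the randomness of the sampling matrix $\mphi$, controlled by our subspace RIP, and the randomness of the Gaussian data, controlled by the detector guarantee of \cite{Jiao2019Compressed}.

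First I would record the uncompressed guarantee in the form: on any family of subspaces $\set Y_1,\ldots,\set Y_L$, the noiseless maximum-likelihood detector fails under the Gaussian model with probability at most $4\sum_{i\ne j}\mathrm e^{-K(\aff^2(\set Y_i,\set Y_j))d_*}$, where $K$ is the stated rate function. Since a partial Fourier matrix is a randomly sampled bounded orthonormal system, Corollary \ref{cor:BOS-implies-SRIP} applies directly: for $n>C\varepsilon^{-3}\max\{d_*(\log^3 d_*+\log L),\log^2 L,\log^3 N\}$ we have $\Delta\le\varepsilon$ with probability at least $1-\mathrm e^{-c(\sqrt{d_*^2+K^{-2}\varepsilon^2 n}-d_*)}$, the symbol $K$ in this exponent being the bounded-orthonormal-system constant. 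Unwinding the definition of $\Delta$, the event $\{\Delta\le\varepsilon\}$ forces, for every pair $i\ne j$,
\[
  \aff^2(\set Y_i,\set Y_j)\le\aff^2(\set X_i,\set X_j)(1-\varepsilon)+d_*\varepsilon\le\aff_*^2(1-\varepsilon)+d_*\varepsilon,
\]
which is precisely the quantity fed into $K$ in the statement.

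Next I would glue the two ingredients together by conditioning. As $\mphi$ and the data are independent, I condition on the event $\{\Delta\le\varepsilon\}$; on this event the projected subspaces $\set Y_i=\mphi\set X_i$ are fixed and, because the near-isometry \eqref{eqn:orthonormal-perturbation-bound} is injective on each $\set X_i$, their dimensions coincide with those of the $\set X_i$, so the Gaussian data projected by $\mphi$ remains Gaussian and the detector guarantee applies verbatim with the projected affinities. Using that $K$ is monotonically decreasing on the relevant range, so that replacing each $\aff^2(\set Y_i,\set Y_j)$ by its upper bound $\aff_*^2(1-\varepsilon)+d_*\varepsilon$ only enlarges the term $\mathrm e^{-K(\cdot)d_*}$, the conditional failure probability of the detector is at most $4\sum_{i\ne j}\mathrm e^{-K(\aff_*^2(1-\varepsilon)+d_*\varepsilon)d_*}$. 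A union bound over the complement of $\{\Delta\le\varepsilon\}$ and the conditional detector-failure event then delivers the advertised success probability.

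The step I expect to be the main obstacle is confirming that $K$ is genuinely decreasing over the whole range of arguments that can arise, equivalently that the perturbed affinity $\aff_*^2(1-\varepsilon)+d_*\varepsilon$ still lies in the regime $1-x/d_*-8/d_*>0$ where the substitution into $K$ is both legitimate and in the conservative direction; this is exactly where one invokes the Framework's freedom to take $\varepsilon$ small enough, depending only on the relative position of the $\set X_i$, to keep the detector inside its success regime. The noisy case follows along identical lines, substituting the noisy detector guarantee of \cite{Jiao2019Compressed} for the noiseless one.
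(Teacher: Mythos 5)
Your proposal is correct and follows essentially the same route as the paper, which proves this theorem in one line by combining Corollary \ref{cor:BOS-implies-SRIP} (giving $\Delta\le\varepsilon$ for the partial Fourier matrix under the stated condition on $n$) with the affinity-based detector guarantee of \cite{Jiao2019Compressed} via the Framework and a union bound; your explicit conditioning on $\{\Delta\le\varepsilon\}$ and the monotonicity check on the rate function are exactly the details the paper leaves implicit. Your reading of the perturbed affinity as $\aff_*^2(1-\varepsilon)+d_*\varepsilon$ (squared, with $d_*$) is the correct unwinding of $\Delta\le\varepsilon$ and matches what the theorem's displayed bound intends.
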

This follows from Corollary \ref{cor:BOS-implies-SRIP} and the analysis scheme 
proposed in \cite{Jiao2019Compressed}. 
Note that in \cite{Jiao2019Compressed} the above theorem is proved for random matrices with exponential 
Johnson-Lindenstrauss property (Theorem 5 there) using results from the first version of this paper. 
Again, the appearance of partial Fourier matrices is arbitrary and can be replaced by any other random matrices 
discussed in this paper.


\section{Simulations}\label{sec:simulations}
We verify our results on Yale Face Database B \cite{Georghiades2001Few}
and test the performance of Sparse Subspace Clustering (SSC) after random projection 
by Gaussian matrix, partial Fourier/Hadamard matrix, partial circulant matrix, 
and matrix with i.i.d. Student-t distributed ($\nu=5$) entries. 
Yale Face Database B is a database with ambient dimension $N=32256$ that contains the face images 
of $10$ human subjects. 
For convenience we randomly select $4$ subjects whose face images are subsequently clustered. 
The matrices we chose are representatives of the three classes of random matrices which we have inspected: 
Gaussian matrix represents matrices with exponential Johnson-Lindenstrauss property, 
partial Fourier/Hadamard matrix and partial circulant matrix represent matrices with fast algorithms, 
and i.i.d. Student-t distributed matrices, which has infinite fifth moments, represents matrices with heavy tails.

Performance is evaluated in terms of clustering error rate \cite{Elhamifar2013Sparse}, 
i.e. the rate that SSC algorithm clusters a randomly compressed image to the correct subject,
see Fig. \ref{fig:err-rate}. 
We are also concerned with the boost-up in computational efficiency 
supplied by fast matrix-vector multiplication algorithms 
for partial Fourier/Hadamard matrices and partial circulant matrices, 
which will be evaluated in terms of average running time, 
i.e. the time it takes to compute the random projection of a high-dimensional vector, 
see Fig. \ref{fig:running-time}. 

\paragraph{Computational Complexity}
For unstructured $n\times N$ random matrices such as Gaussian matrices and Student-t matrices, 
it takes $\complexityO(nN)$ time to compute the random projection of a vector. 
For partial Fourier/Hadamard matrices and partial circulant matrices, 
$\complexityO(N\log N)$-time algorithms exist, 
thanks to Fast Fourier Transform (FFT) and Fast Walsh-Hadamard Transform (FWHT). 
More precisely, one may compute the sign-randomized\footnote{
    This means multiplying each entry of $\x$ by a Rademacher random variable. 
    See Theorem \ref{thm:apd:rip-to-jl} for details.
} version of a vector $\x\in\mathbb R^N$ 
in $\complexityO(N)$ time, 
and then compute its fast Fourier/Walsh-Hadamard transform $\hat{\x}$ in $\complexityO(N\log N)$ time.
By randomly sampling $n$ entries from $\hat{\x}$, which takes $\complexityO(n)$ time, 
one finally obtains the randomly projected version of $\x$. 
For typical scenarios in practice we have $n\gg\log N$, 
hence random projections by partial Fourier/Hadamard matrices and partial circulant matrices 
are much efficient than random projections by unstructured matrices. 

\paragraph{Discussions}
As one may see in Fig. \ref{fig:err-rate}, 
the error rates of all types of random projections converge to the baseline, 
i.e. the error rate of SSC without random projection, as $n$ tends to $N$. 
This is consistent with our theory that all these random matrices 
preserve affinities between subspaces. 
The running time of random projection shown in Fig. \ref{fig:running-time} 
coincides with our analysis above. 
Note that the running time is plotted in logarithmic scale.
For partial Fourier/Hadamard matrix and partial circulant matrix,
the running time is almost constant in $n$. 
Both the running times of Student-t matrix and Gaussian matrix grow linearly with respect to $n$;
this is because that they both involve a $\complexityO(nN)$-time matrix-by-vector multiplication.
Note that Student-t matrix takes a somewhat longer time than Gaussian matrix, 
possibly due to higher complexity in its implementation, 
i.e. in generating Student-t distributed variables.
The running time of partial Hadamard matrix is longer than 
that of partial Fourier matrix and partial circulant matrix, 
which may be caused by a less efficient implementation of FWHT than that of FFT. 
Except for very small $n$, 
partial Fourier/Hadamard matrix and partial circulant matrix is significantly faster than 
Gaussian matrix and Student-t matrix.
For small $n$ the running time of Gaussian/Student-t matrix is shorter than 
that of partial Hadamard matrix. 
However, as $n$ grows large, for instance when $n>10000$, 
partial Hadamard matrix becomes the better choice.
Our analysis indicates that this advantage would become even more obvious 
when the ambient dimension $N$ is larger and $n\gg\log N$.

\begin{figure}\centering
\includegraphics[width=0.7\linewidth]{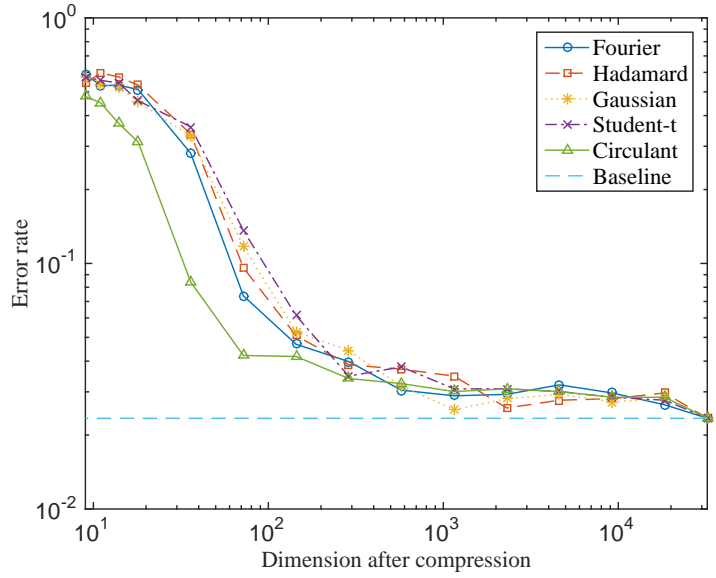}
\caption{Clustering error rate vs. compressed dimension $n$ for Yale Face Dataset B. 
The error rate of SSC without random compression is approximately 2.3\%.
}\label{fig:err-rate}
\end{figure}

\begin{figure}\centering
\includegraphics[width=0.7\linewidth]{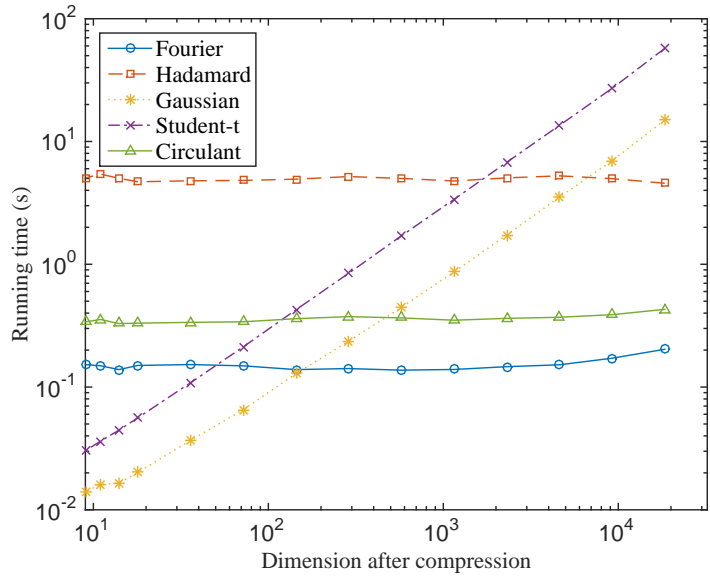}
\caption{Average running time of random projecting $256$ vectors vs. compressed dimension $n$. 
Note that for partial Fourier matrix and partial Hadamard matrix the running time is almost 
independent of $n$. 
Both $x$-axis and $y$-axis are drawn in logarithmic scale, 
so the running time of Gaussian matrix and Bernoulli matrix grows linearly in the figure.}
\label{fig:running-time}
\end{figure}


\section{Conclusion}\label{sec:conclusion}
In this paper we provided a deterministic characterization of subspace RIP 
in terms of near-isometry on subspaces. 
This result enables to analyze the subspace RIP of matrices with a unified approach.
As examples, we prove with this result that a large variety of random matrices, 
including subgaussian matrices, partial Fourier/Hadamard matrices and partial circulant/Toeplitz matrices, 
random matrices with independent strongly regular rows, and log-concave ensembles. 
This significantly enlarges the collection of random matrices known to possess subpace RIP 
in literature, demonstrating the applicability of subspace RIP. 

Subspace RIP, or in plain language, 
the almost-invariance of affinity under random projections, 
has played an important role in the analysis 
of Compressed Subspace Clustering algorithms. 
Hence our result demonstrates more scenarios where random projection and CSC may apply 
and has the potential to give better performance guarantee for CSC algorithms. 
Furthermore, 
since subspace RIP is a universal concept that does not depend on any specific algorithm, 
our result may find its application in various subspace-based machine learning algorithms, 
which we leave to future research.


\section{Appendix: Exponential Johnson-Lindenstrauss Property}\label{apd:jl-property}

This appendix deals with random matrices with exponential Johnson-Lindenstrauss property 
and their subspace RIP. 
The appendix is divided into two parts. 
In the first part, we prove that subgaussian matrices and partial Fourier/Hadamard matrices
satisfy exponential Johnson-Lindenstrauss property 
to illustrate the wide applicability of this concept. 
In the second part, we introduce the standard tool of covering arguments 
and use it to prove Lemma \ref{lem:JL-orthonormal-preserving}.
\subsection{Examples of Random Matrices with Exponential JL Property}
Here we provide a non-comprehensive list of common random matrices that fulfill
exponential Johnson-Lindenstrauss property. 
Such matrices can be roughly divided into two categories, whose exponential Johnson-Lindenstrauss property 
respectively stems from subgaussian concentration property
and Restricted Isometry Property for sparse vectors. 
The most important example in the first category is subgaussian random matrices, 
and that in the second category is randomly sampled Bounded Orthogonal Systems (BOS), 
a class of random matrices including partial Fourier matrices and partial Hadamard matrices.
We discuss these two categories respectively. 
\paragraph{Subgaussian concentration}
We begin by defining subgaussian random variables and subgaussian random vectors. 
\begin{definition}[Section 7.1, \cite{Foucart2017Mathematical}]\label{def:subgaussian}
Fix a positive constant $K$. 
A \emph{$K$-subgaussian random variable} is a random variable $X$ satisfying 
\begin{equation*}
    \mathbb E{\rm e}^{tx}\le {\rm e}^{K^2t^2/2},
\end{equation*}
for any $t\in\mathbb R$. 
A \emph{subgaussian matrix} is a random matrix 
with each entry a $K$-subgaussian random variable.
\end{definition}

Gaussian variables might be the most common examples of subgaussian random variables. 
Other examples include variables with Rademacher distribution 
or uniform distribution on $[-1,1]$. 
In fact, any centered bounded random variable is subgaussian. 

It is possible to generalized the above definition 
to a multi-dimensional setting.
\begin{definition}
Let $\bm\Gamma$ be a positive semidefinite matrix.
An \emph{$\bm\Gamma$-subgaussian random vector} is a random vector $\u$ 
taking value in $\mathbb R^n$ such that for any $\x\in\mathbb R^n$, 
\begin{equation*}
    \mathbb E{\rm e}^{\langle\u,\x\rangle}\le {\rm e}^{\langle\bm\Gamma\x,\x\rangle/2}.
\end{equation*}
Such a random vector is said to satisfy 
\emph{Bernstein condition}, if 
\begin{equation*}
    \mathbb E\left|\|\u\|^2-\mathbb E\|\u\|^2\right|^k\le Ck!\|\bm\Gamma\|_{\rm op}^{k-2}\|\mathbb E(\u\u^{\rm T})\|_{\rm F}^2
\end{equation*}
for some constant $C>0$, where $\|\cdot\|_{\rm op}$ denotes the operator norm.
\end{definition}

\begin{theorem}[\cite{Chen2018Hanson}, Theorem 2.10]
If $\A$ is a random matrix with independent 
$\bm\Gamma$-subgaussian columns satisfying Bernstein condition. 
Furthermore, assume $\mathbb E(\A^{\rm T}\A)={\bf I}$.
Then $\A$ satisfies exponential Johnson-Lindenstrauss property. 
\end{theorem}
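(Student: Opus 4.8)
The plan is to establish subexponential concentration of $\|\A\x\|^2$ about $\|\x\|^2$ for a fixed $\x\in\mathbb R^N$, from which the required tail bound with some $\tilde c>0$ follows. By homogeneity it suffices to treat a unit vector, $\|\x\|=1$. Writing $\vect a_1,\ldots,\vect a_N\in\mathbb R^n$ for the independent columns of $\A$, we have $\A\x=\sum_{j=1}^N x_j\vect a_j$, so the task is to control the squared norm of a linear combination of independent $\bm\Gamma$-subgaussian vectors. First I would record two elementary facts. (i) $\A\x$ is again $\bm\Gamma$-subgaussian: by independence $\mathbb E\mathrm e^{\langle\A\x,\y\rangle}=\prod_j\mathbb E\mathrm e^{\langle\vect a_j,x_j\y\rangle}\le\prod_j\mathrm e^{x_j^2\langle\bm\Gamma\y,\y\rangle/2}=\mathrm e^{\langle\bm\Gamma\y,\y\rangle/2}$, using $\sum_j x_j^2=1$. (ii) $\mathbb E\|\A\x\|^2=\x^{\rm T}\mathbb E(\A^{\rm T}\A)\x=\|\x\|^2=1$ by the isotropy hypothesis $\mathbb E(\A^{\rm T}\A)=\vect I$. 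The goal thus reduces to proving $\probP(|\|\A\x\|^2-1|>\varepsilon)\le 2\mathrm e^{-\tilde c\varepsilon^2 n}$.

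Next I would split the centered quadratic form into a diagonal and an off-diagonal part,
\[
\|\A\x\|^2-1=\underbrace{\sum_{j=1}^N x_j^2\big(\|\vect a_j\|^2-\mathbb E\|\vect a_j\|^2\big)}_{\text{diagonal}}+\underbrace{\sum_{i\ne j}x_ix_j\langle\vect a_i,\vect a_j\rangle}_{\text{off-diagonal}},
\]
which is legitimate because $\mathbb E\langle\vect a_i,\vect a_j\rangle=0$ for $i\ne j$ by isotropy. The diagonal part is a weighted sum of independent, centered, subexponential variables: the Bernstein condition on each column gives $\mathbb E|\|\vect a_j\|^2-\mathbb E\|\vect a_j\|^2|^k\le Ck!\|\bm\Gamma\|_{\rm op}^{k-2}\|\mathbb E(\vect a_j\vect a_j^{\rm T})\|_{\rm F}^2$, and since under the normalization $\mathbb E(\A^{\rm T}\A)=\vect I$ both $\|\bm\Gamma\|_{\rm op}$ and $\|\mathbb E(\vect a_j\vect a_j^{\rm T})\|_{\rm F}^2$ are of order $1/n$, a standard Bernstein inequality for sums of independent subexponentials (exploiting $\sum_j x_j^4\le\sum_j x_j^2=1$) yields a tail of the form $2\mathrm e^{-c\min(\varepsilon^2,\varepsilon)n}=2\mathrm e^{-c\varepsilon^2 n}$ for $\varepsilon<1$.

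The off-diagonal term is the genuine difficulty and is precisely the content of a Hanson-Wright inequality for subgaussian vectors. The hard part will be controlling this order-two chaos $\sum_{i\ne j}x_ix_j\langle\vect a_i,\vect a_j\rangle$ without access to independent scalar entries. The plan is to decouple, replacing the second family of columns by an independent copy $\vect a_j'$ at the cost of absolute constants, so that, conditioning on $(\vect a_i)_i$, the expression becomes a linear form $\langle\,\cdot\,,\sum_j x_j\vect a_j'\rangle$ in the $\bm\Gamma$-subgaussian vector $\sum_j x_j\vect a_j'$; the subgaussian hypothesis furnishes a conditional tail bound, which I would then integrate against the outer (again $\bm\Gamma$-subgaussian) randomness to obtain a subexponential tail of order $\mathrm e^{-c\varepsilon^2 n}$ for the chaos. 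Combining the diagonal and off-diagonal estimates by a union bound and absorbing constants into a single $\tilde c$ completes the argument. The one place demanding real care is this decoupling-and-integration step, where the interplay between the subgaussian moment generating bound and the Bernstein moment control must be tracked so as to preserve the $1/n$ scaling that produces the exponent $\varepsilon^2 n$.
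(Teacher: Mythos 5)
First, a remark on the comparison itself: the paper gives no proof of this statement --- it is imported verbatim as Theorem 2.10 of the cited reference --- so there is no internal argument to measure yours against. Your decomposition of $\|\A\x\|^2-1$ into a diagonal Bernstein part and an off-diagonal decoupled chaos part is the standard Hanson--Wright route and is almost certainly the strategy of the cited source; the overall architecture is sound, and steps (i) and (ii) are correct.

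There is, however, one genuine gap, and it sits exactly where the exponent $\varepsilon^2 n$ must come from. You assert that the normalization $\mathbb E(\A^{\rm T}\A)=\mathbf I$ forces $\|\bm\Gamma\|_{\rm op}$ and $\|\mathbb E(\vect a_j\vect a_j^{\rm T})\|_{\rm F}^2$ to be of order $1/n$. It does not. The normalization only gives $\operatorname{tr}\mathbb E(\vect a_j\vect a_j^{\rm T})=1$, and a trace-one positive semidefinite $n\times n$ matrix can have squared Frobenius norm anywhere between $1/n$ and $1$; likewise subgaussianity only yields $\mathbb E(\vect a_j\vect a_j^{\rm T})\preceq\bm\Gamma$, which lower-bounds $\operatorname{tr}\bm\Gamma$ but places no upper bound on $\|\bm\Gamma\|_{\rm op}$. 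Concretely, take $\vect a_j=g_j\vect e_1\in\mathbb R^n$ with $g_1,\ldots,g_N$ i.i.d.\ standard Gaussian and $\bm\Gamma=2\vect e_1\vect e_1^{\rm T}$: the columns are independent and $\bm\Gamma$-subgaussian, the Bernstein condition holds (since $\mathbb E|g^2-1|^k\le 2^{k+1}k!=8k!\,\|\bm\Gamma\|_{\rm op}^{k-2}$), and $\mathbb E(\A^{\rm T}\A)=\mathbf I$, yet for a unit vector $\x$ the quantity $\|\A\x\|^2$ is the square of a single standard Gaussian and does not concentrate at rate $\mathrm e^{-c\varepsilon^2n}$ for any $c>0$. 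So the $1/n$ scaling must be an additional hypothesis (presumably present in the cited Theorem 2.10 and lost in the paper's transcription), not a consequence of isotropy; as written, both your diagonal Bernstein bound and your conditional tail for the decoupled chaos (which needs $\langle\bm\Gamma\vect w,\vect w\rangle\le\|\bm\Gamma\|_{\rm op}\|\vect w\|^2\lesssim1/n$) rest on this unestablished claim. Two smaller points to tidy in the chaos step: after decoupling, $\sum_{i\ne j}x_ix_j\langle\vect a_i,\vect a_j'\rangle$ is not literally $\langle\A\x,\A'\x\rangle$ because of the excluded diagonal, and when you integrate the conditional tail over the outer randomness you must control $\|\A\x\|^2$ on the conditioning event by a separate a priori estimate (e.g.\ a Laplace-transform bound for $\|\A\x\|^2$ from subgaussianity alone), not by the very inequality you are proving.
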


\begin{corollary}
Assume $\A$ is a random matrix satisfying $\mathbb E(\A^{\rm T}\A)={\bf I}$. 
If in addition $\A$ is in one of the following form, 
then $\A$ satisfies exponential Johnson-Lindenstrauss property: 
\begin{enumerate}[label=\alph*)]
    \item A random matrix with independent subgaussian rows;
    \item A Gaussian matrix with independent columns;
    \item A product of a positive semidefinite matrix 
    and a subgaussian matrix with independent entries.
\end{enumerate}
\end{corollary}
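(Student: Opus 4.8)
The plan is to reduce cases b) and c) to the cited theorem (\cite{Chen2018Hanson}, Theorem 2.10), which requires only that $\A$ have independent $\bm\Gamma$-subgaussian columns obeying the Bernstein condition; since $\mathbb E(\A^{\mathrm T}\A)=\mathbf I$ is already assumed, the sole remaining task is to verify the column structure and the Bernstein condition in each case. Case a) does not fit the column hypothesis and will be handled by a short self-contained argument.

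For b), the columns of a Gaussian matrix with independent columns are by definition $\bm\Gamma$-subgaussian (with $\bm\Gamma$ bounding the column covariances, whose traces are forced to equal $1$ by $\mathbb E(\A^{\mathrm T}\A)=\mathbf I$). The Bernstein condition for a centered Gaussian column $\u$ reduces to the classical moment bounds for the chi-squared-type quantity $\|\u\|^2-\mathbb E\|\u\|^2$, whose $k$-th absolute moment is controlled by $k!\,\|\bm\Gamma\|_{\mathrm{op}}^{k-2}\|\mathbb E(\u\u^{\mathrm T})\|_{\mathrm F}^2$ up to a universal constant; the theorem then applies verbatim. For c), write $\A=\G\vect B$ with $\G$ positive semidefinite and $\vect B$ subgaussian with independent entries. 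Independence of the entries of $\vect B$ makes its columns independent, hence so are the columns $\G\vect b_j$ of $\A$, and each such column, being a fixed linear image of a subgaussian vector, is again $\bm\Gamma$-subgaussian. The only nontrivial point is the Bernstein condition, which amounts to bounding the higher central moments of $\|\G\vect b_j\|^2=\vect b_j^{\mathrm T}\G^2\vect b_j$; I would obtain these through the Hanson--Wright inequality applied to this quadratic form and match the resulting growth to the required $k!$ form.

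Case a) does not fit the column hypothesis of the theorem, since a matrix with independent rows need not have independent columns, so I would argue directly. Writing the rows as $\vect a_1^{\mathrm T},\ldots,\vect a_n^{\mathrm T}$, one has $\|\A\x\|^2=\sum_{i=1}^n\langle\vect a_i,\x\rangle^2$, a sum of independent random variables, each the square of a subgaussian and hence subexponential. Centering via $\mathbb E\|\A\x\|^2=\x^{\mathrm T}\mathbb E(\A^{\mathrm T}\A)\x=\|\x\|^2$ and invoking Bernstein's inequality for sums of independent subexponential random variables yields the tail bound $2\mathrm e^{-\tilde c\varepsilon^2 n}$ for $\varepsilon\in(0,1)$, which is exactly the exponential Johnson-Lindenstrauss property.

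The step I expect to be the main obstacle is verifying the Bernstein condition in case c): one must control $\mathbb E|\vect b_j^{\mathrm T}\G^2\vect b_j-\mathbb E\,\vect b_j^{\mathrm T}\G^2\vect b_j|^k$ uniformly in $k$ and fit the constants to the prescribed form $Ck!\,\|\bm\Gamma\|_{\mathrm{op}}^{k-2}\|\mathbb E(\u\u^{\mathrm T})\|_{\mathrm F}^2$. Here the interplay between $\|\G^2\|_{\mathrm{op}}$, which sets the subexponential scale, and $\|\G^2\|_{\mathrm F}$, which sets the variance, must be tracked carefully so that the Frobenius-norm factor emerges with the correct power.
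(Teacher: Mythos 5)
Your proposal is correct and follows essentially the same route as the paper, which disposes of a) by citing the classical Bernstein-inequality argument for independent subexponential summands from Vershynin's notes and of b), c) by reducing to the quoted Theorem 2.10 of the Hanson--Wright reference, exactly as you do. The only detail worth making explicit in case a) is that the normalization $\mathbb E(\A^{\rm T}\A)=\mathbf I$ forces the subexponential norms of the summands $\langle\vect a_i,\x\rangle^2$ to scale like $\|\x\|^2/n$, which is what produces the $\mathrm e^{-\tilde c\varepsilon^2 n}$ rate after applying Bernstein's inequality.
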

\begin{proof}
    a) is classical and can be found in \cite{Vershynin2010Introduction}; 
    b) and c) are proved in \cite{Chen2018Hanson}.
\end{proof}

\paragraph{Restricted Isometry Property}
Restricted isometry property for sparse vectors \cite{Candes2008Restricted}
has been a very powerful tool in analysis of compressed sensing 
and related algorithms. 
A vector is called $s$-sparse if at most $s$ of its entries are non-zero. 
RIP for sparse vectors are defined in the following way:
\begin{definition}
A matrix $\A$ is said to possess the RIP if 
there exists a function $\delta(s)\ge0$, 
such that for any positive integer $s$ and any $s$-sparse vector $\x$,
\begin{equation*}
    (1-\delta(s))\|\x\|^2\le\|\A\x\|^2\le(1+\delta(s))\|\x\|^2.
\end{equation*} 
The function $\delta(s)$ is called the \emph{restricted isometry constant} 
of $\A$.
\end{definition}

It is easy to see that exponential Johnson-Lindenstrauss property implies RIP, 
see for instance Theorem 5.2 in \cite{Baraniuk2008Simple}). 
The converse is also true in some sense, as the following theorem shows.
\begin{theorem}[\cite{Rauhut2010Compressive,Krahmer2011New}]\label{thm:apd:rip-to-jl}
Assume $\A$ is an $n\times N$ matrix with RIP and 
restricted isometry constant $\delta(s)$. 
Fix some $\varepsilon\in(0,1/2)$. 
Assume further that for some $s>0$ we have $\delta(s)<\varepsilon/4$.
Let $\D_{\epsilon}$ be a diagonal matrix with i.i.d. Rademacher random variables 
on its diagonal, then for any $\x\in\mathbb R^N$, we have
\begin{equation*}
    \probP(\left|\|\A\D_{\epsilon}\x\|^2-\|\x\|^2\right|>\varepsilon\|\x\|^2)\le 2{\rm e}^{-\tilde c s}
\end{equation*}
for some universal constant $\tilde c>0$.
\end{theorem}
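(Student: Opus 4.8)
The plan is to recast the random quantity as a quadratic form in the Rademacher signs, isolate its mean through the RIP, and control the remaining fluctuation by a Hanson--Wright-type bound whose exponent is governed by two norms of the associated matrix. First I would normalize to $\|\x\|=1$ and use the identity $\D_\epsilon\x=\operatorname{diag}(\x)\epsilon$, where $\epsilon$ is the Rademacher vector on the diagonal of $\D_\epsilon$; this turns the variable of interest into $\|\A\D_\epsilon\x\|^2=\epsilon^{\mathrm T}B\epsilon$ with the symmetric positive semidefinite matrix $B=\operatorname{diag}(\x)\A^{\mathrm T}\A\operatorname{diag}(\x)$. Since $\epsilon_i^2=1$, the diagonal of $B$ contributes the deterministic mean $\operatorname{tr}(B)=\sum_i x_i^2\|\A\vect{e}_i\|^2$, which by the RIP applied to the $1$-sparse vectors $\vect{e}_i$ differs from $\|\x\|^2=1$ by at most $\delta(1)\le\delta(s)<\varepsilon/4$. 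It therefore suffices to show that the mean-zero chaos $\epsilon^{\mathrm T}B_0\epsilon$, where $B_0$ is the hollow (zero-diagonal) part of $B$, exceeds $3\varepsilon/4$ with probability at most $2\mathrm e^{-\tilde c s}$.

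For the fluctuation I would invoke the standard tail bound for second-order Rademacher chaos, $\probP(|\epsilon^{\mathrm T}B_0\epsilon|>t)\le 2\exp(-c\min\{t^2/\|B_0\|_{\mathrm F}^2,\;t/\|B_0\|_{\mathrm{op}}\})$. With $t\asymp\varepsilon$ and $\delta(s)<\varepsilon/4$, the desired exponent $\tilde c s$ can only emerge if the two norms of $B_0$ beat the trivial estimate $\|B_0\|\approx\delta(s)$ by an appropriate power of $s$ — morally $\|B_0\|_{\mathrm F}\lesssim\delta(s)/\sqrt{s}$ for the variance term, and a correspondingly small operator norm for the second term. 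Establishing these estimates is the entire content of the theorem and the step I expect to dominate the work.

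The reason such a gain in $s$ is at all available, and the reason it is delicate, is that the RIP is an operator-norm statement on $s$-sparse supports rather than a coordinatewise bound: the naive inequality $|(B_0)_{ij}|=|x_ix_j\langle\A\vect{e}_i,\A\vect{e}_j\rangle|\le\delta(2)|x_i||x_j|$ only gives $\|B_0\|_{\mathrm F}\lesssim\delta(s)$ and is useless here. To extract the extra factor I would shell the coordinates of $\x$ by decreasing magnitude into consecutive blocks $T_0,T_1,\dots$ of size $s$. The Frobenius bound follows row by row: for fixed $i$, the sum $\sum_{j\ne i}x_j^2\langle\A\vect{e}_i,\A\vect{e}_j\rangle^2$ is split over the blocks $T_k$, and on each block $\sum_{j\in T_k}\langle\A\vect{e}_i,\A\vect{e}_j\rangle^2$ is the squared norm of an off-diagonal block of $\A^{\mathrm T}\A$, hence $\le\delta(2s)^2$; summing the weights $\|\x_{T_k}\|_\infty^2$ via the decay $\|\x_{T_{k+1}}\|_\infty\le\|\x_{T_k}\|_1/s$ produces the gain in $s$. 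The operator-norm estimate is the harder half and is where I expect the main obstacle: I would write $u^{\mathrm T}B_0 v=\langle\A(\operatorname{diag}(\x)u),\A(\operatorname{diag}(\x)v)\rangle$ (minus its diagonal), shell both $\operatorname{diag}(\x)u$ and $\operatorname{diag}(\x)v$ into $s$-sparse blocks, and bound every cross term through the near-orthogonality $|\langle\A\vect p,\A\vect q\rangle|\le\delta(2s)\|\vect p\|\|\vect q\|$ of images of disjointly supported sparse vectors. The block norms are then summed using the key fact $\|\operatorname{diag}(\x)u\|_1\le\|\x\|\,\|u\|=1$ together with the magnitude ordering; the care lies precisely in tracking the interplay between the heaviest block of $\x$ (where the crude bounds are loose but $B_0$ is correspondingly small) and the long tail of small coordinates.

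Finally I would assemble the pieces: the two norm bounds together with $\delta(s)<\varepsilon/4$ make $\min\{t^2/\|B_0\|_{\mathrm F}^2,\;t/\|B_0\|_{\mathrm{op}}\}\gtrsim s$ at $t=3\varepsilon/4$, so that $\probP(|\epsilon^{\mathrm T}B_0\epsilon|>3\varepsilon/4)\le 2\mathrm e^{-\tilde c s}$; combining this with the deterministic estimate $|\operatorname{tr}(B)-1|\le\varepsilon/4$ and the triangle inequality yields exactly the claimed bound. In summary, the reduction to a chaos, the control of the mean, and the final assembly are all routine; the genuine difficulty is the block-decomposition bookkeeping that converts the sparse, operator-norm content of the RIP into global Frobenius- and operator-norm control of $B_0$, and the operator-norm half of that estimate is the point I would expect to cost essentially all of the effort.
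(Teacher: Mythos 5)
The paper does not reprove this statement; it invokes it as Theorem 3.1 of Krahmer--Ward \cite{Krahmer2011New}, so your proposal should be measured against that argument. Your reduction to the Rademacher chaos $\epsilon^{\mathrm T}B\epsilon$ with $B=\operatorname{diag}(\x)\A^{\mathrm T}\A\operatorname{diag}(\x)$ and the control of the mean via $\delta(1)$ are both fine. The gap is in the fluctuation step: the norm bounds you need for Hanson--Wright, namely $\|B_0\|_{\mathrm F}\lesssim\delta(s)/\sqrt{s}$ and $\|B_0\|_{\mathrm{op}}\lesssim\delta(s)/s$ for the \emph{coordinatewise}-hollow matrix $B_0$, are false. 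Take $\x=(\vect e_1+\vect e_2)/\sqrt2$: then $B_0$ has exactly two nonzero entries equal to $\tfrac12\langle\A\vect e_1,\A\vect e_2\rangle$, so $\|B_0\|_{\mathrm F}=\|B_0\|_{\mathrm{op}}$ can be of order $\delta(2)\asymp\varepsilon$, with no gain in $s$ whatsoever. For such $\x$ the chaos equals $\pm\,2x_1x_2\langle\A\vect e_1,\A\vect e_2\rangle$ with probability one, so Hanson--Wright at level $t\asymp\varepsilon$ can only ever return a constant exponent; the theorem survives for this $\x$ only because the chaos is \emph{deterministically} below $\varepsilon/4$, a fact no concentration inequality will recover. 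Your caveat that ``where the crude bounds are loose, $B_0$ is correspondingly small'' is therefore not a repairable bookkeeping issue: the target norm estimates themselves are unattainable whenever $\x$ has a few dominant coordinates.

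The cited proof avoids this by hollowing out the \emph{block}-diagonal rather than the diagonal. Partition $[N]$ into shells $T_0,T_1,\dots$ of size $s$ by decreasing $|x_i|$. Each block-diagonal quadratic form $\epsilon_{T_j}^{\mathrm T}B_{T_jT_j}\epsilon_{T_j}=\|\A(\D_\epsilon\x)_{T_j}\|^2$ is controlled \emph{for every sign pattern} by applying the RIP to the $s$-sparse vector $(\D_\epsilon\x)_{T_j}$, whose norm equals $\|\x_{T_j}\|$; summing over $j$ absorbs all within-block fluctuation (including your problematic $T_0\times T_0$ terms) into the deterministic error $\pm\delta(s)\|\x\|^2$. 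Only the off-block-diagonal part enters the chaos, and there every term carries at least one block of index $k\ge1$, for which the shelling inequality $\|\x_{T_k}\|_\infty\le\|\x_{T_{k-1}}\|_1/s\le\|\x_{T_{k-1}}\|/\sqrt{s}$ supplies the factors of $1/\sqrt{s}$ and $1/s$ in the Frobenius and operator norms that your argument tries, and fails, to extract from the full hollow matrix. With that modification the rest of your outline (near-orthogonality of disjointly supported sparse images, Hanson--Wright, triangle inequality) goes through as you describe.
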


Partial Fourier matrices and partial Hadamard matrices 
are both examples of a more general class of random matrices, 
namely random sampled Bounded Orthonormal Systems (BOS). 
For such matrices it was shown that their restricted isometry constants 
are sufficiently small:
\begin{theorem}[\cite{Rauhut2010Compressive, Foucart2017Mathematical}]\label{thm:apd:rip-of-bos}
Let $\A\in\mathbb C^{n\times N}$ be the random sampling associated 
to a BOS with constant $K\ge 1$. 
For $\zeta,\eta_1,\eta_2\in(0,1)$, if
\begin{align*}
    \frac{n}{\log(9n)}&\ge C_1\eta_1^{-2}K^2s\log^2(4s)\log(8N),\\
    n&\ge C_2\eta_2^{-2}K^2s\log(\zeta^{-1}),
\end{align*}
then with probability at least $1-\zeta$ the restricted isometry 
constant $\delta(s)$ of $\frac{1}{\sqrt n}\A$ satisfies 
$\delta(s)\le\eta_1+\eta_1^2+\eta_2$. 
(Here $C_1$, $C_2$ are universal positive constants.)
\end{theorem}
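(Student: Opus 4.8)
The plan is to realize the restricted isometry constant of $\tfrac{1}{\sqrt n}\A$ as the supremum of a centered empirical process and to control it in two stages: an expectation bound obtained by chaining (which governs the first sample-complexity hypothesis) and a deviation bound obtained by concentration of measure (which governs the second hypothesis and the failure probability $\zeta$). Write $\vect a_1,\ldots,\vect a_n$ for the complex sampling vectors of the bounded orthonormal system, so that $(\A\x)_j=\langle\vect a_j,\x\rangle$; the defining features are the pointwise bound $\|\vect a_j\|_\infty\le K$ and the orthonormality relation $\mathbb E[\vect a_j\vect a_j^{*}]=\vect I$. For an $s$-sparse unit vector $\x$ the latter gives $\mathbb E\,\tfrac1n\|\A\x\|^2=\|\x\|^2$, hence
\begin{equation*}
  \delta(s)=\sup_{\|\x\|_0\le s,\ \|\x\|=1}\left|\frac1n\sum_{j=1}^n\Big(|\langle\vect a_j,\x\rangle|^2-\mathbb E|\langle\vect a_j,\x\rangle|^2\Big)\right|.
\end{equation*}
It therefore suffices to bound the expectation of this supremum and then its fluctuation about the mean.

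First I would bound $\mathbb E\,\delta(s)$. Symmetrization introduces Rademacher signs $\epsilon_j$ and reduces the task, up to a factor $2$, to controlling $\mathbb E\sup_{\x}\big|\tfrac1n\sum_j\epsilon_j|\langle\vect a_j,\x\rangle|^2\big|$. Conditionally on the $\vect a_j$ this is a subgaussian process indexed by the set $D$ of $s$-sparse unit vectors, whose increments are governed by a pseudometric comparable to $\|\x-\x'\|$ weighted by $\max_{\x}\max_j|\langle\vect a_j,\x\rangle|$. I would estimate the expected supremum by Dudley's entropy integral, using two complementary covering-number bounds for $D$: a volumetric bound obtained by covering each of the $\binom{N}{s}$ coordinate subspaces, giving $\log\mathcal N\lesssim s\log(N/s)+s\log(1/t)$ at scale $t$, and a Maurey-type empirical bound in the stronger norm $\x\mapsto\max_j|\langle\vect a_j,\x\rangle|$ that exploits the boundedness constant $K$. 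Splitting Dudley's integral at a suitable scale and combining the two estimates produces a self-referential bound of the form $\mathbb E\,\delta(s)\lesssim K\sqrt{(1+\mathbb E\,\delta(s))\,s\log^2(s)\log(N)\log(n)/n}$, which under the first hypothesis resolves to $\mathbb E\,\delta(s)\le\eta_1+\eta_1^2$.

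Second, I would upgrade this expectation bound to a high-probability statement by a concentration inequality for suprema of bounded empirical processes — for instance a Bousquet/Talagrand-type inequality, or a Bernstein bound applied to the centered rank-one summands $|\langle\vect a_j,\x\rangle|^2-\mathbb E|\langle\vect a_j,\x\rangle|^2$. On the sparse index set the variance and sup-norm of these summands are controlled by $K^2 s$, so the deviation term is of order $K\sqrt{s\log(\zeta^{-1})/n}$, and the second hypothesis guarantees $\delta(s)\le\mathbb E\,\delta(s)+\eta_2$ with probability at least $1-\zeta$. Combining the two stages yields $\delta(s)\le\eta_1+\eta_1^2+\eta_2$ on the same event, which is the claim.

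The hard part will be the chaining step, namely extracting the correct polylogarithmic factors in $\mathbb E\,\delta(s)$. The volumetric estimate alone is too crude at small scales, and only the Maurey empirical method — the most technical ingredient — yields covering numbers sharp enough in the $\max_j|\langle\vect a_j,\cdot\rangle|$ norm to produce the $\log^2(4s)\log(8N)$ dependence rather than a larger power of the logarithms; carefully balancing the two regimes of Dudley's integral, and unwinding the self-referential bound on $\mathbb E\,\delta(s)$, is what makes the argument delicate.
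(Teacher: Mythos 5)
Your outline is the standard argument from the cited sources (Rauhut's lecture notes and Foucart--Rauhut, Chapter 12), which the paper itself imports without proof: the first hypothesis controls $\mathbb E\,\delta(s)$ via symmetrization, Dudley's integral with the volumetric and Maurey-type covering bounds in the $\max_j|\langle\vect a_j,\cdot\rangle|$ seminorm, and resolution of the self-bounding inequality (whence $\eta_1+\eta_1^2$), while the second hypothesis feeds a Bousquet/Talagrand deviation bound giving the additive $\eta_2$ with probability $1-\zeta$. This matches the intended proof in both structure and in how each hypothesis and each term of the conclusion arises, so there is nothing to add.
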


One may combine Theorem \ref{thm:apd:rip-to-jl} and \ref{thm:apd:rip-of-bos} 
to obtain several modified versions of exponential Johnson-Lindenstrauss property that 
random sampled BOS satisfy. 
For example, taking $\eta_1=\eta_2=\varepsilon/4$, $s=\lceil C\varepsilon^2n/\sqrt N\rceil$ and $\zeta={\rm e}^{-C's}$, 
one may obtain that 
\begin{equation*}
    \probP(\left|\|\A\D_{\epsilon}\x\|^2-\|\x\|^2\right|>\varepsilon\|\x\|^2)\le 2{\rm e}^{- c\varepsilon^2n/\sqrt N}.
\end{equation*}

\subsection{Proof of Lemma \ref{lem:JL-orthonormal-preserving}}
We will use some standard covering arguments, e.g. \cite{Baraniuk2008Simple},
to prove \eqref{eqn:orthonormal-perturbation-bound} for 
random matrices with exponential Johnson-Lindenstrauss property.
\begin{definition}
An \emph{$\varepsilon$-net} in a subset $X$ of a Euclidean space is a finite subset $\mathcal N$ 
of $X$ such that for any $x\in X$ we have
\begin{equation*}
    \min_{z\in\mathcal N}\|x-z\|<\varepsilon.
\end{equation*}
The \emph{metric entropy} of $X$ is a function $N(X,\varepsilon)$
defined as the minimum cardinality of an $\varepsilon$-net of $X$.
\end{definition}

For subsets of Euclidean space, 
the metric entropy can be easily bounded by a volume packing argument. 
For the Euclidean unit ball the corresponding result reads as following:
\begin{lemma}[Proposition C.3, \cite{Foucart2017Mathematical}]\label{lem:metric-entropy-bound}
Let $B_n$ be the unit ball in $\mathbb R^n$. Then
\begin{equation*}
    N(B_n, \varepsilon)\le\left(1+\frac{2}{\varepsilon}\right)^n.
\end{equation*}
\end{lemma}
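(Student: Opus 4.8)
The plan is to use the standard volumetric packing argument, which proceeds by first converting the covering problem into a packing problem. First I would let $\mathcal N=\{z_1,\ldots,z_m\}$ be a \emph{maximal} collection of points in $B_n$ that are pairwise at distance at least $\varepsilon$ (such a maximal $\varepsilon$-separated set exists since $B_n$ is bounded). The key observation is that maximality automatically makes $\mathcal N$ an $\varepsilon$-net: if some $x\in B_n$ satisfied $\|x-z_i\|\ge\varepsilon$ for every $i$, then $\mathcal N\cup\{x\}$ would still be $\varepsilon$-separated, contradicting maximality; hence every $x\in B_n$ obeys $\min_{z\in\mathcal N}\|x-z\|<\varepsilon$, exactly the defining property of an $\varepsilon$-net. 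It therefore suffices to bound the cardinality $m$.

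To bound $m$ I would compare volumes. Because the centers $z_i$ are pairwise at distance at least $\varepsilon$, the open balls $B(z_i,\varepsilon/2)$ of radius $\varepsilon/2$ are pairwise disjoint; and because $\|z_i\|\le1$, each such ball is contained in the enlarged ball $B(0,1+\varepsilon/2)$. Writing $\mathrm{vol}(\cdot)$ for Lebesgue measure and using the homogeneity $\mathrm{vol}(B(0,r))=r^n\,\mathrm{vol}(B_n)$, disjointness together with containment gives
\[
  m\left(\frac{\varepsilon}{2}\right)^n\mathrm{vol}(B_n)=\sum_{i=1}^m\mathrm{vol}\!\left(B(z_i,\varepsilon/2)\right)\le\mathrm{vol}\!\left(B(0,1+\varepsilon/2)\right)=\left(1+\frac{\varepsilon}{2}\right)^n\mathrm{vol}(B_n).
\]
Cancelling the common factor $\mathrm{vol}(B_n)$ and rearranging yields $m\le\left((1+\varepsilon/2)/(\varepsilon/2)\right)^n=(1+2/\varepsilon)^n$, which is precisely the claimed bound.

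There is no genuine obstacle in this argument; the only points deserving a moment of care are verifying that a maximal $\varepsilon$-separated set is indeed an $\varepsilon$-net (the short contradiction above) and tracking the correct radii $\varepsilon/2$ and $1+\varepsilon/2$ in the volume comparison. What makes the clean, dimension-independent form of the bound possible is that the unknown constant $\mathrm{vol}(B_n)$ appears on both sides and cancels, so the estimate reduces to the elementary scaling ratio of two radii raised to the power $n$.
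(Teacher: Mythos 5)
Your proof is correct and is exactly the standard volumetric packing argument that the cited reference (Proposition C.3 of \cite{Foucart2017Mathematical}) uses; the paper itself gives no proof and simply cites that result. All the details check out: maximality of the $\varepsilon$-separated set yields the strict-inequality net property, and the volume comparison with radii $\varepsilon/2$ and $1+\varepsilon/2$ gives precisely the claimed bound.
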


The usage of covering arguments is demonstrated by the following lemma:
\begin{lemma}[\cite{Vershynin2010Introduction}, Lemma 5.3]\label{lem:covering-approximation}
Suppose $\mathcal N$ is a $\frac12$-net of $\mathbb S^{n-1}$.
Let $\A$ be a $n\times n$ matrix. Then 
\begin{equation*}
  \|\A\|\le 2\sup_{\x\in\mathcal N}\|\A\x\|.
\end{equation*}
\end{lemma}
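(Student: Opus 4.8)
The plan is to exploit the variational characterization $\|\A\| = \sup_{\x \in \mathbb{S}^{n-1}}\|\A\x\|$ together with the covering property of $\mathcal N$, the key idea being that the net radius $\tfrac12$ is small enough that the approximation error can be absorbed back into the operator norm on the left-hand side.

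First I would recall that, since $\mathbb{S}^{n-1}$ is compact and the map $\x \mapsto \|\A\x\|$ is continuous, the supremum defining $\|\A\|$ is attained at some $\x^* \in \mathbb{S}^{n-1}$, so that $\|\A\| = \|\A\x^*\|$. Because $\mathcal N$ is a $\tfrac12$-net of $\mathbb{S}^{n-1}$, there exists a point $\vect z \in \mathcal N$ with $\|\x^* - \vect z\| < \tfrac12$. I would then split the quantity of interest via the triangle inequality and the definition of the operator norm:
\begin{equation*}
  \|\A\| = \|\A\x^*\| \le \|\A\vect z\| + \|\A(\x^* - \vect z)\| \le \sup_{\x \in \mathcal N}\|\A\x\| + \|\A\|\,\|\x^* - \vect z\|.
\end{equation*}
Using $\|\x^* - \vect z\| < \tfrac12$ gives $\|\A\| \le \sup_{\x \in \mathcal N}\|\A\x\| + \tfrac12\|\A\|$, and rearranging this self-referential inequality yields $\tfrac12\|\A\| \le \sup_{\x \in \mathcal N}\|\A\x\|$, which is exactly the claimed bound.

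I do not expect any genuine obstacle here; the only point requiring care is the step in which $\|\A\|$ appears on both sides, and the whole force of the argument is that the $\tfrac12$ net radius is precisely what permits the term $\|\A\|\,\|\x^*-\vect z\|$ to be absorbed. If one wished to dispense with the compactness appeal (for instance to accommodate an infinite-dimensional ambient space), one would instead fix $\epsilon>0$, choose $\x^*$ with $\|\A\x^*\| > \|\A\| - \epsilon$, run the identical estimate, and let $\epsilon \to 0$; but in the finite-dimensional setting of the lemma the supremum is attained and this refinement is unnecessary.
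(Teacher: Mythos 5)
Your proof is correct and is exactly the standard absorption argument behind the cited result (\cite{Vershynin2010Introduction}, Lemma 5.3), which the paper invokes without reproving: pick a maximizer on the sphere, approximate it by a net point, and absorb the $\tfrac12\|\A\|$ error term into the left-hand side. Nothing further is needed.
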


Now we are ready to finish the proof of Lemma \ref{lem:JL-orthonormal-preserving}.
Note that it suffices to show 
\begin{equation}\label{eqn:in-cor-orthonormal-preserving1}
    \probP\left(\max_{\x\in\mathbb S^{d-1}}\left|\|\mphi\U\x\|^2-1\right|>\varepsilon\right)\le {\rm e}^{-\tilde c\varepsilon^2 n+3d}.\
\end{equation}
For any $\x\in\mathbb S^{d-1}$, exponential Johnson-Lindenstrauss property implies
\begin{equation}\label{eqn:in-cor-orthonormal-preserving2}
    \probP(\left|\|\mphi\U\x\|^2-1\right|>\varepsilon)\le 2{\rm e}^{-\tilde c\varepsilon^2 n}.
\end{equation}
The desired inequality (\ref{eqn:in-cor-orthonormal-preserving1}) 
follows from (\ref{eqn:in-cor-orthonormal-preserving2}) 
and a standard covering argument. 
By Lemma \ref{lem:metric-entropy-bound}, one may find a set $\mathcal N\subseteq\mathbb S^{d-1}$ 
with cardinality $5^d$ such that 
\begin{equation*}
    \max_{\x\in\mathbb S^{d-1}}\min_{{\bf z}\in\mathcal N}\|\x-{\bf z}\|\le \frac12.
\end{equation*}
Then by Lemma \ref{lem:covering-approximation}
\begin{equation}\label{eqn:cor-orthonormal-preserving-pointwise}
    \max_{\x\in\mathbb S^{d-1}}\left|\|\mphi\U\x\|^2-1\right|\le 4\max_{\x\in\mathcal N}\left|\|\mphi\U\x\|^2-1\right|.
\end{equation}
By (\ref{eqn:in-cor-orthonormal-preserving2}), 
(\ref{eqn:cor-orthonormal-preserving-pointwise}) 
and union bound, 
\begin{equation*}
    \probP(\max_{\x\in\mathcal N}\left|\|\mphi\U\x\|^2-1\right|>\varepsilon)\le 2\cdot 5^d {\rm e}^{-\tilde c\varepsilon^2 n}.
\end{equation*}
The proof is completed once we note that $2\cdot 5^d\le e^{3d}$ for $d\ge1$. 


\section{Appendix: Randomly Sampled BOS and Partial Circulant Matrices}\label{apd:fast-matrices}
The purpose of this appendix is to prove Lemma \ref{lem:BOS-orthonormal-preserving} 
and Lemma \ref{lem:circulant-orthonormal-preserving}.
We will need Theorem \ref{thm:apd:rip-to-jl} and Theorem \ref{thm:apd:rip-of-bos} 
as stated in Section \ref{apd:jl-property} as well as the covering argument adapted there. 

\subsection{Randomly Sampled BOS}
\begin{proof}[Proof of Lemma \ref{lem:BOS-orthonormal-preserving}]
  By Theorem \ref{thm:apd:rip-of-bos}, 
  the restricted isometry constants of $\mphi$ satisfy
  \begin{equation}\label{eqn:bos-small-rip}
    \probP\left(\delta(s)\le\frac{\varepsilon}{4}\right)\ge1-\exp\left(-C^{-1}\varepsilon^2K^{-2}\frac ns\right),
  \end{equation}
  given
  \begin{equation}\label{eqn:large-n}
    n\ge CK^2\varepsilon^{-2}s\log^2s\log(K^2\varepsilon^{-2}s\log N)\log N.
  \end{equation}
  If $\delta(s)\le\varepsilon/4$ holds, 
  then by Theorem \ref{thm:apd:rip-to-jl} and a standard covering argument 
  (c.f. proof of Lemma \ref{lem:JL-orthonormal-preserving}), 
  $1-\varepsilon<s_{\min}(\mphi\U)\le s_{\max}(\mphi\U)<1+\varepsilon$ holds
  with probability at least $1-\mathrm e^{-cs+3d}$. 
  Thus by union bound, 
  \[1-\varepsilon<s_{\min}(\mphi\U)\le s_{\max}(\mphi\U)<1+\varepsilon\]
  holds with probability at least 
  \[
    1-\exp(-cs+3d)-\exp\left(-C^{-1}\varepsilon^2K^{-2}\frac ns\right)
  \]
  if \eqref{eqn:large-n} holds. 

  Set 
  \begin{equation*}
    s=\left\lceil\frac{3d+\sqrt{9d^2+4cC^{-1}\varepsilon^2K^{-2}n}}{2c}\right\rceil
  \end{equation*}
  Then for $n>C'\varepsilon^{-2}K^2$ the probability above is at least 
  \[
    1-2\exp\left(\frac34d-\frac14\sqrt{9d^2+C'^{-1}\varepsilon^2K^{-2}n}\right),
  \]
  as desired. 
  It remains to check that \eqref{eqn:large-n} holds. 
  Note that $s\ge C'^{-1}\varepsilon K^{-1}\sqrt n$. 
  For $n\ge C'\log N$ we have $K^2\varepsilon^{-2}\log N\le s^2$, 
  thus $\log(K^2\varepsilon^{-2}s\log N)\le 3\log s$. 
  It then suffices to show
  \begin{equation}\label{eqn:large-n-final-form}
    n\ge CK^2\varepsilon^{-2}s\log^3s\log N.
  \end{equation}
  But $s\le C'\max\{d, \varepsilon K^{-1}\sqrt n\}$, 
  which implies \eqref{eqn:large-n-final-form} 
  when $n\ge C'K^3\varepsilon^{-3}\max\{d\log^3d,\log^3N\}$.
\end{proof}

\subsection{Partial Circulant/Toeplitz Matrices}
\begin{proof}[Proof of Lemma \ref{lem:circulant-orthonormal-preserving}]
  The equations (3.3)-(3.7) in \cite{Vybiral2011Variant}, 
  together with (3.8) there replaced by Hanson-Wright inequality to control 
  the norm of $\|\vect\Sigma\V^*\vect a\|^2$,
  imply that 
  for any $\x\in\mathbb S^{N-1}$, we have $|\|\mphi\x\|^2-1|<\varepsilon$ 
  with probability at least 
  \[1-4N\mathrm e^{-\frac{t}4}-2\mathrm e^{-\frac{cn\varepsilon^2}{t}}\]
  for any $t>0$. 
  Take $t=\varepsilon\sqrt n$, the above probability is at least $1-4N\mathrm e^{-c\varepsilon\sqrt n}$.
  By a standard covering argument (c.f. Proof of Corollary \ref{cor:JL-implies-SRIP}), 
  $1-\varepsilon<s_{\min}(\mphi\U)\le s_{\max}(\mphi\U)<1+\varepsilon$ holds with probability at least 
  $1-4N\cdot 5^{d}\cdot\mathrm e^{-c\varepsilon\sqrt n}$, 
  which is greater than $1-\mathrm e^{-c\varepsilon\sqrt n/2}$ 
  when $n>C\varepsilon^{-2}(d+\log N)^2$.
\end{proof}


\section{Appendix: Random Matrices With Heavy-Tailed Distributions}\label{apd:heavy-tails}
The purpose of this appendix is 
to provide some material on heavy-tailed distributions, in particular, 
distribution with finite moments characterized by strong regularity condition 
and log-concave ensembles, 
and to provide a proof of Lemma \ref{lem:strong-regularity-orthonormal-preserving} 
and Lemma \ref{lem:log-concave-orthonormal-preserving}.

\subsection{Finite Moments}
Our proof depends on a theorem from \cite{Srivastava2013Covariance}, 
which asserts that 
\begin{theorem}[\cite{Srivastava2013Covariance}]\label{thm:s&v}
  Consider independent isotropic random vectors $\x_i$ valued in $\mathbb R^d$. 
  Assume that $\x_i$ satisfies the strong regularity assumption: 
  for some $C',\eta>1$, one has 
  \begin{equation*}
    \probP(\|\projP\x_i\|^2>t)\le C't^{-\eta},\quad\text{for $t>C'\operatorname{rank}\projP$}
  \end{equation*}
  for every orthogonal projection $\projP$ in $\mathbb R^d$. 
  Then there exists a polynomial function $\operatorname{poly}(\cdot)$ 
  whose coefficients depend only on $C'$ and $\eta$, 
  such that for any $\varepsilon\in(0,1)$ and for $n>\operatorname{poly}(\varepsilon^{-1})d$, 
  we have 
  \begin{equation*}
    \mathbb E\left\|\frac1n\sum_{i=1}^n\x_i\x_i^{\mathrm T}-\mathbf I\right\|\le\varepsilon.
  \end{equation*}
\end{theorem}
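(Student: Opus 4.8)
The plan is to reduce the statement to two one-sided bounds on the extreme eigenvalues and then to tame the heavy tails through the strong regularity assumption rather than through moment generating functions. Writing $\A=\frac1n\sum_{i=1}^n\x_i\x_i^{\mathrm T}$, I would first observe that $\|\A-\mathbf I\|=\max\{\lambda_{\max}(\A)-1,\,1-\lambda_{\min}(\A)\}$, so it suffices to show that $\lambda_{\max}(\A)\le 1+\varepsilon$ and $\lambda_{\min}(\A)\ge 1-\varepsilon$ with high probability, and then to pass to the expectation $\mathbb E\|\A-\mathbf I\|$ by integrating the tail. Here the polynomial decay $t^{-\eta}$ with $\eta>1$ furnished by the regularity condition is exactly what makes $\|\A-\mathbf I\|$ integrable and forces $\mathbb E\|\A-\mathbf I\|\le\varepsilon$ once $n>\operatorname{poly}(\varepsilon^{-1})d$.

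The central difficulty is that $\x_i\x_i^{\mathrm T}$ has no exponential moments, so the matrix Bernstein/Chernoff machinery is unavailable: a single vector of anomalously large norm can by itself lift $\lambda_{\max}$ far above $1$. The key structural point is that the extreme eigenvalues are not governed by $\|\x_i\|$ directly but by quadratic forms $\x_i^{\mathrm T}M\x_i$ against the low-rank resolvent matrices $M$ that appear in the analysis, and each such form is comparable to $\|\projP\x_i\|^2$ for a projection $\projP$ adapted to $M$. These are precisely the quantities whose tails the regularity condition bounds, so the assumption $\probP(\|\projP\x_i\|^2>t)\le C't^{-\eta}$ is the right substitute for boundedness.

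Concretely, I would use the spectral sparsifier (Batson--Spielman--Srivastava) barrier method adapted to random input, which is the route taken in \cite{Srivastava2013Covariance}. Introduce upper and lower barrier potentials $\Phi^u(\A)=\operatorname{tr}(u\mathbf I-\A)^{-1}$ and $\Phi_\ell(\A)=\operatorname{tr}(\A-\ell\mathbf I)^{-1}$, add the rank-one terms $\frac1n\x_i\x_i^{\mathrm T}$ one at a time while advancing the barriers $u,\ell$, and use the Sherman--Morrison formula to write each potential increment in terms of $\x_i^{\mathrm T}(u\mathbf I-\A)^{-1}\x_i$ and $\x_i^{\mathrm T}(u\mathbf I-\A)^{-2}\x_i$. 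The regularity condition is then invoked to bound the conditional expectation of these increments, and, on the rare events where a term is too large to keep a potential finite, to absorb the damage by a truncation whose discarded mass is negligible because of the $t^{-\eta}$ tail; keeping the eigenvalues trapped in $(\ell,u)\subseteq(1-\varepsilon,1+\varepsilon)$ throughout yields the claim.

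I expect the main obstacle to be the upper bound $\lambda_{\max}(\A)\le 1+\varepsilon$. The lower potential can only improve as positive semidefinite terms are added, so heavy tails are comparatively harmless there; but for the upper barrier a single heavy term can blow the potential up, and the delicate part is to choose the truncation level together with the schedule for advancing $u$ so that the expected per-step degradation stays $O(\varepsilon/n)$ while the total probability of a catastrophic term remains small. Making these two requirements compatible is exactly where the quantitative work lives, and it is what dictates the precise form of $\operatorname{poly}(\varepsilon^{-1})$ and its dependence on $\eta$ and $C'$.
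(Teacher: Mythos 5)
This statement is imported verbatim from \cite{Srivastava2013Covariance} and the paper gives no proof of it --- it is used as a black box in the proof of Lemma \ref{lem:strong-regularity-orthonormal-preserving}, and the paper itself remarks that the strong regularity condition originates as an artifact of ``the spectral sparsifier method adopted in \cite{Srivastava2013Covariance}.'' Your outline correctly identifies that method: the Batson--Spielman--Srivastava barrier potentials $\operatorname{tr}(u\mathbf I-\A)^{-1}$ and $\operatorname{tr}(\A-\ell\mathbf I)^{-1}$ advanced one rank-one term at a time via Sherman--Morrison, with the regularity assumption entering exactly where you say it does --- the quadratic forms $\x_i^{\mathrm T}(u\mathbf I-\A)^{-k}\x_i$ are bounded by dyadically grouping the spectrum of the resolvent into orthogonal projections $\projP_j$ and applying $\probP(\|\projP_j\x_i\|^2>t)\le C't^{-\eta}$, which is precisely why the hypothesis must hold for \emph{every} projection. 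So your proposal reconstructs the strategy of the cited source rather than offering an alternative; as written it is of course only a sketch (the truncation level, the barrier schedule, and the resulting form of $\operatorname{poly}(\varepsilon^{-1})$ are deferred), but the method and the identification of the upper barrier as the delicate half are both correct, and no comparison with a proof in the present paper is possible since none is given.
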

\begin{proof}[Proof of Lemma \ref{lem:strong-regularity-orthonormal-preserving}]
  Let $d=\lceil\alpha n\rceil$ where $\alpha\in(0,1)$ is to be determined later 
  (we shall choose some $\alpha$ that does not depend on $n$). 
  Fix a $d$-dimensional subspace of $\mathbb R^N$ and denote by $\U$ any of its orthonormal basis; 
  it follows that $\U$ is an $N\times d$ matrix. 
  We shall prove \eqref{eqn:orthonormal-perturbation-bound} for some $\delta$ and $\varepsilon$ 
  using Theorem \ref{thm:s&v} and partial strong regularity condition \eqref{eqn:strong-regularity}. 
  Denote the rows of $\mphi$ by $\frac{1}{\sqrt n}\x_1^{\mathrm T},\ldots,\frac{1}{\sqrt n}\x_n^{\mathrm T}$.
  Let $\y_i=\U^{\mathrm T}\x_i$. 
  Then $\y_i$'s are independent, centered and isotropic. 
  Furthermore, we have
  \begin{equation*}
    \|\U^{\mathrm T}\mphi^{\mathrm T}\mphi\U-\mathbf I\|=\left\|\frac{1}n\sum_{i=1}^n\y_i\y_i^{\mathrm T}-\mathbf I\right\|
  \end{equation*}

  Before applying Theorem \ref{thm:s&v}, 
  we need to show that $\y_i$ fulfills strong regularity condition. 
  For any orthogonal projection $\projP$ of rank $k$ in $\mathbb R^d$, 
  we consider the tail of $\|\projP\y_i\|=\|\projP\U^{\mathrm T}\x_i\|$. 
  First we note that there exists some $d\times k$ matrix $\V$ with orthonormal columns such that $\projP=\V\V^{\mathrm T}$. 
  Thus $\|\projP\U^{\mathrm T}\x_i\|=\|\U\projP\U^{\mathrm T}\x_i\|=\|(\U\V)(\U\V)^{\mathrm T}\x_i\|$. 
  But $\U\V$ is a matrix of rank$\le k$ with orthonormal columns, 
  since $\operatorname{rank}(\U\V)\le\operatorname{rank}\V$ 
  and $(\U\V)^{\mathrm T}(\U\V)=\V^{\mathrm T}\U^{\mathrm T}\U\V=\mathbf I$. 
  By \eqref{eqn:strong-regularity} we have
  \begin{equation*}
    \probP(\|\projP\y_i\|^2>t)\le C't^{-\eta},\quad\text{for $t>C'\operatorname{rank}(\U\V)$,}
  \end{equation*}
  hence for $t>C'k$. 
  This shows that $\y_i$ satisfies strong regularity condition. 
  The corollary follows from Theorem \ref{thm:s&v} and Chebyshev inequality.
\end{proof}

\subsection{Log-Concave Ensembles}
We will need the following well-known results\footnote{
    Sharper results are known in literature, e.g. \cite{Mendelson2014Singular}, 
    but this does not yield significant improvement in our case.
} on covariance estimation with log-concave ensembles.
\begin{theorem}[\cite{Adamczak2010Quantitative,Adamczak2011Sharp}]\label{thm:adamczak}
  Let $\x_1,\ldots,\x_n$ be independent centered isotropic random vectors 
  in $\mathbb R^d$ with log-concave distributions. 
  Then there exists some universal constants $c\in(0,1)$, $C>0$ such that
  \begin{equation*}
    \left\|\frac1n\sum_{i=1}^n\x_i\x_i^{\mathrm T}-\vect I\right\|\le C\sqrt{\frac dn}
  \end{equation*}
  with probability at least $1-2\exp(-c\sqrt d)$.
\end{theorem}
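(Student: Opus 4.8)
The plan is to recast the operator-norm deviation as a supremum of an empirical process over the sphere and then tame that supremum by playing the subexponential behaviour of one-dimensional log-concave marginals against a strong deviation inequality for the Euclidean norm. Since $\mathbb E\langle\x_i,\theta\rangle^2=1$ for every $\theta\in\mathbb S^{d-1}$ by isotropy, one has the variational identity
\[
  \left\|\frac1n\sum_{i=1}^n\x_i\x_i^{\mathrm T}-\vect I\right\|
  =\sup_{\theta\in\mathbb S^{d-1}}\left|\frac1n\sum_{i=1}^n\langle\x_i,\theta\rangle^2-1\right|,
\]
so the task is the \emph{uniform} control of the fluctuations of the squared linear functionals $\langle\x_i,\theta\rangle^2$. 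Two structural facts drive everything. First, for fixed $\theta$ the marginal $\langle\x_i,\theta\rangle$ is a centered one-dimensional log-concave variable of variance one, hence obeys the exponential tail $\probP(|\langle\x_i,\theta\rangle|>t)\le2\mathrm e^{-ct}$; equivalently its square is subexponential with a uniformly bounded $\psi_1$-norm. Second, Paouris' deviation inequality gives $\probP(\|\x_i\|>t\sqrt d)\le\mathrm e^{-c\sqrt d\,t}$ for $t\ge C$, so that on an event of probability at least $1-2\mathrm e^{-c\sqrt d}$ we have $\max_i\|\x_i\|\le C\sqrt d$ --- precisely the scale at which the claimed probability bound lives.

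First I would symmetrize, reducing the problem (up to constants and the expectation) to bounding the Bernoulli process $\sup_\theta\frac1n\bigl|\sum_i\epsilon_i\langle\x_i,\theta\rangle^2\bigr|$ conditionally on the sample, where $\epsilon_i$ are independent Rademacher signs. The obstruction is that $\langle\x_i,\theta\rangle^2$ is genuinely heavy --- only $\psi_{1/2}$ --- so a na\"ive subgaussian chaining over the sphere fails. To circumvent this I would split each contribution at a threshold, writing $\langle\x_i,\theta\rangle^2$ as a truncated \emph{bulk} part on $\{|\langle\x_i,\theta\rangle|\le\rho\}$ and a \emph{peak} part on its complement. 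On the event $\max_i\|\x_i\|\le C\sqrt d$ the peak part is carried by very few vectors, and Paouris' inequality together with the exponential marginal tails shows its uniform contribution is $O(\sqrt{d/n})$ with the stated probability. The bulk part consists of bounded increments, so a chaining bound (or equivalently a Bernstein estimate combined with a union bound over a net of $\mathbb S^{d-1}$, whose cardinality is controlled by Lemma \ref{lem:metric-entropy-bound}) yields the $\sqrt{d/n}$ rate once $n\gtrsim d$, the dimension factor $d$ entering through the covering number of the sphere.

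The main obstacle is exactly this interplay between heavy tails and uniformity: because the quadratic functionals are only $\psi_{1/2}$, the supremum cannot be controlled by a single subgaussian chaining, and the truncation level $\rho$ must be calibrated so that the bulk chaining and the Paouris-controlled peak both land at the $\sqrt{d/n}$ scale. Getting this calibration right --- and verifying that the peak part's supremum over $\theta$ is dominated by $\max_i\|\x_i\|$ rather than by an uncontrolled accumulation --- is the delicate technical core, and is the content of the argument in \cite{Adamczak2010Quantitative,Adamczak2011Sharp}; a Talagrand-type concentration inequality for empirical processes, applied on the good event where the increments are bounded, then upgrades the resulting expectation estimate to the high-probability statement. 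The probability $1-2\exp(-c\sqrt d)$ is the natural one: it matches the thin-shell concentration scale $\sqrt d$ of $\|\x_i\|$ and cannot be improved for log-concave ensembles, consistent with the optimality of the $\mathrm e^{-\Omega(\sqrt n)}$ bound noted after Corollary \ref{cor:log-concave-SRIP}.
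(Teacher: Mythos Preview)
The paper does not prove this theorem at all: it is stated as a cited result from \cite{Adamczak2010Quantitative,Adamczak2011Sharp} and is used as a black box in the proof of Lemma~\ref{lem:log-concave-orthonormal-preserving}. There is therefore no ``paper's own proof'' to compare against.

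Your sketch is a fair high-level outline of the strategy actually used in \cite{Adamczak2010Quantitative,Adamczak2011Sharp}: variational identity on the sphere, subexponential marginals, Paouris' large-deviation inequality to control $\max_i\|\x_i\|$, a truncation into bulk and peak, chaining/net arguments for the bounded part, and a concentration step to pass from expectation to high probability. As a roadmap this is accurate, and you correctly identify the $\psi_{1/2}$ nature of the squared marginals as the technical crux. That said, what you have written is a description of the obstacles and ingredients rather than a proof: the hard quantitative work --- calibrating the truncation level, controlling the peak supremum uniformly in $\theta$ (not merely pointwise), and executing the chaining with the right increments --- is precisely what the cited papers spend their length on, and your proposal defers to them at that point. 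So your proposal is consistent with the literature, but it is not a self-contained argument; for the purposes of this paper, simply citing the theorem as the authors do is the appropriate choice.
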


By definition, 
it is easy to check that the low-dimensional marginal of a log-concave distribution is log-concave.
\begin{proof}[Proof of Lemma \ref{lem:log-concave-orthonormal-preserving}]
  Let $\U$ be an orthonormal basis for a $d$-dimensional subspace of $\mathbb R^N$, 
  where $d=\lfloor C^{-2}\varepsilon^{2}n\rfloor$. 
  For $n>2C^2\varepsilon^{-2}d_2$ we have $d\ge 2d_2$, which suffices for our purpose.
  Set $\y_i=\U^{\mathrm T}\x_i$. 
  Argue as in the proof of Lemma \ref{lem:strong-regularity-orthonormal-preserving}, 
  we obtain that $\y_i$ is independent, centered, and isotropic, 
  and that
  \begin{equation}\label{eqn:log-concave-midstep}
    \|\U^{\mathrm T}\mphi^{\mathrm T}\mphi\U-\vect I\|=\left\|\frac1n\sum_{i=1}^n\y_i\y_i^{\mathrm T}-\vect I\right\|.
  \end{equation} 

  The distribution of $\y_i$ is a $d$-dimensional marginal of $\x_i$, 
  hence is log-concave.
  It follows from Theorem \ref{thm:adamczak} and \eqref{eqn:log-concave-midstep} that 
  \begin{equation*}
    1-\varepsilon<s_{\min}(\mphi\U)\le s_{\max}(\mphi\U)<1+\varepsilon
  \end{equation*}
  with probability at least $1-2\exp(-c\sqrt d)$. 
  When $n>10c^{-2}C^2\varepsilon^{-2}d_2$, we have $d>\frac12C^{-2}\varepsilon^2n+3c^{-2}$, 
  thus the probability above is at least $1-\exp(-c'\varepsilon\sqrt n)$, 
  as desired.
\end{proof}


\bibliographystyle{IEEEtran}
\bibliography{IEEEabrv,mybibfile}

\begin{thebibliography}{10}
\providecommand{\url}[1]{#1}
\csname url@samestyle\endcsname
\providecommand{\newblock}{\relax}
\providecommand{\bibinfo}[2]{#2}
\providecommand{\BIBentrySTDinterwordspacing}{\spaceskip=0pt\relax}
\providecommand{\BIBentryALTinterwordstretchfactor}{4}
\providecommand{\BIBentryALTinterwordspacing}{\spaceskip=\fontdimen2\font plus
\BIBentryALTinterwordstretchfactor\fontdimen3\font minus
  \fontdimen4\font\relax}
\providecommand{\BIBforeignlanguage}[2]{{%
\expandafter\ifx\csname l@#1\endcsname\relax
\typeout{** WARNING: IEEEtran.bst: No hyphenation pattern has been}%
\typeout{** loaded for the language `#1'. Using the pattern for}%
\typeout{** the default language instead.}%
\else
\language=\csname l@#1\endcsname
\fi
#2}}
\providecommand{\BIBdecl}{\relax}
\BIBdecl

\bibitem{Bruckstein2009Sparse}
A.~M. Bruckstein, D.~L. Donoho, and M.~Elad, ``From sparse solutions of systems
  of equations to sparse modeling of signals and images,'' \emph{SIAM review},
  vol.~51, no.~1, pp. 34--81, 2009.

\bibitem{Baraniuk2009Random}
R.~G. Baraniuk and M.~B. Wakin, ``Random projections of smooth manifolds,''
  \emph{Foundations of computational mathematics}, vol.~9, no.~1, pp. 51--77,
  2009.

\bibitem{Elad2010Sparse}
M.~Elad, \emph{Sparse and Redundant Representations: From Theory to
  Applications in Signal and Image Processing}.\hskip 1em plus 0.5em minus
  0.4em\relax Springer Science \& Business Media, 2010.

\bibitem{Eldar2009Robust}
Y.~C. Eldar and M.~Mishali, ``Robust recovery of signals from a structured
  union of subspaces,'' \emph{IEEE Transactions on Information Theory},
  vol.~55, no.~11, pp. 5302--5316, 2009.

\bibitem{Soltanolkotabi2012Geometric}
M.~Soltanolkotabi and E.~J. Cand{\`e}s, ``A geometric analysis of subspace
  clustering with outliers,'' \emph{The Annals of Statistics}, vol.~40, no.~4,
  pp. 2195--2238, 2012.

\bibitem{Elhamifar2013Sparse}
E.~Elhamifar and R.~Vidal, ``Sparse subspace clustering: Algorithm, theory, and
  applications,'' \emph{IEEE Transactions on Pattern Analysis and Machine
  Intelligence}, vol.~35, no.~11, pp. 2765--2781, 2013.

\bibitem{Soltanolkotabi2014Robust}
M.~Soltanolkotabi, E.~Elhamifar, and E.~J. Cand{\`e}s, ``Robust subspace
  clustering,'' \emph{The Annals of Statistics}, vol.~42, no.~2, pp. 669--699,
  2014.

\bibitem{Heckel2015Rsobust}
R.~Heckel and H.~B{\"o}lcskei, ``Robust subspace clustering via thresholding,''
  \emph{IEEE Transactions on Information Theory}, vol.~61, no.~11, pp.
  6320--6342, 2015.

\bibitem{Mao2014Compressed}
X.~Mao and Y.~Gu, ``Compressed subspace clustering: A case study,'' in
  \emph{2014 IEEE Global Conference on Signal and Information Processing
  (GlobalSIP)}.\hskip 1em plus 0.5em minus 0.4em\relax IEEE, 2014, pp.
  453--457.

\bibitem{Meng2018CSC}
L.~Meng, G.~Li, J.~Yan, and Y.~Gu, ``A general framework for understanding
  compressed subspace clustering algorithms,'' \emph{IEEE Journal of Selected
  Topics in Signal Processing}, vol.~12, no.~6, pp. 1504 -- 1519, 2018.

\bibitem{Heckel2014Subspace}
R.~Heckel, M.~Tschannen, and H.~B{\"o}lcskei, ``Subspace clustering of
  dimensionality-reduced data,'' in \emph{2014 IEEE International Symposium on
  Information Theory}.\hskip 1em plus 0.5em minus 0.4em\relax IEEE, 2014, pp.
  2997--3001.

\bibitem{Heckel2017Dimensionality}
------, ``Dimensionality-reduced subspace clustering,'' \emph{Information and
  Inference: A Journal of the IMA}, vol.~6, no.~3, pp. 246--283, 2017.

\bibitem{Wang2019Theoretical}
Y.~Wang, Y.-X. Wang, and A.~Singh, ``A theoretical analysis of noisy sparse
  subspace clustering on dimensionality-reduced data,'' \emph{IEEE Transactions
  on Information Theory}, vol.~65, no.~2, pp. 685--706, 2019.

\bibitem{Li2018Restricted}
G.~Li and Y.~Gu, ``Restricted isometry property of gaussian random projection
  for finite set of subspaces,'' \emph{IEEE Transactions on Signal Processing},
  vol.~66, no.~7, pp. 1705--1720, 2018.

\bibitem{Li2019Rigorous}
G.~Li, Q.~Liu, and Y.~Gu, ``Rigorous restricted isometry property of
  low-dimensional subspaces,'' \emph{\emph{to appear on} Applied and
  Computational Harmonic Analysis, available at arXiv:1801.10058}, 2019.

\bibitem{Foucart2017Mathematical}
S.~Foucart and H.~Rauhut, \emph{A Mathematical Introduction to Compressive
  Sensing}.\hskip 1em plus 0.5em minus 0.4em\relax Springer Science \& Business
  Media, 2013.

\bibitem{Rauhut2012Restricted}
H.~Rauhut, J.~Romberg, and J.~A. Tropp, ``Restricted isometries for partial
  random circulant matrices,'' \emph{Applied and Computational Harmonic
  Analysis}, vol.~32, no.~2, pp. 242--254, 2012.

\bibitem{Srivastava2013Covariance}
N.~Srivastava and R.~Vershynin, ``Covariance estimation for distributions with
  ${2+\varepsilon}$ moments,'' \emph{Ann. Probab.}, vol.~41, no.~5, pp.
  3081--3111, 09 2013.

\bibitem{Adamczak2010Quantitative}
R.~Adamczak, A.~Litvak, A.~Pajor, and N.~Tomczak-Jaegermann, ``Quantitative
  estimates of the convergence of the empirical covariance matrix in
  log-concave ensembles,'' \emph{Journal of the American Mathematical Society},
  vol.~23, no.~2, pp. 535--561, 2010.

\bibitem{Adamczak2011Sharp}
R.~Adamczak, A.~E. Litvak, A.~Pajor, and N.~Tomczak-Jaegermann, ``Sharp bounds
  on the rate of convergence of the empirical covariance matrix,''
  \emph{Comptes Rendus Mathematique}, vol. 349, no.~3, pp. 195 -- 200, 2011.

\bibitem{Hinojosa2018Coded}
C.~Hinojosa, J.~Bacca, and H.~Arguello, ``Coded aperture design for compressive
  spectral subspace clustering,'' \emph{IEEE Journal of Selected Topics in
  Signal Processing}, vol.~12, no.~6, pp. 1589--1600, 2018.

\bibitem{Martin2016Hyperspectral}
G.~Mart{\'\i}n and J.~M. Bioucas-Dias, ``Hyperspectral blind reconstruction
  from random spectral projections,'' \emph{IEEE Journal of Selected Topics in
  Applied Earth Observations and Remote Sensing}, vol.~9, no.~6, pp.
  2390--2399, 2016.

\bibitem{Candes2008Restricted}
E.~J. Cand{\`e}s, ``The restricted isometry property and its implications for
  compressed sensing,'' \emph{Comptes rendus mathematique}, vol. 346, no. 9-10,
  pp. 589--592, 2008.

\bibitem{Clarkson2017Low}
K.~L. Clarkson and D.~P. Woodruff, ``Low-rank approximation and regression in
  input sparsity time,'' \emph{Journal of the ACM (JACM)}, vol.~63, no.~6,
  p.~54, 2017.

\bibitem{Vybiral2011Variant}
J.~Vyb{\'\i}ral, ``A variant of the johnson--lindenstrauss lemma for circulant
  matrices,'' \emph{Journal of Functional Analysis}, vol. 260, no.~4, pp.
  1096--1105, 2011.

\bibitem{Dirksen2015Tail}
S.~Dirksen, ``Tail bounds via generic chaining,'' \emph{Electronic Journal of
  Probability}, vol.~20, 2015.

\bibitem{Vershynin2010Introduction}
R.~Vershynin, ``Introduction to the non-asymptotic analysis of random
  matrices,'' \emph{arXiv preprint arXiv:1011.3027}, 2010.

\bibitem{Xu2019Convergence}
X.~Xu and Y.~Gu, ``Convergence of empirical covariance matrix without strong
  regularity assumption,'' \emph{in preparation}.

\bibitem{Bai1988Note}
Z.~D. Bai, J.~W. Silverstein, and Y.~Q. Yin, ``A note on the largest eigenvalue
  of a large dimensional sample covariance matrix,'' \emph{Journal of
  Multivariate Analysis}, vol.~26, no.~2, pp. 166--168, 1988.

\bibitem{Tikhomirov2017Sample}
K.~Tikhomirov, ``{Sample Covariance Matrices of Heavy-Tailed Distributions},''
  \emph{International Mathematics Research Notices}, vol. 2018, no.~20, pp.
  6254--6289, 04 2017.

\bibitem{Guedon2014Concentration}
O.~Gu{\'e}don, ``Concentration phenomena in high dimensional geometry,'' in
  \emph{ESAIM: Proceedings}, vol.~44.\hskip 1em plus 0.5em minus 0.4em\relax
  EDP Sciences, 2014, pp. 47--60.

\bibitem{Lodhi2018Detection}
M.~A. Lodhi and W.~U. Bajwa, ``Detection theory for union of subspaces,''
  \emph{IEEE Transactions on Signal Processing}, vol.~66, no.~24, pp.
  6347--6362, 2018.

\bibitem{Jiao2019Compressed}
Y.~Jiao, G.~Li, and Y.~Gu, ``Compressed subspace learning based on canonical
  angle preserving property,'' \emph{arXiv preprint arXiv:1907.06166}, 2019.

\bibitem{Georghiades2001Few}
A.~S. Georghiades, P.~N. Belhumeur, and D.~J. Kriegman, ``From few to many:
  Illumination cone models for face recognition under variable lighting and
  pose,'' \emph{IEEE Transactions on Pattern Analysis \& Machine Intelligence},
  no.~6, pp. 643--660, 2001.

\bibitem{Chen2018Hanson}
X.~Chen and Y.~Yang, ``Hanson-wright inequality in hilbert spaces with
  application to $ k $-means clustering for non-euclidean data,'' \emph{arXiv
  preprint arXiv:1810.11180}, 2018.

\bibitem{Baraniuk2008Simple}
R.~Baraniuk, M.~Davenport, R.~DeVore, and M.~Wakin, ``A simple proof of the
  restricted isometry property for random matrices,'' \emph{Constructive
  Approximation}, vol.~28, no.~3, pp. 253--263, 2008.

\bibitem{Rauhut2010Compressive}
H.~Rauhut, ``Compressive sensing and structured random matrices,''
  \emph{Theoretical foundations and numerical methods for sparse recovery},
  vol.~9, pp. 1--92, 2010.

\bibitem{Krahmer2011New}
F.~Krahmer and R.~Ward, ``New and improved johnson--lindenstrauss embeddings
  via the restricted isometry property,'' \emph{SIAM Journal on Mathematical
  Analysis}, vol.~43, no.~3, pp. 1269--1281, 2011.

\bibitem{Mendelson2014Singular}
S.~Mendelson and G.~Paouris, ``On the singular values of random matrices,''
  \emph{Journal of the European Mathematical Society}, vol.~16, no.~4, pp.
  823--834, 2014.

\end{thebibliography}
\vfill\pagebreak

\end{document}